\newcommand{\de}{\mathrm{d}}
\newcommand{\bs}{\bm{\sigma}}
\newcommand{\bet}{\bm{\eta}}
\newcommand{\bmu}{\bm{\mu}}
\newcommand{\bq}{\bm{q}}
\newcommand{\bS}{\bm{\Sigma}}
\newcommand{\bg}{\bm{\gamma}}
\newcommand{\E}{\mathbb{E}}
\newcommand{\cL}{\mathcal{L}}
\newcommand{\cS}{\mathcal{S}}
\newcommand{\N}{\mathbb{N}}
\newcommand{\R}{\mathbb{R}}
\newcommand{\Z}{\mathbb{Z}}
\newcommand{\Tr}{\operatorname{Tr}}
\newcommand{\eps}{\varepsilon}
\DeclareMathOperator{\pP}{\mathbb{P}}
\newcommand{\bx}{\bm{x}}
\newcommand{\by}{\bm{y}}
\newcommand{\bR}{\bm{R}}
\newcommand{\bQ}{\bm{Q}}
\newcommand{\bA}{\bm{A}}
\newcommand{\bO}{\bm{O}}
\newcommand{\bU}{\bm{U}}
\newcommand{\bC}{\bm{C}}
\newcommand{\bz}{\bm{z}}
\newcommand{\bX}{\bm{X}}
\newcommand{\bY}{\bm{Y}}
\newcommand{\bZ}{\bm{Z}}
\newcommand{\bB}{\bm{B}}
\newcommand{\bT}{\bm{T}}
\newcommand{\bxi}{\bm{\xi}}
\newcommand{\diag}{\mathrm{diag}}
\DeclareMathOperator{\supp}{\mathrm{supp}}
\DeclareMathOperator*{\argmax}{arg\,max}
\numberwithin{equation}{section}
\theoremstyle{plain}
\newtheorem{theorem}{Theorem}[section]
\newtheorem{lemma}[theorem]{Lemma}
\newtheorem{corollary}[theorem]{Corollary}
\newtheorem{proposition}[theorem]{Proposition}
\theoremstyle{definition}
\theoremstyle{remark}
\newtheorem{remark}[theorem]{Remark}
\begin{document}

\title{A multiscale cavity method for sublinear-rank symmetric matrix factorization}

\author[1]{Jean Barbier}

\author[2,3]{Justin Ko}

\author[1]{Anas A. Rahman*}

\affil[1]{The Abdus Salam ICTP}

\affil[2]{University of Waterloo}

\affil[3]{ENS Lyon}

\maketitle

\abstract{We consider a statistical model for symmetric matrix factorization with additive Gaussian noise in the high-dimensional regime, where the rank of the signal matrix to infer $M$ scales with its size $N$ as $M=\mathrm{o}(\sqrt{\ln N})$. Allowing for an $N$-dependent rank offers new challenges and requires new methods. Working in the Bayes-optimal setting, we show that whenever the signal has i.i.d. entries, the limiting mutual information between signal and data is given by a variational formula involving a rank-one replica symmetric potential. In other words, from the information-theoretic perspective, the case of a (slowly) growing rank is the same as when $M=1$ (namely, the standard spiked Wigner model). The proof is primarily based on a novel multiscale cavity method allowing for growing rank along with some information-theoretic identities on worst noise for the vector Gaussian channel. We believe that the cavity method developed here will play a role in the analysis of a broader class of inference and spin models where the degrees of freedom are large arrays instead of vectors.}

\section{Introduction}\label{s1}

Reconstructing a low-rank signal matrix observed through a noisy channel with highest possible accuracy is of fundamental importance across a range of disciplines including statistical inference, machine learning, signal processing, and information theory. This task encompasses a variety of models that have attracted considerable attention over the last decade, such as sparse principal component analysis and $\Z_2$ synchronization \cite{sPCA1,sPCA2}, the stochastic block model for community detection \cite{SBM1,SBM2}, and submatrix localization \cite{localization1,localization2}, just to cite a few. A common theme unifying these models is that they are either examples of, or are otherwise closely related to, the general notion of spiked matrix models \cite{johnstone,BBP,peche} of the form ``signal plus noise'' that were introduced at the turn of this century.

In this work, we consider the archetypal \textit{spiked Wigner model}, where one observes a symmetric deformation of a standard Wigner matrix. The data is generated as
\begin{equation} \label{spikedwignermodel}
	\bY = \sqrt{\frac{\lambda}{N}} \bX_0 \bX_0^\intercal+\bZ,
\end{equation}
where $\bX_0\in\R^{N\times M}$ is the signal matrix distributed according to some prior distribution $\pP_{X,N,M}$, $\bZ\in\R^{N\times N}$ is a standard Wigner matrix with $\bZ_{ii}\sim\mathcal{N}(0,2)$ ($1\leq i\leq N)$ and $\bZ_{ij}=\bZ_{ji}\sim\mathcal{N}(0,1)$ ($1\leq i<j\leq N$) independently and identically distributed (i.i.d.), $\lambda\geq0$ is the signal-to-noise ratio (SNR), and the scaling by $\sqrt{N}$ is such that the signal and noise are of comparable magnitudes. From another perspective, we may say that we observe the spike $\bX_0$ corrupted by additive Gaussian noise -- note that results concerning this model extend to a wide range of noisy observation channels due to recently proven universality principles \cite{universality1,universality2,universality3,rank1universality}.

We study the \textit{free entropy} (equivalently, up to a simple additive term, the mutual information $I(\bX_0;\bY)$, see, e.g., \cite{LelargeMiolane})
\begin{align}
	F_N(\lambda) := F_{N,M}(\lambda) = \frac{1}{NM} \E_{\bZ,\bX_0} \ln Z_{N,M} \label{FrenEnt}    
\end{align}
of the model \eqref{spikedwignermodel} in the thermodynamic limit $N,M\to\infty$ with the rank $M=M_N$ growing sufficiently slowly with $N$. Here,
\begin{equation}
Z_{N,M}:=\int_{\R^{N\times M}} e^{H_N(\bX)}\,\de\!\pP_{X,N,M}(\bX) \label{ZNM}
\end{equation}
denotes the \textit{partition function} associated with the \textit{Hamiltonian}
\begin{equation}
H_N(\bX):=\frac{1}{2} \Tr\Big( \sqrt{\frac{ \lambda }{N}}  \bZ \bX \bX^\intercal  + \frac{\lambda}{N}  \bX_0 \bX_0^\intercal \bX \bX^\intercal- \frac{\lambda}{2N}   \bX \bX^\intercal \bX \bX^\intercal  \Big); \label{Hamiltonian}
\end{equation}
in the language of Bayesian inference rather than statistical physics, the partition function is simply the normalization of the posterior distribution $\pP_{X\mid Y,N,M}(\bX\mid\bY(\bX_0,\bZ))$ and the Hamiltonian is the log-likelihood. In order to take advantage of replica symmetry (i.e., concentration of measure) \cite{LelargeMiolane,overlapconcentration,barbier2022strong}, we moreover work in the \textit{Bayes-optimal setting}, which means that in addition to the observed data $\bY$, the statistician charged with reconstructing $\bX_0$ from $\bY$ also has knowledge of the form \eqref{spikedwignermodel} of $\bY$, the prior distribution $\pP_{X,N,M}$ of the signal $\bX_0$, the SNR $\lambda$, and the rank $M$ of $\bX_0\bX_0^\intercal$.

Focusing on the Bayesian viewpoint, various applications of (the adaptive) interpolation methods, the cavity method, and the Approximate Message Passing (AMP) algorithm have been used to study (various generalizations of) the spiked Wigner model. Studies commenced in the rank $M=1$ case \cite{korada2009exact,universality2,SBM1,replicaproof,spikedtensor} before progressing to the low-rank (i.e., finite-rank) case \cite{LelargeMiolane,adaptiveinterpolation}, with recent studies venturing into the regime of growing rank \cite{reeves, sublinear_sphericalint,Maillard_2022,bodin2023gradient}. In particular, an $M$-dimensional variational formula for the limiting free entropy $\lim_{N\to\infty}F_N(\lambda)$ can be surmised from \cite{reeves} when $M=\mathrm{o}(N^{1/20})$. Our present goal is to continue this line of research by proving a scalar equivalent of the aforementioned formula (thereby circumventing the intractability of the $M$-dependence seen therein) with regime of veracity $M=\mathrm{o}(\sqrt{\ln N})$ -- this is a significant milestone towards understanding the recent conjecture based on the replica method of \cite{farzad} that the rank-$M$ spiked Wigner model should behave as its rank-one counterpart, so long as the rank is sublinear, i.e., when $M=\mathrm{o}(N)$.

It has long been known and utilized that there is a close interplay between statistical inference and the theory of spin glasses. In particular, the free entropy of the finite-rank spiked Wigner model is closely related to the free energies of vector spin glass models, with the main technical challenge in studies of the latter being the absence of replica symmetry. Variational formulae for free energies of various vector spin glass models have been proven using synchronization of overlap matrices \cite{PVS,PPotts,JSphere, kocs, GGVec, vecSphereTAP} and Hamilton--Jacobi equations \cite{mourratchen_noncvx,MourratVec,chen2023selfoverlap,chen2023cvx}; the relevant order parameters are increasing matrix paths, which are more challenging to deal with than the matrix-valued order parameters typical of Bayes-optimal inference problems. Said variational formulae have moreover been recently analyzed and simplified in \cite{bates2023parisi,chen2023potts,tucazhou_vecspin} through exploitation of inherent symmetries within the studied models in order to reduce dimensionality of the order parameters. Our main technical contribution in this line of work is a generalization of the \textit{Aizenman--Sims--Starr scheme} \cite{ASS} to multiscale mean field models, which in this context are vector spin glass models with dimension-dependent vector spin coordinates. Using the cavity method instead of, say, the adaptive interpolation method \cite{adaptiveinterpolation,reeves} means that we require only posterior (thermal) concentration of our order parameter, the overlap matrix, rather than full (quenched) concentration. This in turn enables us to employ a scalar perturbation parameter (as opposed to a matrix one) in the proof of concentration, leading to more tractable computations and better rates of convergence. Our secondary contribution is a proof of the fact that our variational formula for the limiting free entropy reduces to that of the rank-one spiked Wigner model, owing to a set of simple information-theoretic inequalities. Said identities are, to the best of our knowledge, not in the literature and we expect Corollary~\ref{Cor1}, say, to be a valuable addition to the literature on worst noises \cite{diggavicover97,diggavicover01,worstnoise}.

The reduction to rank one mentioned above follows from the fact that the entries of the signal $\bX_0$ are i.i.d., which induces exchangeability in the entries of the overlap matrix. This is similar to recent simplifications of vector spin Parisi formulae obtained in \cite{bates2023parisi,chen2023potts} through the utilization of underlying symmetries in the Potts spin glass. Combining these simplifications with our multiscale cavity method has the potential to extend the corresponding finite-rank vector spin models to sublinear-rank multiscale models, which is a natural step towards understanding the challenging extensive-rank regime $M=\Theta(N)$, see \cite{tarmoun_gradientoverparam,Maillard_2022,bodin2023gradient,farzad,DL2,camilli2023matrix,camilli2024decimation}. In this vein, we foresee that the relative tractability of our computations will also enable the theory of asymmetric matrix and tensor factorization \cite{tensorPCA,Miolane17,spikedtensor,adaptiveinterpolation, donoho2023optimal,feldman2023spiked,montanari2024fundamental} to be pushed into the growing rank regime. Indeed, we find some potential extensions of our methods to particular related settings are readily observable: In the case of asymmetric matrix factorization \cite{Miolane17,Miolane19,donoho2023optimal, feldman2023spiked,montanari2024fundamental}, we expect the finite-rank cavity method of \cite{Miolane17} to extend to the growing rank regime via our multiscale formalism. Likewise for the cavity method employed in the study of symmetric tensor factorization \cite{spikedtensor}. The missing ingredient in both of these cases is our result reducing the rank-$M$ model to its rank-one counterpart. In the present setting, this is done by showing equivalence between the suprema of a rank-$M$ potential and its rank-one counterpart, introduced below. In \cite{Miolane17}, one is instead interested in the supremum of the sum of two such potentials coupled through a quadratic term, while in \cite{spikedtensor}, the relevant potential involves Hadamard powers of the matrix argument $\bQ$, so that the eigenvalue-based arguments of Section \ref{s3} cannot be used. Further investigation is needed to ascertain if the ideas of Section \ref{s3} can be adapted to overcome these challenges.

\section{Setting and main results} \label{s2}
\subsection{The sublinear-rank spiked Wigner model and the replica symmetric potential} \label{s2.1}
We study the $N,M\to\infty$ limit of the free entropy $F_N(\lambda)$ \eqref{FrenEnt} of the spiked Wigner model \eqref{spikedwignermodel} with rank $M=\mathrm{o}(\sqrt{\ln N})$ and signal of i.i.d.~entries, i.e., the prior distribution $\pP_{X,N,M}$ on $\bX_0=(\bx_{0,i}\in\R^M)_{i\le N}=(\bx_{0,ij}\in\R)_{i\le N,j\le M}$ fully factorizes:
\begin{equation}
\pP_{X,N,M}(\bX_0) = \prod_{i=1}^N \pP_{X,M}(\bx_{0,i}) = \prod_{i = 1}^N \prod_{j = 1}^M  \pP_{X}(\bx_{0,ij}). \label{iidprior}
\end{equation}
In addition, we take $\pP_X$ to be a centered distribution with $D$\textit{-bounded support}: There exists $0\le D<\infty$ such that $\supp\pP_X\subseteq[-D,D]$.

An application of the (non-rigorous) replica method \cite{spinglass,talagrand} suggests that the large $N$ limit of $F_N(\lambda)$ is given by a variational formula written in terms of the rank-$M$ \textit{replica symmetric potential} $F^{\mathrm{RS}}_M:\cS_M\times \R_{+}\mapsto \R$ defined by (with $\cS_M$ the set of symmetric positive semidefinite $M\times M$ matrices)
\begin{align}
	F^{\mathrm{RS}}_M(\bQ,\lambda)&:=  \frac{1}{M} \E_{\bz,\bx_0} \ln  Z_M^{\mathrm{RS}} - \frac{\lambda}{4M}  \Tr\bQ^2, \label{RSpot}
	\\ Z_M^{\mathrm{RS}}&:= \int_{\R^M} e^{\sqrt{\lambda}  \bx^\intercal \sqrt{\bQ} \bz   + \lambda  \bx_0^\intercal \bQ \bx  - \frac{\lambda}{2}  \bx^\intercal \bQ \bx}\,  \de\!\pP_{X,M}(\bx), \nonumber   
\end{align}
where $\bz \in \R^{M}$ is a standard Gaussian vector and $\bx_0\sim \pP_{X,M}$ \cite{spikedtensor, adaptiveinterpolation}. However, we find that it is instead given by a formula in terms of the simpler rank-one analog
\begin{align}
	F_1^{\mathrm{RS}}(q,\lambda)&:=   \E_{z,x_0} \ln Z_1^{\mathrm{RS}}- \frac{\lambda}{4}  q^2, \label{F1RSpot} \\
	Z_1^{\mathrm{RS}}&:=\int_{-\infty}^{\infty} e^{ \sqrt{\lambda q} z x + \lambda  qx_0 x - \frac{1}{2} \lambda q x^2 } \,  \de\!\pP_{X}(x), \nonumber
\end{align}
with $z\sim\mathcal{N}(0,1)$ and $x_0\sim \pP_{X}$. Thus, we have as our main result that, under our hypotheses above in addition to two further technical hypotheses concerning vector and scalar analogs of our channel \eqref{spikedwignermodel}, the $N\to\infty$ limit of the free entropy $F_N(\lambda)$ is given by the rank-one replica symmetric potential $F_1^{\mathrm{RS}}(q,\lambda)$ at its supremum over $q$.

\begin{theorem}[Rank-one replica formula for the growing-rank spiked Wigner model] \label{thrm1}
Assume the following hypotheses:
\begin{enumerate}[label=\upshape(H\arabic*),ref=(H\arabic*)]
\item\label{H1} The rank $M=M_N\in\N$ is such that $M=\mathrm{o}(\sqrt{\ln N})$ and $M_{N+1}-M_N \le 1$ for all $N\in\N$.
\item\label{H2} The prior distribution $\pP_{X,N,M}$ on $\bX_0$ fully factorizes as in equation \eqref{iidprior}.
\item\label{H3} The distribution $\pP_X$ is centered with $D$-bounded support.
\item\label{H4} The distribution $\pP_X$ is such that for $\pP_{X,M}=\pP_X^{\otimes M}$ and all constant $m\le M$,
\begin{equation} \label{phiM}
\phi_m(\lambda):=\sup_{\bQ\in\cS_m}F_m^{\mathrm{RS}}(\bQ,\lambda)
\end{equation}
is real analytic in $\lambda$ on $[0,\infty)$ except for possibly one critical point (phase transition) $0<\lambda_c<\infty$, independent of $m$.
\item\label{H5} Letting $x_0\sim\pP_X$ and $z\sim\mathcal{N}(0,1)$, the minimum mean-square error (MMSE) of the scalar Gaussian channel $y=\sqrt{\lambda}x_0+z$,
\begin{equation} \label{scalarMMSE}
\mathrm{mmse}(\lambda):=\E_{z,x_0}[(x_0-\E[x_0\mid y])^2],
\end{equation}
is such that $\lambda^2\mathrm{mmse}(\lambda)$ is monotone on $[\lambda_L'',\infty)$ for some $\lambda_L''>0$.
\end{enumerate}

Setting $\rho:=\E_{\pP_X}X^2$, the limiting free entropy \eqref{FrenEnt} of the spiked Wigner model~\eqref{spikedwignermodel} is then given in terms of the replica symmetric potential \eqref{F1RSpot} by
\begin{equation} \label{mainresult}
\lim_{N\to\infty}F_N(\lambda)=\lim_{N\to\infty}F_{N,M_N}(\lambda)=\sup_{q\in[0,\rho]}F_1^{\mathrm{RS}}(q,\lambda).
\end{equation}

As a consequence, we have the following formula for the limiting minimum mean-square error of the spiked Wigner model:
\begin{align}
 \lim_{N\to\infty}\mathrm{MMSE}_{N,M_N}(\lambda)  \!&\;= \rho^2-q^*(\lambda)^2,\quad 0\le\lambda\ne\lambda_c,\nonumber
 \\\mathrm{MMSE}_{N,M}(\lambda)&:=\frac{1}{N^2M}\E_{\bZ,\bX_0}\lVert\bX_0 \bX_0^\intercal-\E[\bX_0 \bX_0^\intercal \mid \bY]\rVert_{\mathrm{F}}^2, \label{MMSEspikedwigner}
\end{align}
where $q^*(\lambda):=\argmax_{q\in[0,\rho]}F_1^{\mathrm{RS}}(q,\lambda)$ is the (unique \cite[Prop.~17]{LelargeMiolane}) maximizer of $F_1^{\mathrm{RS}}(q,\lambda)$ and $\lVert\cdot\rVert_{\mathrm{F}}$ denotes the Frobenius norm.
\end{theorem}
The statement on the minimum mean-square error follows from the I-MMSE formula \cite{guoshamaiverdu} and the arguments of \cite{LelargeMiolane}.

We comment on our hypotheses:
\begin{enumerate}
\item We require $M=\mathrm{o}(N^{1/4})$ to achieve thermal concentration of the overlap matrix
\begin{equation} \label{overlap}
\bR_{10}:=\frac{1}{N}\bX^\intercal\bX_0=\frac{1}{N}\sum_{i=1}^N\bx_i\bx_{0,i}^\intercal
\end{equation}
with respect to the Gibbs average corresponding to a particular perturbation of the Hamiltonian $H_N(\bX)$ \eqref{Hamiltonian} and $M=\mathrm{o}(\sqrt{\ln N})$ for the error terms arising from the cavity computation to vanish in the large $N$ limit. We believe that this growth rate for the rank is not fundamental and is rather a limitation of the proof methods. Indeed, we conjecture that Theorem \ref{thrm1} holds for $M=\mathrm{o}(N)$, in line with \cite{sublinear_sphericalint}. The requirement that $M_{N+1}-M_N\le 1$ is in keeping with the natural notion that the sequence of matrices $\bX_0$ has its dimensions $M,N$ growing by increments of $1$, with $N$ growing steadily.

\item Symmetry properties of $\pP_{X,N,M}$ owing to its factorized nature encourage the thermal concentration of $\bR_{10}$ onto a scalar multiple of the identity, which in turn induces an equivalence $\sup_{q\in[0,\rho]}F_M^{\mathrm{RS}}(q I_M,\lambda)\equiv\sup_{q\in[0,\rho]}F_1^{\mathrm{RS}}(q,\lambda)$. The factorization of the prior is thus key to reducing the dimensionality of the variational formula \eqref{mainresult}.

\item Requiring $\pP_X$ to have $D$-bounded support is not very restrictive and is only a technical constraint; prior distributions $\pP_X$ with unbounded support can be treated by truncating them to have $D$-bounded support, working in the finite $D$ regime as in this paper, and then taking $D\to\infty$, see, e.g., \cite{LelargeMiolane,adaptiveinterpolation}.

\item It is believed that for sufficiently slowly growing rank, $\lim_{N\to\infty}F_{N}(\lambda)$ exhibits a single rank-independent phase transition for many natural choices of prior $\pP_X$ (uniform, Rademacher, Gaussian, simple mixtures, etc.), see \cite{Miolane19} and references therein; for rank proportional to $N$, one still observes a unique phase transition, but it has a rank-dependent location \cite{BCKO2025}. We assume there is at most one rank-independent phase transition and translate this assumption on the model into one for $\phi_m(\lambda)$ ($m\le M$). It is difficult to prove this hypothesis in general and it may not hold for certain (exotic) priors. Nonetheless, a convincing check for its validity is to numerically evaluate $\phi_2(\lambda)$, which is easy, and look for a unique first or second order phase transition (a discontinuity in its first or second order derivative, which are the standard transitions in high-dimensional inference problems \cite{zdeborova2016statistical}). The same behavior is then expected for any fixed $m$ -- additional phase transitions can only appear in a large system limit. This is sufficient for our argument (in particular, Theorem~\ref{thrm2}). To improve clarity, we will often refer to hypothesis~\ref{H4} when it suffices to take a weaker hypothesis, such as when we require real analyticity of $\phi_m(\lambda)$ only at low or high SNR, or at exactly $m=M$.

\item As scalar channels are far more tractable than higher dimensional ones, it is relatively easy to check monotonicity of $\lambda^2\mathrm{mmse}(\lambda)$ in the high SNR regime. In particular, hypothesis \ref{H5} holds when $\mathrm{mmse}(\lambda)$ is real analytic on $[\lambda_L'',\infty)$ since $\mathrm{mmse}(\lambda)=\mathrm{o}(1/\lambda)$ as $\lambda\to\infty$ \cite{wuverdu}. This includes exponential, finite-alphabet, and Gaussian distributed signals \cite{analyticMMSE}. 
\end{enumerate}

Theorem \ref{thrm1} manifests equivalence with its rank-one analog, rigorously analyzed in \cite{replicaproof,LelargeMiolane,adaptiveinterpolation}, due to the $M$-independence of the right-hand side of equation~\eqref{mainresult}. Similar results concerning sublinear-rank models behaving as their finite-rank counterparts exist for mesoscopic spiked perturbations of random matrices \cite{huang2018mesoscopic} and for spherical integrals \cite{sublinear_sphericalint}. Our results complement these in the context of Bayesian inference. If hypothesis \ref{H4} is weakened, then Theorem \ref{thrm1} holds for all SNR $\lambda$ greater (smaller) than the largest (smallest) non-analytic value of $\lambda$. This follows from straightforward changes to the proof of Theorem \ref{thrm2} given at the end of Section \ref{s3}.

The proof of Theorem \ref{thrm1} comprises a lower bound on $F_N(\lambda)$ (valid for $M,N\in\N$), hence $\liminf_{N\to\infty}F_N(\lambda)$, obtained via a standard application of Guerra's interpolation method \cite{Guerra}, and a matching upper bound on $\limsup_{N\to\infty} F_N(\lambda)$, obtained through a multiscale application of the cavity method prescribed by the Aizenman--Sims--Starr scheme \cite{ASS}. The novelty of the argument lies in two key ingredients: Firstly, we show in Section~\ref{s3}, and summarize in \S\ref{s2.2}, how the supremum of the rank-$M$ replica symmetric potential $\phi_M(\lambda)$ \eqref{phiM} reduces to its rank-one analog $\phi_1(\lambda)$ -- this is used to obtain agreement between the aforementioned upper and lower bounds and moreover improves the tractability of the right-hand side of equation \eqref{mainresult}. Secondly, we develop in Section~\ref{s5} a cavity method that is able to treat sequences $F_{N,M}(\lambda)$ indexed by \emph{two} growing dimensions instead of one, as is usually the case. In addition to the rank-one reduction of Section~\ref{s3}, said cavity computations also use the lower bound on $F_N(\lambda)$ that we obtain via Guerra's interpolation method and thermal concentration of the overlap matrix $\bR_{10}$. We present these two prerequisite results, which are proven through relatively standard methods, in Section \ref{s4} -- the contents of sections~\ref{s4} and~\ref{s5} are summarized in \S\ref{s2.3}. We emphasize that all results but one in this paper hold for $M$ growing at certain polynomial rates in $N$.

For convenience of later analysis, it is helpful to re-express the replica symmetric potential in terms of the mutual information of the $M$-dimensional vector Gaussian channel (see Appendix \ref{appA}):
\begin{equation}
	F^{\mathrm{RS}}_M(\bQ,\lambda)=-\frac1M I(\bx_0;\sqrt{\lambda\bQ}\bx_0+\bz ) - \frac{\lambda}{4M}  \lVert\bQ-\rho I_M\rVert_{\mathrm{F}}^2 +\frac{\lambda\rho^2 }{4},\label{RSpotMI}
\end{equation}
where $\bz \sim \mathcal{N}(0,I_M)$ is standard Gaussian noise and $\bx_0\sim \pP_{X,M}$.

\begin{remark}[Properties of the replica symmetric potential]\label{rmk1}
The squared Frobenius norm $\lVert\bQ-\rho I_M\rVert^2_{\mathrm{F}}$ is a convex paraboloid with minimum at $\bQ=\rho I_M$, while the mutual information $I(\bx_0;\sqrt{\lambda\bQ}\bx_0+\bz)$ is a concave, non-decreasing monotone function of $\bQ$ \cite{guoshamaiverdu,reeves2018}: for $\bQ_1,\bQ_2\in\mathcal{S}_M$, $0\le t\le 1$, and $\bz,\bz',\bz''\sim\mathcal{N}(0,I_M)$ independent,
\begin{multline*}
tI(\bx_0;\sqrt{\lambda\bQ_1}\bx_0+\bz)+(1-t)I(\bx_0;\sqrt{\lambda\bQ_2}\bx_0+\bz')
\\\le I(\bx_0;\sqrt{\lambda(t\bQ_1+(1-t)\bQ_2)}\bx_0+\bz'').
\end{multline*}
The problem of maximizing $F_M^{\mathrm{RS}}(\bQ,\lambda)$ over $\bQ$ therefore translates to one of comparing these two competing terms. For each $m\in\N$ and $\bQ\in\cS_m$, $F_m^{\mathrm{RS}}(\bQ,\lambda)-\lambda\rho^2/4$ is a non-increasing, monotone, continuous, convex function of $\lambda$. Thus, the same holds true for $\phi_m(\lambda)-\lambda\rho^2/4=\sup_{\bQ\in\cS_m}F_m^{\mathrm{RS}}(\bQ,\lambda)-\lambda\rho^2/4$.
\end{remark}

\subsection{Equivalence of the suprema of the rank-\texorpdfstring{$M$}{M} and rank-one replica symmetric potentials} \label{s2.2}

We link the supremum of the rank-$M$ replica symmetric potential $F_M^{\mathrm{RS}}(\bQ,\lambda)$ to its rank-one equivalent through a series of intermediary results. The first two are of particular interest due to their being a statement on the worst possible additive Gaussian noise in a vector channel with i.i.d.~inputs.

\begin{lemma}[Off-diagonal entries of the noise covariance bolster information] \label{Lemma1}
Assume hypothesis \ref{H2} and let $\bx_0\sim\pP_{X,M}$, $x_0\sim\pP_X$, $\bz\sim\mathcal{N}(0,I_M)$, and $z\sim\mathcal{N}(0,1)$. Then, for $\bS\in\cS_M$, the mutual information between $\bx_0$ and the output of the vector channel with additive Gaussian noise $\bS^{1/2}\bz$ of covariance $\bS$ satisfies
\begin{equation}
I(\bx_0;\bx_0+\bS^{1/2}\bz)\ge \sum_{i=1}^M I(x_0;x_0+\sqrt{\bS_{ii}}z).\label{Lem1_1}
\end{equation}
This inequality is an equality if $\bS$ is diagonal.
\end{lemma}
We prove this lemma in \S\ref{s3.1} and use it in \S\ref{s3.3}, with $\bS=(\lambda\bQ)^{-1}$, to give a partial proof of Theorem \ref{thrm2} below. Indeed, due to convexity properties of the squared Frobenius norm in equation \eqref{RSpotMI}, Lemma \ref{Lemma1} only supplies a proof of Theorem \ref{thrm2} when the eigenvalues of $\bQ$ are restricted to be sufficiently large. However, we are able to utilize a convexity result of \cite{wibisono2018convexity} on signals drawn from distributions with $D$-bounded supports in order to formulate a parallel proof for when $\bQ$ has small eigenvalues and obtain the following corollary of Lemma \ref{Lemma1}.

\begin{corollary}[Worst Gaussian noise with covariance of fixed trace] \label{Cor1}
Let $\bS\in\cS_M$ be such that $\bS_{ii}\ge D^2$ for each $1\le i\le M$ and let $\sigma:=\Tr\bS/M$ be its normalized trace. Then, in the setting of Lemma \ref{Lemma1} with $\pP_X$ having $D$-bounded support, we have that
    \begin{equation}
		I(\bx_0;\bx_0+\bS^{1/2}\bz)\ge MI(x_0;x_0+\sqrt{\sigma}z).\label{Cor1_1}
	\end{equation}
 This inequality is an equality if $\bS=\sigma I_M$.
\end{corollary}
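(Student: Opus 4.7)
My plan is to first invoke Lemma~\ref{Lemma1} to reduce the matrix bound to a scalar convexity statement, and then to deduce the latter from Jensen's inequality after establishing convexity of $f(s) := I(x_0; x_0 + \sqrt{s}z)$ on $[D^2, \infty)$. Indeed, Lemma~\ref{Lemma1} immediately gives
\begin{equation*}
I(\bx_0; \bx_0 + \bS^{1/2}\bz) \ge \sum_{i=1}^M f(\bS_{ii}),
\end{equation*}
and since $\sigma = \frac{1}{M}\sum_i \bS_{ii}$, the target \eqref{Cor1_1} reduces to the Jensen-type inequality $\frac{1}{M}\sum_i f(\bS_{ii}) \ge f(\sigma)$, which follows as soon as $f$ is known to be convex on the relevant range.

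To prove convexity I would switch to the SNR parameter $\gamma = 1/s$ and invoke the I-MMSE identity \cite{guoshamaiverdu}, which gives $\frac{d}{d\gamma} I(x_0; \sqrt{\gamma}x_0 + z) = \tfrac{1}{2} m(\gamma)$, where $m(\gamma) := \E[\Var(x_0 \mid \sqrt{\gamma}x_0 + z)]$ is the scalar MMSE. A direct differentiation in $s$ then yields
\begin{equation*}
f''(s) = \gamma^3 m(\gamma) + \tfrac{1}{2}\gamma^4 m'(\gamma), \qquad \gamma = 1/s,
\end{equation*}
so that $f''(s) \ge 0$ is equivalent to $|m'(\gamma)| \le 2\gamma^{-1} m(\gamma)$ (using $m' \le 0$, which follows from concavity of mutual information in SNR).

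The crucial estimate, which also explains the appearance of the hypothesis $\bS_{ii} \ge D^2$, is then $|m'(\gamma)| \le D^2 m(\gamma)$. This should come from combining the Wu--Verdu-style identity $m'(\gamma) = -\E[\Var(x_0 \mid \sqrt{\gamma}x_0 + z)^2]$ (obtained via Gaussian integration by parts on the posterior) with the pointwise Popoviciu bound $\Var(x_0 \mid Y) \le D^2$, valid because $|x_0|\le D$ almost surely under $\pP_X$. For $s \ge D^2$, i.e. $\gamma \le D^{-2}$, one then has $D^2 \le 2\gamma^{-1}$ and thus $|m'(\gamma)| \le D^2 m(\gamma) \le 2\gamma^{-1} m(\gamma)$, which yields $f''(s) \ge 0$ on $[D^2,\infty)$.

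Finally, the equality claim when $\bS = \sigma I_M$ follows immediately, since Lemma~\ref{Lemma1} is tight for diagonal covariance and Jensen is tight when its arguments all coincide. The step I expect to require the most care is the MMSE derivative identity $m'(\gamma) = -\E[\Var(x_0 \mid \sqrt{\gamma}x_0+z)^2]$ — one has to justify an exchange of differentiation and expectation, as well as track the correct numerical constant — but once this is in hand, all remaining steps are short scalar computations.
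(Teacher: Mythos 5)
Your proof is correct and follows the same route as the paper: Lemma~\ref{Lemma1} reduces the claim to the scalar bound $\sum_i f(\bS_{ii})\ge Mf(\sigma)$ with $f(s)=I(x_0;x_0+\sqrt{s}z)$, which is then Jensen's inequality once $f$ is convex on $[D^2,\infty)$. The only difference is that the paper simply cites \cite{wibisono2018convexity} for this convexity, whereas you re-derive it: your chain-rule computation $f''(s)=\gamma^3 m(\gamma)+\tfrac12\gamma^4 m'(\gamma)$ with $\gamma=1/s$ is correct, the MMSE-derivative identity $m'(\gamma)=-\E[\Var(x_0\mid\sqrt{\gamma}x_0+z)^2]$ is the known second-order I-MMSE relation (the constant checks out against the Gaussian case), and combining it with the Popoviciu bound $\Var(x_0\mid Y)\le D^2$ gives $|m'(\gamma)|\le D^2 m(\gamma)\le 2\gamma^{-1}m(\gamma)$ for $s\ge D^2$ (in fact already for $s\ge D^2/2$), hence $f''\ge 0$ on the required range. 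The equality case and the fact that the averaging point $\sigma$ itself lies in the convexity region are handled correctly, so the argument is complete and self-contained where the paper's is not.
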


The partial proofs of Theorem \ref{thrm2} emerging from Lemma \ref{Lemma1} and Corollary \ref{Cor1} are interestingly dual to each other in the sense that the first proof focuses on the diagonal entries of $\bS$, while the second focuses instead on its eigenvalues -- note that these parameters coincide when $\bS$ is diagonal. This shift in focus means that the roles of the mutual information and the squared Frobenius norm in the expression \eqref{RSpotMI} are reversed in the two partial proofs. Importantly, said partial proofs correspond respectively to regimes of high and low SNR due to the following properties, proven in \S\ref{s3.2}, of maximizers $\bQ^*(\lambda)$ of $F_M^{\mathrm{RS}}(\bQ,\lambda)$ linking the regime of $\bQ$ having large (small) eigenvalues to that of high (low) SNR.
\begin{proposition}[Properties of $\bQ^*(\lambda)$] \label{prop1}
Assume hypotheses \ref{H2} and \ref{H4}. Let $\pP_X$ be centered with bounded second moment $\rho=\E_{\pP_X}X^2$ and let $\bQ^*(\lambda)\in\cS_M$ be such that
\begin{equation*}
F_M^{\mathrm{RS}}(\bQ^*(\lambda),\lambda)=\phi_M(\lambda).
\end{equation*}
Then, we have that $\bQ^*(\lambda)$ and its eigenvalues $q_1^*,\ldots,q_M^*$ have the following properties:
\begin{enumerate}
\item $0\le q_1^*,\ldots,q_M^*\le\rho$ for all $\lambda\ge0$.
\item $\lVert\bQ^*(\lambda)\rVert_{\mathrm{F}}$ is continuous for all $\lambda\ge0$ except possibly at $\lambda=\lambda_c$.
\item Given $\rho_S'\in(0,\rho)$, there exists $0<\lambda_S'<\lambda_c$ such that $0\le q_1^*,\ldots,q_M^*<\rho_S'$ for all $0\le\lambda<\lambda_S'$.
\item Given $\rho_L'\in(0,\rho)$, there exists $\lambda_L'>\lambda_c$ such that $\rho_L'<q_1^*,\ldots,q_M^*\le\rho$ for all $\lambda>\lambda_L'$.
\end{enumerate}
\end{proposition}

Thus, in \S\ref{s3.3}, we first prove the following theorem in the high and low SNR regimes before finally using analytic continuation in $\lambda$ of $\phi_m(\lambda)$ ($m\le M$), assumed to be real analytic for all $\lambda\ne\lambda_c$ by hypothesis, to extend the regime of veracity to the entirety of the positive real line.
\begin{theorem}[Rank-one reduction] \label{thrm2}
Assume hypotheses \ref{H2}--\ref{H5}. We then have for all SNR $\lambda\ge0$ that
\begin{equation*}
\sup_{\bQ\in\cS_M}F_M^{\mathrm{RS}}(\bQ,\lambda)=\sup_{q\in[0,\rho]}F_1^{\mathrm{RS}}(q,\lambda).
\end{equation*}
\end{theorem}

\subsection{Bounding the limiting free entropy via the interpolation and multiscale cavity methods} \label{s2.3}

The proof of Theorem \ref{thrm1} is a combination of the following two results (see \S\ref{s5.4}).
\begin{proposition}[Free entropy lower bound] \label{prop2}
Under hypotheses \ref{H2}--\ref{H5}, we have for any $M,N\ge1$ that
\begin{equation*}
F_N(\lambda)\ge\sup_{q\in[0,\rho]}F_1^{\mathrm{RS}}(q,\lambda).
\end{equation*}
\end{proposition}
\begin{proposition}[Free entropy upper bound] \label{prop3}
Under the assumptions of Theorem \ref{thrm1}, we have that
\begin{equation} \label{FrenUpperBound}
\limsup_{N\to\infty}F_N(\lambda)\le\sup_{q\in[0,\rho]}F_1^{\mathrm{RS}}(q,\lambda).
\end{equation}
\end{proposition}

Proposition \ref{prop2} is proven in \S\ref{s4.1} using a standard application (apart from the use of Theorem~\ref{thrm2}) of Guerra's interpolation method \cite{Guerra,universality2,LelargeMiolane}, while Proposition \ref{prop3} requires a non-trivial adaptation of the cavity method. In the standard cavity computation for finite rank-models \cite{LelargeMiolane}, one begins by Ces\`aro summing a telescoping series to obtain the upper bound
\begin{align}
\limsup_{N\to\infty}\widetilde{F}_N(\lambda)&=\limsup_{N\to\infty}\frac{1}{NM}\sum_{n=0}^{N-1}\left(\E_{\bZ,\bX_0,\eps}\ln \widetilde{Z}_{n+1,M}-\E_{\bZ,\bX_0,\eps}\ln \widetilde{Z}_{n,M}\right) \nonumber
\\&\le\limsup_{N\to\infty}\frac{1}{M}\left(\E_{\bZ,\bX_0,\eps}\ln \widetilde{Z}_{N+1,M}-\E_{\bZ,\bX_0,\eps}\ln \widetilde{Z}_{N,M}\right), \label{FtildeCesaro}
\end{align}
where $\widetilde{F}_N(\lambda)$ and $\widetilde{Z}_{N,M}$ are perturbations of the free entropy and partition function defined by equations \eqref{FrenEnt} and \eqref{ZNM}, but with $H_N(\bX)$ replaced by a \textit{perturbed Hamiltonian} 
\begin{equation} \label{Htilde}
\widetilde{H}_{N,\eps}(\bX):=H_N(\bX)+H_{N,\eps}(\bX)
\end{equation}
and the expectation now containing an average over the perturbation parameter $\eps$ and any additional randomness due to the \textit{perturbation Hamiltonian} $H_{N,\eps}(\bX)$. This latter Hamiltonian is carefully chosen to induce concentration of the overlap matrix $\bR_{10}$~\eqref{overlap} in the small $\eps$ regime -- we show in \S\ref{s4.3} that for $H_{N,\eps}(\bX)$ defined by equation~\eqref{Hpert} forthcoming, $\limsup_{N\to\infty}\widetilde{F}_N(\lambda)$ serves as an upper bound for the left-hand side of inequality \eqref{FrenUpperBound}, so bounding it by the right-hand side of said inequality completes the proof of Proposition~\ref{prop3}. In the present setting, the rank $M=M_N$ grows as a function of $N$, so inequality~\eqref{FtildeCesaro} becomes
\begin{equation*}
\limsup_{N\to\infty}\widetilde{F}_N(\lambda)\le\limsup_{N\to\infty}\frac{1}{M_N}\left(\E_{\bZ,\bX_0,\eps}\ln \widetilde{Z}_{N+1,M_{N+1}}-\E_{\bZ,\bX_0,\eps}\ln \widetilde{Z}_{N,M_N}\right),
\end{equation*}
which is no longer amenable to the standard Aizenman--Sims--Starr approach \cite{ASS,LelargeMiolane}: there are two growing, interacting scales that need to be considered jointly. This challenge can be overcome by putting the $N$ and $M$ parameters on equal footing and splitting the $n$- and $m_n$-dependencies within our telescoping sum (now letting $0\le m_n\le M_N$ be a rank parameter scaling with $0\le n\le N$ as $M_N$ does with $N$) across two separate sums. Thus, we proceed by defining
\begin{align*}
\widetilde{\Delta}_N(n)&:=\E_{\bZ,\bX_0,\eps}\ln\widetilde{Z}_{n+1,m_{n+1}}-\E_{\bZ,\bX_0,\eps}\ln\widetilde{Z}_{n,m_{n+1}},
\\ \widetilde{\Delta}_M(n)&:=\E_{\bZ,\bX_0,\eps}\ln\widetilde{Z}_{n,m_{n+1}}-\E_{\bZ,\bX_0,\eps}\ln\widetilde{Z}_{n,m_n}
\end{align*}
and observing that
\begin{equation} \label{eq19}
\widetilde{F}_N(\lambda)=\frac{1}{NM_N}\left(\sum_{n=0}^{N-1}\widetilde{\Delta}_N(n)+\sum_{n=0}^{N-1}\widetilde{\Delta}_M(n)\right);
\end{equation}
the upshot here is that each of the summands in the above is a finite difference of two terms that differ only in one index. Simplifying the right-hand side of equation \eqref{eq19}, as detailed in \S\ref{s5.1}, then leads to the following identity bounding the limit supremum of the right-hand side of equation \eqref{eq19} by a tractable convex combination.

\begin{theorem}[Multiscale Aizenman--Sims--Starr scheme] \label{thrm3}
Letting $M=M_N\in\N$ be an increasing sequence, there exist some $0\le\alpha,\beta\le1$ such that
 \begin{equation} \label{DeltaN+M}
 \limsup_{N\to\infty}\widetilde F_N(\lambda)\leq \alpha\limsup_{N\to\infty}\frac{\Delta_N}{M}+(1-\alpha)\limsup_{N\to\infty}\frac{\Delta_M}{N}
 \end{equation}
 and
  \begin{equation} \label{DeltaN+M_inf}
 \liminf_{N\to\infty}\widetilde F_N(\lambda)\geq\beta\liminf_{N\to\infty}\frac{\Delta_N}{M}+(1-\beta)\liminf_{N\to\infty}\frac{\Delta_M}{N},
 \end{equation}
 where
 \begin{align}
 \Delta_N&:=\E_{\bZ,\bX_0,\eps}\ln\widetilde{Z}_{N+1,M_{N+1}}-\E_{\bZ,\bX_0,\eps}\ln\widetilde{Z}_{N,M_{N+1}}, \label{DeltaN}
 \\ \Delta_M&:=\E_{\bZ,\bX_0,\eps}\ln\widetilde{Z}_{N,M_{N+1}}-\E_{\bZ,\bX_0,\eps}\ln\widetilde{Z}_{N,M_N}. \label{DeltaM}
 \end{align}
Furthermore, if $M=\lfloor f(N)\rfloor$ for some differentiable, strictly increasing function (e.g., a monotonic cubic interpolation of the strictly increasing subsequence of the $M_N$ such that $M_N=M_{N-1}+1$) $f:\R_+\to\R_+$, then we have
\begin{equation} \label{eqalpha}
\liminf_{N\to\infty}\frac{f(N - 1)}{f(N)+Nf'(N)} \leq \beta \leq \alpha \leq \limsup_{N\to\infty}\frac{f(N)}{f(N)+Nf'(N)}.
\end{equation}
\end{theorem} 
\begin{remark} \label{rmk2}
1.~For $\gamma\in(0,1]$, the bounds of Theorem \ref{thrm3} hold with $\alpha=\beta=(1+\gamma)^{-1}$ for sublinear polynomial growth $f(N)=N^\gamma$, while if $f(N)=(\ln N)^\gamma$, we instead have $\alpha=\beta=1$. Taking $\gamma\to0$ recovers the usual Aizenman--Sims--Starr scheme.
\\2.~Our interest is primarily in the bound \eqref{DeltaN+M}, but inequality \eqref{DeltaN+M_inf} shows that these results are sharp in the sense that if the limits on the right-hand sides exist and $\alpha=\beta$, then we have the equality
\begin{equation*}
\lim_{N\to\infty}\widetilde{F}_N(\lambda)=\alpha\lim_{N\to\infty}\frac{\Delta_N}{M}+(1-\alpha)\lim_{N\to\infty}\frac{\Delta_M}{N}.
\end{equation*}
3.~In contrast to the Bayes-optimal setting, in the usual mixed $p$-spin models (see for example \cite{PBook}) or mismatched inference problems \cite{rank1universality}, the upper bound is typically proved by interpolation and the lower bound is proved using the cavity method. In particular, inequality \eqref{DeltaN+M_inf} will be the relevant bound for the cavity computations when studying growing rank versions of these models.
\end{remark}

We have stated here the multiscale Aizenman--Sims--Starr scheme in terms of the (perturbed) free entropy for convenience, but it actually holds for \textit{any} generic sequence with two growing indices. Indeed, it is a general fact about multiscale sequences and uses no properties of the log partition function, as can be seen from the proof presented in \S\ref{s5.1}. The multiscale cavity method is a powerful tool because the limit of a two-indices sequence with nonlinear asymptotic growth rate can be computed by finding the limits of differences where one of the coordinates is \emph{fixed}, instead of having to deal with increments in both coordinates simultaneously. In other words, the $\Delta_N$ and $\Delta_M$ can be computed independently of one another to isolate the effects of adding one spin or one rank coordinate. In statistical physics parlance, this reduces the problem to those of computing a cavity in the spins for fixed rank dimension and a cavity in the rank for fixed spin dimension in the thermodynamic limit. The $\Delta_{N}$ is the standard cavity studied in classical models, while the $\Delta_M$ term only appears when examining multiscale models.

With Theorem \ref{thrm3} in hand, it remains to bound the limit suprema in the right-hand side of inequality~\eqref{DeltaN+M} and compute their convex combination. In \S\ref{s5.2}, we confirm via the standard fixed-rank cavity computations therein that $\limsup_{N\to\infty}\Delta_N/M$ is bounded above by the right-hand side of inequality~\eqref{FrenUpperBound}. Remarkably, thanks to the rank-one reduction described in Theorem \ref{thrm2}, it turns out that the novel $\Delta_M$ cavity term also obeys the same bound, hence so too does the right-hand side of inequality~\eqref{DeltaN+M}, as desired -- we expect this observation to extend to the full sublinear-rank regime $M=\mathrm{o}(N)$. This drastic simplification is however not expected to hold in the regime of extensive rank $M=\Theta(N)$. In that setting, Theorem~\ref{thrm2} can no longer be used, so the right-hand sides of propositions~\ref{prop2} and~\ref{prop3} become more complicated and cease to match. Thus, more refined tools are needed to derive extensive-rank counterparts of said propositions.

In the present sublinear-rank setting, the rank-one equivalence of Theorem \ref{thrm2} and a bootstrapping argument suggests to write the heuristic
\begin{align}
\frac{\Delta_M}{N}&=\frac{1}{N}\E_{\bZ,\bX_0,\eps}\ln\widetilde{Z}_{N,M_{N+1}}-\frac{1}{N}\E_{\bZ,\bX_0,\eps}\ln\widetilde{Z}_{N,M_N} \nonumber
\\&\approx(M_N+1)\sup_{q\in[0,\rho]}F_1^{\mathrm{RS}}(q,\lambda)-M_N\sup_{q\in[0,\rho]}F_1^{\mathrm{RS}}(q,\lambda)=\sup_{q\in[0,\rho]}F_1^{\mathrm{RS}}(q,\lambda) \label{DeltaMapprox}
\end{align}
for $N$ such that $M_{N+1}=M_N+1$. In \S\ref{s5.3}, we present some manipulations involving Proposition \ref{prop2} and the fixed-rank $\Delta_N$ cavity bound to prove that the intuition behind approximation~\eqref{DeltaMapprox} is the correct one. As mentioned above, said $\Delta_N$ cavity bound is obtained in \S\ref{s5.2} using standard fixed-rank arguments \cite{ASS}, \cite[\S4.3 \& \S6.2.5]{LelargeMiolane}. The idea, then, is to compare the rank $M$, perturbed, spiked Wigner model of size $N+1$ to that of size $N$, where we now fix our choice of perturbation to be
\begin{equation} \label{Hpert}
H_{N,\eps}(\bX):=\Tr\left(\sqrt{\eps}\widetilde{\bZ}\bX^\intercal+\eps\bX_0\bX^\intercal-\frac{\eps}{2}\bX\bX^\intercal\right),\quad\eps\in[s_N,2s_N],
\end{equation}
with $\widetilde{\bZ}\in\R^{N\times M}$ a standard Ginibre matrix consisting of i.i.d.~standard Gaussian entries independent of everything else and $(s_N)_{N\ge1}$ a sequence of numbers in $(0,1)$ that decrease slowly to zero as $N\to\infty$. For the size $N+1$ model, denoting the $(N+1)\textsuperscript{th}$ row of the signal by $\bet_0\sim\pP_{X,M}=\pP_X^{\otimes M}$ and that of the noise by $(\bxi,\xi)$ with $\bxi\sim\mathcal{N}(0,I_N)$ and $\xi\sim\mathcal{N}(0,1)$ (the first $N$ rows being $\bX_0$ and $(\bZ,\bxi^\intercal)$, respectively) shows that
\begin{align}
\widetilde{H}_{N+1,\eps_{N+1}}(\bX,\bet)\!&\;=H_N'(\bX)+H_{N,\eps_{N+1}}(\bX) \nonumber
\\&\qquad+\bet z(\bX)+\bet s(\bX)\bet^\intercal+m(\bet)+r_{\eps_{N+1}}(\bet), \label{HtildeN+1}
\\ H_N'(\bX)&:=\frac{1}{2}\Tr\Bigg( \sqrt{\frac{ \lambda }{N+1}}  \bZ \bX \bX^\intercal  + \frac{\lambda}{N+1}  \bX_0 \bX_0^\intercal \bX \bX^\intercal\nonumber
\\&\hspace{4.1cm} -\frac{\lambda}{2(N+1)}   \bX \bX^\intercal \bX \bX^\intercal  \Bigg),\nonumber
\\ z(\bX)&:=\sqrt{\frac{\lambda}{N+1}}\bX^\intercal\bxi^\intercal+\frac{\lambda}{N+1}\bX^\intercal\bX_0\bet_0^\intercal,\nonumber
\\ s(\bX)&:=-\frac{\lambda}{2(N+1)}\bX^\intercal\bX,\nonumber
\end{align}
\begin{align}
m(\bet)&:=\frac{1}{2}\left(\sqrt{\frac{\lambda}{N+1}}\xi+\frac{\lambda}{N+1}\bet_0\bet_0^\intercal-\frac{\lambda}{2(N+1)}\bet\bet^\intercal\right)\bet\bet^\intercal,\nonumber
\\ r_{\eps_{N+1}}(\bet)&:=\sqrt{\eps_{N+1}}\widetilde{\bxi}\bet^\intercal+\eps_{N+1}\bet_0\bet^\intercal-\frac{\eps_{N+1}}{2}\bet\bet^\intercal,\nonumber
\end{align}
where we have made explicit the $N$-dependence of $\eps=\eps_N$ and denoted the $(N+1)\textsuperscript{th}$ row of the perturbation noise in $H_{N+1,\eps_{N+1}}(\bX,\bet)$ by $\widetilde{\bxi}$. The point of decomposing $\widetilde{H}_{N+1,\eps_{N+1}}(\bX,\eta)$ into the form \eqref{HtildeN+1} is that one may similarly observe for
\begin{equation*}
\widehat{y}(\bX):=\frac{1}{2}\Tr\Bigg(\frac{\sqrt{\lambda}}{N}\widehat{\bZ}\bX\bX^\intercal+\frac{\lambda}{N^2}\bX_0\bX_0^\intercal\bX\bX^\intercal-\frac{\lambda}{2N^2}\bX\bX^\intercal\bX\bX^\intercal\Bigg),
\end{equation*}
with $\widehat{\bZ}$ an independent copy of $\bZ$, that $\widetilde{H}_{N,\eps_N}(\bX)$ is statistically equivalent (up to some lower order terms) to 
\begin{equation*}
H_N'(\bX)+H_{N,\eps_N}(\bX)+\widehat{y}(\bX)
\end{equation*}
in the sense that the associated partition functions are equal in distribution. Thus, the difference $\Delta_N$ \eqref{DeltaN} can be expressed in terms of averages of exponentials of the cavity fields $z(\bX),s(\bX),m(\bet),r_{\eps_{N+1}}(\bet),\widehat{y}(\bX)$, and $H_{N,\eps_{N+1}}(\bX)-H_{N,\eps_N}(\bX)$ with respect to the Gibbs (posterior) average (writing $\eps_N$ as $\eps$)
\begin{align}
\langle\,\cdot\,\rangle_{N,\eps}'&:=\frac{1}{Z_{N,\eps}'}\int_{\R^{N\times M}}\,\cdot\,e^{H_N'(\bX)+H_{N,\eps}(\bX)}\,\de\!\pP_{X,N,M}(\bX), \label{Gibbspert}
\\ Z_{N,\eps}'&:=\int_{\R^{N\times M}}e^{H_N'(\bX)+H_{N,\eps}(\bX)}\,\de\!\pP_{X,N,M}(\bX). \label{ZNeps'}
\end{align}
In \S\ref{s5.2}, we exploit the relative tractability of our cavity fields and the fact that the noise within them is independent of the noise in $H_N'(\bX)+H_{N,\eps_N}(\bX)$ to control certain rates of convergence and ultimately deduce that
\begin{equation*}
\limsup_{N\to\infty}\frac{\Delta_N}{M_N}\le\sup_{q\in[0,\rho]}F_1^{\mathrm{RS}}(q,\lambda), 
\end{equation*}
as desired. We emphasize here that although the techniques used in \S\ref{s5.2} are already in the literature, our contribution lies in our tracking the $M$-dependence of rates of convergence that arise throughout the computations. These become consequential in the thermodynamic limit when working in the sublinear-rank regime.

The final key ingredient for carrying out the cavity computation is the thermal concentration of the overlap matrix $\bR_{10}$ \eqref{overlap}. This result allows us to replace $\bR_{10}$ by the better behaved quantity $\langle\bR_{10}\rangle_{N,\eps}'$ throughout the computations of \S\ref{s5.2}. While it is sufficient for our purposes to prove thermal concentration of the overlap matrix with respect to the perturbed Hamiltonian $H_N'(\bX)+H_{N,\eps}(\bX)$, we are able to do so for a larger class of Hamiltonians and so believe our result to be a valuable tool for future studies of Bayes-optimal inference problems with factorized priors and growing rank.

\begin{theorem}[Thermal concentration of the overlap matrix] \label{thrm4}
Let $\pP_X$ be a centered distribution with bounded fourth moment, assume hypothesis \ref{H2}, let $H_N''(\bX)$ denote the log-likelihood of a fully factorized channel
\begin{align}
\bY''&\sim\pP_{\mathrm{out}}(\,\cdot\mid\bX_0\bX_0^\intercal), \nonumber
\\ \pP_{\mathrm{out}}(\bY''\mid\bX_0\bX_0^\intercal)&=\prod_{i=1}^N\pP_{\mathrm{out}}^{(i=j)}(\bY_{ii}''\mid\bx_{0,i}^\intercal\bx_{0,i})\prod_{1\le i<j\le N}\pP_{\mathrm{out}}^{(i<j)}(\bY_{ij}''\mid\bx_{0,i}^\intercal\bx_{0,j}), \label{factorizedchannel}
\end{align}
with general conditional probability distributions $\pP_{\mathrm{out}}^{(i=j)},\pP_{\mathrm{out}}^{(i<j)}$ sufficiently integrable, and define $\langle\,\cdot\,\rangle_{N,\eps}''$ to be the Gibbs average corresponding to the perturbed Hamiltonian
\begin{equation} \label{Htildegeneral}
\widetilde{H}_{N,\eps}''(\bX):=H_N''(\bX)+H_{N,\eps}(\bX),
\end{equation}
a la equation~\eqref{Gibbspert}. Then, there exists a finite positive constant $C$ independent of $M,N$ and depending only on properties of $\pP_X$ such that
\begin{equation*}
\frac{1}{s_N}\int_{s_N}^{2s_N}\E\langle\lVert\bR_{10}-\langle\bR_{10}\rangle_{N,\eps}''\rVert_{\mathrm{F}}^2\rangle_{N,\eps}''\,\de\eps\le\Gamma(N,M):=\frac{CM^2}{\sqrt{Ns_N}},
\end{equation*}
where the expectation $\E[\,\cdot\,]$ is taken over $\widetilde{\bZ},\bX_0$ and the randomness in $\pP_{\mathrm{out}}(\,\cdot\mid\bX_0\bX_0^\intercal)$.
\end{theorem}

The proof of this theorem, given in \S\ref{s4.2}, relies on the presence of the perturbation Hamiltonian $H_{N,\eps}(\bX)$ \eqref{Hpert} in the definition of $\langle\,\cdot\,\rangle_{N,\eps}''$. This perturbation corresponds to side information coming from the matrix Gaussian channel $\sqrt{\eps}\bX_0+\widetilde{\bZ}$ which serves to temper singularities of $\E\langle\lVert\bR_{10}-\langle\bR_{10}\rangle_{N,\eps}''\rVert^2_{\mathrm{F}}\rangle_{N,\eps}''$ that otherwise prevent concentration. The necessity of this side information and the novelty of the proof of Theorem~\ref{thrm4} is discussed in further detail in \S\ref{s4.2}. A tool used throughout said proof and indeed this entire paper is the \emph{Nishimori identity} or the tower rule for conditional expectation (see, e.g., \cite{LelargeMiolane}), which in our setting states that for continuous, bounded functions $g$,
\begin{equation*}
\E_{\bZ,\bX_0}\langle g(\bY,\bX_1,\ldots,\bX_k)\rangle=\E_{\bZ,\bX_0}\langle g(\bY,\bX_0,\bX_2,\ldots,\bX_k)\rangle,\quad k\in\N,
\end{equation*}
where $\bX_1,\ldots,\bX_k$ are i.i.d.~samples from the posterior $\pP_{X\mid Y,N,M}(\bX\mid\bY(\bX_0,\bZ))$ and $\langle\,\cdot\,\rangle$ denotes the Gibbs average with respect to the product conditional distribution $\prod_{i=1}^k\pP_{X\mid Y,N,M}(\bX_i\mid\bY(\bX_0,\bZ))$. Henceforth, when dealing with multiple conditionally independent samples from a Bayes-posterior distribution, we will write $\bX=\bX_1$ for the first sample and the Gibbs average will refer to an average over all samples.

\section{Relating the rank-\texorpdfstring{$M$}{M} variational formula to its rank-one analog} \label{s3}

In this section, we focus entirely on the replica symmetric potential $F_M^{\mathrm{RS}}(\bQ,\lambda)$ \eqref{RSpot}, so there is no $N$ parameter nor large $N$ limit to consider. Hence, $M$ is a fixed, finite parameter throughout this section and, consequently, we have access to the literature on finite-$M$ analysis. In particular, we are able to benefit from characterizations of maximizers $\bQ^*(\lambda)$ of $F_M^{\mathrm{RS}}(\bQ,\lambda)$ given in \cite{LelargeMiolane}. It turns out that the growing rank considered in this paper only comes into play when we link the free entropy $F_N(\lambda)$ \eqref{FrenEnt} of the sublinear-rank spiked Wigner model \eqref{spikedwignermodel} to the replica symmetric potential. The upshot here is that since the supremum of the rank-$M$ replica symmetric potential is equivalent to its rank-one analog, many complications arising from the growing rank can eventually be circumvented.

\subsection{Information-theoretic inequalities on worst Gaussian noise} \label{s3.1}

We now present the proofs of Lemma \ref{Lemma1} and Corollary \ref{Cor1} given in \S\ref{s2.2}. The first follows from the definitions and standard properties of mutual information and conditional (differential) entropy (see, e.g., \cite{coverthomas}), the chain rule for mutual information, the fact that the signal $\bx_0$ and noise $\bz$ have i.i.d.~entries, and from the rotational invariance of standard Gaussian vectors. The corollary follows from a convexity property \cite{wibisono2018convexity} of the mutual information whose use is enabled by our prior having $D$-bounded support.

\begin{proof}[\textbf{Proof of Lemma \ref{Lemma1}}]
We recall that we take $\pP_{X,M}$ to have factorized form $\pP_X^{\otimes M}$, $\bx_0\sim\pP_{X,M}$, $x_0\sim\pP_X$, $\bz\sim\mathcal{N}(0,I_M)$, $z\sim\mathcal{N}(0,1)$, and $\bS\in\cS_M$. Applying the chain rule for mutual information twice shows that
\begin{align*}
I(\bx_0;\bx_0+\bS^{1/2}\bz)&=\sum_{i=1}^MI(\bx_{0,i};\bx_0+\bS^{1/2}\bz\mid\{\bx_{0,k}\}_{k<i})
\\&=\sum_{i,j=1}^MI(\bx_{0,i};\bx_{0,j}+(\bS^{1/2}\bz)_j\mid\{\bx_{0,k}\}_{k<i},\{\bx_{0,k'}+(\bS^{1/2}\bz)_{k'}\}_{k'<j}).
\end{align*}
Thus, by the non-negativity of mutual information, we can discard all terms for which $i\ne j$ to see that
\begin{align}
I(\bx_0;\bx_0+\bS^{1/2}\bz)&\ge\sum_{i=1}^MI(\bx_{0,i};\bx_{0,i}+(\bS^{1/2}\bz)_i\mid\{\bx_{0,k}\bx_{0,k}+(\bS^{1/2}\bz)_{k}\}_{k<i}) \nonumber
\\&=\sum_{i=1}^MI(\bx_{0,i};\bx_{0,i}+(\bS^{1/2}\bz)_i\mid\{(\bS^{1/2}\bz)_k\}_{k<i}). \label{MIchainrule}
\end{align}
Here, the second line follows from the fact that knowledge of $\bx_{0,k}$ and $\bx_{0,k}+(\bS^{1/2}\bz)_k$ is equivalent to that of $\bx_{0,k}$ and $(\bS^{1/2}\bz)_k$, but also that the mutual information between $\bx_{0,i}$ and $\bx_{0,i}+(\bS^{1/2}\bz)_i$ is independent of $\{\bx_{0,k}\}_{k<i}$ due to $\bx_0$ having i.i.d.~entries.

Now, notice that for a generic random variable $X$ independent of $Z_1,Z_2$, but with $Z_1$ and $Z_2$ possibly depending on each other in an arbitrary way,
\begin{align*}
I(X;X+Z_1\mid Z_2)&=H(X\mid Z_2)-H(X\mid X+Z_1,Z_2)
\\ &\ge H(X)-H(X\mid X+Z_1)
\\ &=I(X;X+Z_1).
\end{align*}
Here, $H(X)$ denotes the (differential) entropy of $X$ and we have used the facts that $H(X\mid Z_2)=H(X)$ and $H(X\mid X+Z_1,Z_2)\le H(X\mid X+Z_1)$. Applying this inequality with $X=\bx_{0,i}$, $Z_1=(\bS^{1/2}\bz)_i$, and $Z_2=\{(\bS^{1/2}\bz)_k\}_{k<i}$ to the summand in the right-hand side of inequality \eqref{MIchainrule} then shows that
\begin{equation} \label{MIchainrulefinal}
I(\bx_0;\bx_0+\bS^{1/2}\bz)\ge\sum_{i=1}^MI(\bx_{0,i};\bx_{0,i}+(\bS^{1/2}\bz)_i).
\end{equation}

Since $\bS\in\cS_M$, we may diagonalize it as $\bS=\bO\bs\bO^\intercal$ with $\bs=\diag(\sigma_1,\ldots,\sigma_M)$ being the diagonal matrix of eigenvalues of $\bS$ and $\bO$ being the corresponding real orthogonal matrix of eigenvectors. Then, due to properties of standard Gaussian vectors, we have in law that
\begin{equation*}
(\bS^{1/2}\bz)_i=(\bO\bs^{1/2}\bO^\intercal\bz)_i\overset{d}{=}(\bO\bs^{1/2}\bz)_i\overset{d}{=}\Big(\sum_{j=1}^M\bO_{ij}^2\sigma_j\Big)^{1/2}z=\sqrt{\bS_{ii}}z.
\end{equation*}
Observing also that $\bx_{0,i}\overset{d}{=}x_0$, inequality \eqref{MIchainrulefinal} reduces to the sought form~\eqref{Lem1_1}.

As $\bz,\bx_0$ have i.i.d.~entries, it is immediate from properties of mutual information that inequality \eqref{Lem1_1} turns into an equality upon setting $\bS=\diag(\bS_{11},\ldots,\bS_{MM})$.
\end{proof}

\begin{proof}[\textbf{Proof of Corollary \ref{Cor1}}]
We remain in the setting of the above proof, but further recall that $\pP_X$ has $D$-bounded support and that $\bS\in\cS_M$ is such that $\bS_{ii}\ge D^2$ for each $1\le i \le M$. Then, setting $t=\bS_{ii}$ in \cite[Thrm.~2]{wibisono2018convexity} shows that $I(x_0;x_0+\sqrt{\bS_{ii}}z)$ is a convex function of $\bS_{ii}$ in the $\bS_{ii}\ge D^2$ regime. Thus, applying Jensen's inequality to the right-hand side of inequality \eqref{Lem1_1} yields
\begin{align*}
I(\bx_0;\bx_0+\bS^{1/2}\bz)&\ge \sum_{i=1}^MI(x_0;x_0+\sqrt{\bS_{ii}}z)
\\&\ge MI\left(x_0;x_0+\Big(\frac{1}{M}\sum_{i=1}^M\bS_{ii}\Big)^{1/2}z\right),
\end{align*}
which is seen to be equivalent to inequality \eqref{Cor1_1} upon recalling that $\sigma$ denotes the normalized trace $\Tr\bS/M$.

It is again immediate that inequality \eqref{Cor1_1} is an equality when $\bS=\sigma I_M$.
\end{proof}

\subsection{Properties of maximizers of the replica symmetric potential} \label{s3.2}

We now give a series of lemmas that amount to Proposition \ref{prop1}. We recall that this proposition lists properties of maximizers $\bQ^*(\lambda)$ of $F_M^{\mathrm{RS}}(\bQ,\lambda)$ \eqref{RSpot} that will be useful in the following subsection. To begin, it is essential to observe that we are allowed to restrict our search for maximizers to solutions of a certain criticality condition; note that although it is necessary for $\bQ^*(\lambda)$ to satisfy  the following condition, said condition is not sufficient for completely characterizing $\bQ^*(\lambda)$.
\begin{lemma}[A criticality condition] \label{Lemma2}
Diagonalize $\bQ\in\cS_M$ as $\bQ=\bO\bq\bO^\intercal$ with $\bq=\diag(q_1,\ldots,q_M)$ being the diagonal matrix of eigenvalues of $\bQ$ and $\bO$ being the corresponding real orthogonal matrix of eigenvectors. Then, any maximizer $\bQ$ of $F_M^{\mathrm{RS}}(\bQ,\lambda)$ must be such that the following system of equations, solved in particular by all critical points of $F_M^{\mathrm{RS}}(\bQ,\lambda)$, is satisfied:
\begin{equation} \label{fixedpoint}
\sqrt{q_i}\left(\bO_i^\intercal\E_{\bz,\bx_0}\langle\bx\bx_0^\intercal\rangle_{\mathrm{RS}}\bO_i-q_i\right)=0,\quad1\le i\le M,
\end{equation}
where $\bO_i$ is the $i\textsuperscript{th}$ column of $\bO$ and we define the Gibbs average associated with the replica symmetric potential \eqref{RSpot} by
\begin{equation} \label{RSgibbs}
\langle\,\cdot\,\rangle_{\mathrm{RS}}:=\frac{1}{Z_M^{\mathrm{RS}}}\int\,\cdot\,e^{\sqrt{\lambda}  \bx^\intercal \sqrt{\bQ} \bz   + \lambda  \bx_0^\intercal \bQ \bx  - \frac{\lambda}{2}  \bx^\intercal \bQ \bx}\,  d \pP_{X,M}(\bx).
\end{equation}
In particular, if $q_1,\ldots,q_M>0$, this system simplifies to the fixed-point equation
\begin{equation*}
\bQ=\E_{\bz,\bx_0}\langle\bx\bx_0^\intercal\rangle_{\mathrm{RS}}.
\end{equation*}
\end{lemma}
\begin{proof}
Let $\bg=(\gamma_1,\ldots,\gamma_M)$ and interpret $F_M^{\mathrm{RS}}(\bQ,\lambda)$ as a function of $\sqrt{\bq}$ at fixed eigenvectors. Then, defining $\bQ_{\bg}$ such that $\sqrt{\bQ_{\bg}}=\bO(\sqrt{\bq}+\diag\,\bg)\bO^\intercal$, we have that the $\bg$-directional $\sqrt{\bq}$-derivative of the replica symmetric potential is given by
\begin{align}
\nabla_{\bg}F_M^{\mathrm{RS}}(\bQ,\lambda)&=\sum_{i=1}^M\gamma_i\lim_{\bg\to\mathbf{0}}\frac{\partial}{\partial\gamma_i}F_M^{\mathrm{RS}}(\bQ_{\bg},\lambda) \nonumber
\\&=\frac{1}{M}\sum_{i=1}^M\gamma_i\E_{\bz,\bx_0}\langle2\lambda\sqrt{q_i}\bx_0^\intercal\bO_i\bO_i^\intercal\bx-\lambda\sqrt{q_i}\bx^\intercal\bO_i\bO_i^\intercal\bx\rangle_{\mathrm{RS}} \nonumber
\\&\quad+\frac{1}{M}\sum_{i=1}^M\gamma_i\left(\E_{\bz,\bx_0}\langle\sqrt{\lambda}\bx^\intercal\bO_i\bO_i^\intercal\bz\rangle_{\mathrm{RS}}-\lambda q_i^{3/2}\right). \label{gradFM_pre}
\end{align}
Gaussian integration by parts shows that
\begin{align*}
\E_{\bz,\bx_0}\langle\sqrt{\lambda}\bx^\intercal\bO_i\bO_i^\intercal\bz\rangle_{\mathrm{RS}}&=\sum_{j=1}^M\E_{\bz,\bx_0}\frac{\partial}{\partial\bz_j}\langle\sqrt{\lambda}(\bO_i\bO_i^\intercal\bx)_j\rangle_{\mathrm{RS}}
\\&=\E_{\bz,\bx_0}\left(\langle\lambda\bx^\intercal\bO_i\bO_i^\intercal\sqrt{\bQ}\bx\rangle_{\mathrm{RS}}-\langle\lambda\bx^\intercal\bO_i\bO_i^\intercal\sqrt{\bQ}\rangle_{\mathrm{RS}}\langle\bx\rangle_{\mathrm{RS}}\right)
\\&=\E_{\bz,\bx_0}\langle\lambda\bx^\intercal\bO_i\bO_i^\intercal\sqrt{\bQ}\bx-\lambda\bx^\intercal\bO_i\bO_i^\intercal\sqrt{\bQ}\bx_2\rangle_{\mathrm{RS}}
\\&=\E_{\bz,\bx_0}\langle\lambda\bx^\intercal\bO_i\bO_i^\intercal\sqrt{\bQ}\bx-\lambda\bx^\intercal\bO_i\bO_i^\intercal\sqrt{\bQ}\bx_0\rangle_{\mathrm{RS}}
\\&=\E_{\bz,\bx_0}\langle\lambda\sqrt{q_i}\bx^\intercal\bO_i\bO_i^\intercal\bx-\lambda\sqrt{q_i}\bx^\intercal\bO_i\bO_i^\intercal\bx_0\rangle_{\mathrm{RS}},
\end{align*}
where in the third line we have written $\bx_2$ for a sample, conditionally independent of $\bx$, from the posterior distribution of the Gibbs average \eqref{RSgibbs} and redefined said Gibbs average to be over both samples, while in the fourth line we have used the Nishimori identity to replace $\bx_2$ with $\bx_0$; the fifth line follows from recalling that $\bO_i$ is an eigenvector of $\sqrt{\bQ}$ with eigenvalue $\sqrt{q_i}$. Subsituting this into the right-hand side of equation \eqref{gradFM_pre} then shows that
\begin{align}
\nabla_{\bg}F_M^{\mathrm{RS}}(\bQ,\lambda)&=\frac{1}{M}\sum_{i=1}^M\gamma_i\left(\E_{\bz,\bx_0}\langle \lambda\sqrt{q_i}\bx_0^\intercal\bO_i\bO_i^\intercal\bx\rangle_{\mathrm{RS}}-\lambda q_i^{3/2}\right) \nonumber
\\&=\frac{\lambda}{M}\sum_{i=1}^M\gamma_i\sqrt{q_i}\left(\bO_i^\intercal\E_{\bz,\bx_0}\langle\bx\bx_0^\intercal\rangle_{\mathrm{RS}}\bO_i-q_i\right), \label{gradFM}
\end{align}
where we have used the fact that $\bx_0^\intercal\bO_i\bO_i^\intercal\bx=\Tr\bx_0^\intercal\bO_i\bO_i^\intercal\bx$ along with the cyclic property of the trace. Equation \eqref{fixedpoint} follows from noting that all critical points $\bQ$ of $F_M^{\mathrm{RS}}(\bQ,\lambda)$ must be such that $\nabla_{\bg}F_M^{\mathrm{RS}}(\bQ,\lambda)=0$ for all choices of $\bg$, including those for which all but one $\gamma_i$ are zero.

To discard the possibility of maximizers on the boundary of $\cS_M$ that do not satisfy the system of equations \eqref{fixedpoint}, note that being a boundary point means that at least one of the $q_i$, say $q_1$, must vanish or tend to infinity. In the latter case, one may check from the form \eqref{RSpotMI} of the replica symmetric potential that $\lim_{q_1\to\infty}F_M^{\mathrm{RS}}(\bQ,\lambda)=-\infty$. In the former case, we assume for the sake of contradiction that there is at least one $q_i\ne 0$ with $i>1$, say $q_2$, that does not satisfy equation \eqref{fixedpoint}. Then, we note from equation \eqref{gradFM} that $\nabla_{(1,1,0,\ldots,0)}F_M^{\mathrm{RS}}(\bQ,\lambda)$ and $\nabla_{(1,-1,0\ldots,0)}F_M^{\mathrm{RS}}(\bQ,\lambda)$ are non-zero and disagree in sign and, therefore, such a $\bQ$ is at best a saddle point and cannot be a maximizer. Thus, for $\bQ$ to be a maximizer of $F_M^{\mathrm{RS}}(\bQ,\lambda)$ with some eigenvalues vanishing, the other eigenvalues must satisfy equation \eqref{fixedpoint}.
\end{proof}

With this criticality condition in hand, we may now readily address the properties of the maximizers $\bQ^*(\lambda)$ listed in Proposition \ref{prop1}.
\begin{lemma}[Eigenvalues of $\bQ^*(\lambda)$ are bounded] \label{Lemma3}
Assume hypothesis \ref{H2}, let $\pP_X$ have bounded second moment $\rho=\E_{\pP_X}X^2$, and let $q_1^*,\ldots,q_M^*$ be the eigenvalues of a maximizer $\bQ^*(\lambda)$ of $F_M^{\mathrm{RS}}(\bQ,\lambda)$. Then, for all $\lambda\ge0$,
\begin{equation*}
0\le q_1^*,\ldots,q_M^*\le\rho.
\end{equation*}
\end{lemma}
\begin{proof}
By Lemma \ref{Lemma2}, each eigenvalue $q_i$ ($1\le i\le M$) of a maximizer $\bQ$ of $F_M^{\mathrm{RS}}(\bQ,\lambda)$ must either be zero or equal $\bO_i^\intercal\E_{\bz,\bx_0}\langle\bx\bx_0^\intercal\rangle_{\mathrm{RS}}\bO_i$, where $\bO_i$ is the eigenvector of $\bQ$ corresponding to $q_i$. In the latter case, we see that
\begin{align*}
q_i&=\bO_i^\intercal\E_{\bz,\bx_0}\langle\bx\bx_0^\intercal\rangle_{\mathrm{RS}}\bO_i
\\&=\E_{\bz,\bx_0}\langle\bO_i^\intercal\bx(\bO_i^\intercal\bx_0)\rangle_{\mathrm{RS}}
\\&=\E_{\bz,\bx_0}\langle\bO_i^\intercal\bx\rangle_{\mathrm{RS}}^2
\\&\le \E_{\bz,\bx_0}\langle(\bO_i^\intercal\bx)^2\rangle_{\mathrm{RS}}
\\&=\E_{\bz,\bx_0}(\bO_i^\intercal\bx_0)^2
\\&=\bO_i^\intercal\E_{\bz,\bx_0}[\bx_0\bx_0^\intercal]\bO_i
\\&=\rho.
\end{align*}
Here, the second and sixth lines follow from the fact that $\bx_0^\intercal\bO_i=\bO_i^\intercal\bx_0$, the third and fifth lines are due to the Nishimori identity, the fourth line follows from applying Jensen's inequality, and the final line follows from noting that since $\bx_0$ has i.i.d.~entries of second moment $\rho$, we must have that $\E_{\bz,\bx_0}[\bx_0\bx_0^\intercal]=\rho I_M$. Since $\bQ$ is positive semidefinite, its eigenvalues are bounded below by zero, so we are done.
\end{proof}

\begin{lemma}[Continuity of $\lVert\bQ^*(\lambda)\rVert_{\mathrm{F}}$] \label{Lemma4}
Assume hypothesis \ref{H4}. Then, any maximizer $\bQ^*(\lambda)$ of $F_M^{\mathrm{RS}}(\bQ,\lambda)$ is such that $\lVert\bQ^*(\lambda)\rVert_{\mathrm{F}}$ is continuous on $[0,\infty)\setminus\{\lambda_c\}$.
\end{lemma}
\begin{proof}
It is immediate that the proof of the rank-one result \cite[Prop.~17]{LelargeMiolane} extends to the rank-$M$ case to yield
\begin{equation*}
\lVert\bQ^*(\lambda)\rVert_{\mathrm{F}}^2=4\phi_M'(\lambda),
\end{equation*}
where we recall that $\phi_M(\lambda)=\sup_{\bQ\in\cS_M}F_M^{\mathrm{RS}}(\bQ,\lambda)$. By hypothesis, $\phi_M(\lambda)$ is real analytic on $[0,\infty)\setminus\{\lambda_c\}$, so $\phi_M'(\lambda)$ and $\lVert\bQ^*(\lambda)\rVert_{\mathrm{F}}=\sqrt{4\phi_M'(\lambda)}$ are likewise.
\end{proof}

\begin{lemma}[Eigenvalues of $\bQ^*(\lambda)$ at low SNR] \label{Lemma5}
Assume hypotheses \ref{H2} and \ref{H4}, fix $\rho_S'\in(0,\rho)$, let $\pP_X$ be centered, and let $q_1^*,\ldots,q_M^*$ be the eigenvalues of a maximizer $\bQ^*(\lambda)$ of $F_M^{\mathrm{RS}}(\bQ,\lambda)$. Then, there exists $0<\lambda_S'<\lambda_c$ such that whenever $0\le\lambda<\lambda_S'$,
\begin{equation*}
0\le q_1^*,\ldots,q_M^*<\rho_S'.
\end{equation*}
\end{lemma}
\begin{proof}
At $\lambda=0$, there is no coupling between $\bx$ and $\bx_0$ in the definition \eqref{RSgibbs} of $\langle\,\cdot\,\rangle_{\mathrm{RS}}$, so one sees that
\begin{equation*}
\E_{\bz,\bx_0}\langle\bx\bx_0^\intercal\rangle_{\mathrm{RS}}=\langle\bx\rangle_{\mathrm{RS}}\Big\vert_{\lambda=0\,}\E_{\bx_0}\bx_0^\intercal=\mathbf{0},
\end{equation*}
where we have used the fact that $\pP_X$ is centered. Thus, by the criticality condition of Lemma \ref{Lemma2}, we have that $\bQ^*(0)$, hence $\lVert\bQ^*(0)\rVert_{\mathrm{F}}$, must be zero. Then, the continuity of $\lVert\bQ^*(\lambda)\rVert_{\mathrm{F}}$ due to Lemma \ref{Lemma4} may be used to establish existence of $0<\lambda_S'<\lambda_c$ such that $0\le\lVert\bQ^*(\lambda)\rVert_{\mathrm{F}}<\rho_S'$ for all $0\le\lambda<\lambda_S'$. In terms of eigenvalues, this shows that
\begin{equation*}
0\le(q_1^*)^2+\cdots+(q_M^*)^2<(\rho_S')^2,
\end{equation*}
so the largest value an eigenvalue can attain occurs when the remaining eigenvalues are zero and this value is bounded above by $\rho_S'$.
\end{proof}

\begin{lemma}[Eigenvalues of $\bQ^*(\lambda)$ at high SNR] \label{Lemma6}
Let $\pP_{X,M}$ be as in Proposition \ref{prop1}, fix $\rho_L'\in(0,\rho)$, and let $q_1^*,\ldots,q_M^*$ be the eigenvalues of a maximizer $\bQ^*(\lambda)$ of $F_M^{\mathrm{RS}}(\bQ,\lambda)$. Then, there exists $\lambda_L'>\lambda_c$ such that whenever $\lambda>\lambda_L'$,
\begin{equation*}
\rho_L'< q_1^*,\ldots,q_M^*\le\rho.
\end{equation*}
\end{lemma}
\begin{proof}
We read from \cite[Prop.~43]{LelargeMiolane} that for all $\lambda\ne\lambda_c$, the limiting minimum mean-square error \eqref{MMSEspikedwigner} of the spiked Wigner model of fixed rank $M$ is given by
\begin{equation*}
\lim_{N\to\infty}\mathrm{MMSE}_{N,M}(\lambda)=M\rho^2-\lVert\bQ^*(\lambda)\rVert_{\mathrm{F}}^2.
\end{equation*}
It is well known that the left-hand side of this equation vanishes in the $\lambda\to\infty$ limit, so we have that $\lim_{\lambda\to\infty}\lVert\bQ^*(\lambda)\rVert_{\mathrm{F}}^2=M\rho^2$. Then, the continuity of $\lVert\bQ^*(\lambda)\rVert_{\mathrm{F}}$ due to Lemma \ref{Lemma4} says that there must exist $\lambda_L'>\lambda_c$ such that $\lVert\bQ^*(\lambda)\rVert_{\mathrm{F}}^2>(M-1)\rho^2+(\rho_L')^2$ whenever $\lambda>\lambda_L'$. In terms of eigenvalues, this translates to
\begin{equation*}
(q_1^*)^2+\cdots+(q_M^*)^2>(M-1)\rho^2+(\rho_L')^2.
\end{equation*}
From Lemma \ref{Lemma3}, we see that the largest value an eigenvalue can take is $\rho$. When $M-1$ of the eigenvalues take this value, the above inequality shows that the remaining eigenvalue must still be bounded below by $\rho_L'$.
\end{proof}

Combining lemmas \ref{Lemma3}--\ref{Lemma6} produces Proposition \ref{prop1}, which enables us to constrain the eigenvalues $q_1^*,\ldots,q_M^*$ of suitable maximizers $\bQ^*(\lambda)$ of $F_M^{\mathrm{RS}}(\bQ,\lambda)$ depending on whether we are working in the high or low SNR regime. Thus, we may now proceed to the proof of Theorem \ref{thrm2}.

\subsection{Rank-one reduction for high, low, and all SNR} \label{s3.3}
Since $\bQ\in\cS_M$, so too is $(\lambda\bQ)^{-1}$, which we denote as $\bS$ from here on out. Rewriting equation \eqref{RSpotMI} in terms of $\bS$ shows that
\begin{equation} \label{FMRSSigma}
F_M^{\mathrm{RS}}(\bQ,\lambda)=-\frac{1}{M}I(\bx_0;\bx_0+\bS^{1/2}\bz)-\frac{\lambda}{4M}\lVert\bS^{-1}/\lambda-\rho I_M\rVert_{\mathrm{F}}^2+\frac{\lambda\rho^2}{4}.
\end{equation}
Maximizing this expression over $\bQ\in\cS_M$ is equivalent to the problem of minimizing
\begin{equation} \label{FtildeMRS}
\widetilde{F}_M^{\mathrm{RS}}(\bS,\lambda):=I(\bx_0;\bx_0+\bS^{1/2}\bz)+\frac{\lambda}{4}\lVert\bS^{-1}/\lambda-\rho I_M\rVert_{\mathrm{F}}^2
\end{equation}
over $\bS\in\cS_M$. It is immediate that Lemma \ref{Lemma1} and Corollary \ref{Cor1} induce decoupling in this expression when $\bS$ is diagonal or, more specifically, has the form $\sigma I_M$. Our goal now is to show that this is indeed the form of $\bS$ for which $F_M^{\mathrm{RS}}(\bQ,\lambda)$ is maximized. We first do so in the low and high SNR regimes separately. The separation of our task into these two regimes corresponds, via Proposition \ref{prop1}, to the tasks of maximizing over $\bQ$ with eigenvalues in, respectively, $[0,\rho']$ or $[\rho',\rho]$, where we now set $\rho':=2\rho/3$. This choice of $\rho'$ is of crucial interest because the squared Frobenius norm in expression~\eqref{FtildeMRS} has first and second order partial derivatives with respect to the eigenvalues $\sigma_i$ ($1\le i\le M$) of $\bS$ given by
\begin{align}
\frac{\partial}{\partial\sigma_i}\lVert\bS^{-1}/\lambda-\rho I_M\rVert_{\mathrm{F}}^2&=\frac{2(\lambda\rho\sigma_i-1)}{\lambda^2\sigma_i^3}, \label{gradFrobenius}
\\ \frac{\partial^2}{\partial\sigma_i^2}\lVert\bS^{-1}/\lambda-\rho I_M\rVert_{\mathrm{F}}^2&=\frac{2(3-2\lambda\rho\sigma_i)}{\lambda^2\sigma_i^4}, \label{LaplacianFrobenius}
\end{align}
so one can see that the second order partial derivatives vanish exactly when the eigenvalues $q_i=(\lambda\sigma_i)^{-1}$ of $\bQ$ equal $\rho'=2\rho/3$. Thus, the gradient $\nabla\lVert\bS^{-1}/\lambda-\rho I_M\rVert_{\mathrm{F}}^2$ is a bijective function of $\sigma_1,\ldots,\sigma_M$ on the domain $\{\sigma_1,\ldots,\sigma_M>(\lambda\rho')^{-1}\}$, while on $\{(\lambda\rho)^{-1}\le\sigma_1,\ldots,\sigma_M<(\lambda\rho')^{-1}\}$, it is convex -- the significance of these properties will soon become apparent.

Let us now give the partial statement of Theorem \ref{thrm2} valid for the low SNR regime.
\begin{lemma}[Low SNR rank-one reduction] \label{Lemma7}
Assume hypotheses \ref{H2}--\ref{H4} and fix $\rho'=2\rho/3$. Then, there exists $0<\lambda_S<\lambda_c$ such that whenever $0\le\lambda<\lambda_S$, 
\begin{equation} \label{FM=F1low}
\sup_{\bQ\in\cS_M}F_M^{\mathrm{RS}}(\bQ,\lambda)=\sup_{q\in[0,\rho]}F_1^{\mathrm{RS}}(q,\lambda)=\sup_{q\in[0,\rho']}F_1^{\mathrm{RS}}(q,\lambda).
\end{equation}
\end{lemma}
\begin{proof}
Let $\bQ^*(\lambda)\in\cS_M$ be a maximizer of $F_M^{\mathrm{RS}}(\bQ,\lambda)$ with eigenvalues $q_1^*,\ldots,q_M^*$. Then, by Proposition \ref{prop1}, there exists $0<\lambda_S'<\lambda_c$ such that for all $0\le\lambda<\lambda_S'$,
\begin{equation*}
0\le q_1^*,\ldots,q_M^*<\rho'.
\end{equation*}
Thus, the eigenvalues $\sigma_i=(\lambda q_i^*)^{-1}$ ($1\le i\le M$) of the minimizer $\bS=(\lambda\bQ^*(\lambda))^{-1}$ of $\widetilde{F}_M^{\mathrm{RS}}(\bS,\lambda)$ \eqref{FtildeMRS} are such that
\begin{equation} \label{sigmailowerbound}
\sigma_1,\ldots,\sigma_M>\frac{1}{\lambda\rho'}
\end{equation}
whenever $0\le\lambda<\lambda_S'$. Letting $\bO_i$ be the eigenvector of $\bS$ corresponding to $\sigma_i$ and setting $\lambda_S:=\min\{\lambda_S',D^{-2}/\rho\}$, we then see that taking $0\le\lambda<\lambda_S\le D^{-2}/\rho$ enforces the following lower bound on the diagonal entries of $\bS$:
\begin{equation*}
\bS_{ii}=\sum_{j=1}^M\bO_{ij}^2\sigma_j>\frac{1}{\lambda\rho'}\sum_{j=1}^M\bO_{ij}^2=\frac{1}{\lambda\rho'}>\frac{D^2\rho}{\rho'}>D^2,\quad1\le i\le M.
\end{equation*}
Hence, by Corollary \ref{Cor1}, the mutual information term in equation \eqref{FtildeMRS} satisfies
\begin{equation*}
I(\bx_0;\bx_0+\bS^{1/2}\bz)\ge MI(x_0;x_0+\sqrt{\sigma}z),
\end{equation*}
where we recall that $x_0\sim\pP_X$, $z\sim\mathcal{N}(0,1)$, and $\sigma=\Tr\bS/M$. Therefore, we have that
\begin{align}
\inf_{\bS\in\cS_M}\widetilde{F}_M^{\mathrm{RS}}(\bS,\lambda)\!&\;\ge\inf_{\sigma_1,\ldots,\sigma_M\ge0}\widetilde{F}_{M,\mathrm{low}}^{\mathrm{RS}}(\bs,\lambda), \label{tildeFMtildeFMlow}
\\\widetilde{F}_{M,\mathrm{low}}^{\mathrm{RS}}(\bs,\lambda)&:=MI(x_0;x_0+\sqrt{\sigma}z)+\frac{\lambda}{4}\sum_{i=1}^M\left(\frac{1}{\lambda\sigma_i}-\rho\right)^2, \nonumber
\end{align}
where $\bs=(\sigma_1,\ldots,\sigma_M)$. Further writing $\bg=(\gamma_1,\ldots,\gamma_M)$ and using equation \eqref{gradFrobenius} with the chain rule shows that the $\bg$-directional $\bs$-derivative of $\widetilde{F}_{M,\mathrm{low}}^{\mathrm{RS}}(\bs,\lambda)$ is
\begin{equation*}
\nabla_{\bg}\widetilde{F}_{M,\mathrm{low}}^{\mathrm{RS}}(\bs,\lambda)=\sum_{i=1}^M\gamma_i\left(\iota(\sigma)+\nu(\sigma_i)\right),
\end{equation*}
where we define
\begin{align}
\iota(\sigma)&:=\frac{\de}{\de\sigma}I(x_0;x_0+\sqrt{\sigma}z), \label{iotasigma}
\\ \nu(\sigma_i)&:=\frac{\lambda\rho\sigma_i-1}{2\lambda\sigma_i^3}. \label{nusigma}
\end{align}
Hence, the critical points of $\widetilde{F}_{M,\mathrm{low}}^{\mathrm{RS}}(\bs,\lambda)$ occur at
\begin{equation} \label{systemlow}
\sigma_i=\nu^{-1}\circ\iota(\sigma),\quad 1\le i\le M,
\end{equation}
where we note that $\nu(\sigma_i)$ is a monotone, bijective function when $0\le\lambda<\lambda_S$, since then we are constrained to the regime \eqref{sigmailowerbound} on which $\nu'(\sigma_i)$ is strictly negative due to equation \eqref{LaplacianFrobenius}. The system of equations \eqref{systemlow} is thus satisfied only if $\sigma_1=\cdots=\sigma_M$, i.e., when $\bS$ has the form $\bS=\sigma I_M$.

Now, a minimizer $\bs$ of $\widetilde{F}_{M,\mathrm{low}}^{\mathrm{RS}}(\bs,\lambda)$ must either be a critical point or lie on the boundary of $[0,\infty)^M$. For $\bs$ to be a boundary point, it must be such that at least one of the $\sigma_i$, say $\sigma_1$, vanishes or tends to infinity. If $\sigma_1=0$, $\bs$ cannot be a minimizer of $\widetilde{F}_{M,\mathrm{low}}^{\mathrm{RS}}(\bs,\lambda)$ in the $0\le\lambda<\lambda_S$ regime due to the bound \eqref{sigmailowerbound}. On the other hand, if $\sigma_1\to\infty$, we have by the I-MMSE relation for scalar Gaussian channels \cite{guoshamaiverdu} that
\begin{equation} \label{scalarIMMSE}
\iota(\sigma)=-\frac{1}{2\sigma^2}\mathrm{mmse}(1/\sigma)
\end{equation}
(recall the definition \eqref{scalarMMSE} of the MMSE) and thus the bounds $0\le\mathrm{mmse}(t)\le\rho^2$ valid for all $t\ge0$ (see, e.g., \cite{LelargeMiolane}) imply that
\begin{equation*}
\lim_{\sigma_1\to\infty}\nabla_{\bg}\widetilde{F}_{M,\mathrm{low}}^{\mathrm{RS}}(\bs,\lambda)=\sum_{i=2}^M\gamma_i\frac{\lambda\rho\sigma_i-1}{2\lambda\sigma_i^3}.
\end{equation*}
Thus, for $\bs$ to be a minimizer of $\widetilde{F}_{M,\mathrm{low}}^{\mathrm{RS}}(\bs,\lambda)$ when $\sigma_1\to\infty$, we require that all of the remaining $\sigma_i$ also tend to infinity or equal $(\lambda\rho)^{-1}$ so that the above gradient vanishes and we avoid saddle points of the type described in the proof of Lemma \ref{Lemma2}. Since $\sigma_i=(\lambda\rho)^{-1}<(\lambda\rho')^{-1}$ is in contradiction with the bound \eqref{sigmailowerbound}, we must have that $\sigma_2,\ldots,\sigma_M\to\infty$ and we again see that this corresponds to the form $\bS=\sigma I_M$.

We have established that a minimimizer $\bs$ of $\widetilde{F}_{M,\mathrm{low}}^{\mathrm{RS}}(\bs,\lambda)$, regardless of whether it is a critical or boundary point, must have equal components when we are constrained to the $0\le\lambda<\lambda_S$ regime. Our infimum of interest then decouples over the $\sigma_1,\ldots,\sigma_M$ according to
\begin{equation*}
\inf_{\sigma_1,\ldots,\sigma_M\ge0}\widetilde{F}_{M,\mathrm{low}}^{\mathrm{RS}}(\bs,\lambda)=M\inf_{\sigma\ge0}\left\{I(x_0;x_0+\sqrt{\sigma}z)+\frac{\lambda}{4}\left(\frac{1}{\lambda\sigma}-\rho\right)^2\right\}.
\end{equation*}
Since $\widetilde{F}_M^{\mathrm{RS}}(\bS,\lambda)$ decouples in a similar fashion upon setting $\bS=\sigma I_M$, we see that inequality \eqref{tildeFMtildeFMlow} is an equality with the right-hand side given by the above expression. Comparing equations \eqref{FMRSSigma} and \eqref{FtildeMRS} then shows that $F_M^{\mathrm{RS}}(\bQ,\lambda)$ is maximized when $\bQ$ has the form $\bQ=qI_M$. Substituting this into the left-hand side of equation \eqref{FM=F1low} finally yields
\begin{equation*}
\sup_{q\ge0}F_M^{\mathrm{RS}}(qI_M,\lambda)=\sup_{q\geq0}F_1^{\mathrm{RS}}(q,\lambda),
\end{equation*}
which reduces to the middle and right-hand side of equation \eqref{FM=F1low} upon realising that the eigenvalues of $\bQ$ are bounded due to Lemma \ref{Lemma2} and our choice of $\lambda_S'$.
\end{proof}

We comment on a few features of the above proof. First, we used the fact that the mutual information $I(x_0;x_0+\sqrt{\bS_{ii}}z)$ is a convex function of $\bS_{ii}$ in the low SNR regime, which allowed us to use Jensen's inequality to replace $\sum_{i=1}^MI(x_0;x_0+\sqrt{\bS_{ii}}z)$ by $MI(x_0;x_0+\sqrt{\sigma}z)$ through Corollary \ref{Cor1}. Thus, the problem of minimizing $\widetilde{F}_M^{\mathrm{RS}}(\bS,\lambda)$ over $\bS$ became a problem of minimizing $\widetilde{F}_{M,\mathrm{low}}^{\mathrm{RS}}(\bS,\lambda)$ over the eigenvalues of $\bS$. We then showed that the critical points and indeed all minimizers of $\widetilde{F}_{M,\mathrm{low}}^{\mathrm{RS}}(\bS,\lambda)$ are of the form $\bS=\sigma I_M$, which relied crucially on the monotonicity of the gradient of the squared Frobenius norm $\lVert\bS^{-1}/\lambda-\rho I_M\rVert^2_{\mathrm{F}}$.

In contrast, the upcoming proof of the high SNR counterpart of Lemma \ref{Lemma7} is in some sense dual to the above, with us first needing to use the convexity of $\lVert\bS^{-1}/\lambda-\rho I_M\rVert^2_{\mathrm{F}}$ in terms of the eigenvalues of $\bS$ to replace it by a function of $\sigma=\Tr\bS/M$ via Jensen's inequality. This will then supplant the problem of minimizing $\widetilde{F}_M^{\mathrm{RS}}(\bS,\lambda)$ over $\bS$ by a minimization problem in terms of the diagonal entries of $\bS$. We will again show that said minimization problem is solved when $\bS$ is proportional to the identity -- recall that the eigenvalues and diagonal entries of $\bS$ coincide when $\bS$ is diagonal, thus explaining how the low and high SNR regimes will eventually be reconciled. Showing that our minimizers have the form $\bS=\sigma I_M$ will require us to use monotonicity of the derivative of the mutual information $I(x_0;x_0+\sqrt{\bS_{ii}}z)$ with respect to $\bS_{ii}$, which is guaranteed by hypothesis \ref{H5} for small $\bS_{ii}$ (i.e., high SNR, due to Proposition \ref{prop1}).

\begin{lemma}[High SNR rank-one reduction] \label{Lemma8}
Assume hypotheses \ref{H2}, \ref{H4}, and \ref{H5}. Let $\pP_X$ be centered with bounded second moment $\rho=\E_{\pP_X}X^2$ and fix $\rho'=2\rho/3$. Then, there exists $\lambda_L>\lambda_c$ such that whenever $\lambda>\lambda_L$, 
\begin{equation*}
\sup_{\bQ\in\cS_M}F_M^{\mathrm{RS}}(\bQ,\lambda)=\sup_{q\in[0,\rho]}F_1^{\mathrm{RS}}(q,\lambda)=\sup_{q\in[\rho',\rho]}F_1^{\mathrm{RS}}(q,\lambda).
\end{equation*}
\end{lemma}
\begin{proof}
We are once again tasked with minimizing $\widetilde{F}_M^{\mathrm{RS}}(\bS,\lambda)$ \eqref{FtildeMRS} over $\bS\in\cS_M$. In the high SNR setting, we lose the bound \eqref{sigmailowerbound} on the eigenvalues $\sigma_1,\ldots,\sigma_M$ of minimizers $\bS$ and hence Corollary \ref{Cor1} no longer applies. Instead, we begin by observing that due to Propositon \ref{prop1} and the relation $\bS=(\lambda\bQ^*(\lambda))^{-1}$, there exists $\lambda_L'>\lambda_c$ such that
\begin{equation} \label{sigmaiupperbound}
\frac{1}{\lambda\rho}<\sigma_1,\ldots,\sigma_M<\frac{1}{\lambda\rho'}
\end{equation}
whenever $\lambda>\lambda_L'$. Then, since the Frobenius norm term in equation \eqref{FtildeMRS} is convex on the domain \eqref{sigmaiupperbound}, we may use Jensen's inequality to see that
\begin{equation*}
\lVert\bS^{-1}/\lambda-\rho I_M\rVert^2_{\mathrm{F}}\ge M\left(\frac{1}{\lambda\sigma}-\rho\right)^2,
\end{equation*}
where we recall that $\sigma=\Tr\bS/M$ denotes the normalized trace of $\bS$. Inserting this into equation \eqref{FtildeMRS} and using Lemma \ref{Lemma1} then shows that
\begin{align}
\inf_{\bS\in\cS_M}\widetilde{F}_M^{\mathrm{RS}}(\bS,\lambda)\!&\;\ge\inf_{\bS_{11},\ldots,\bS_{MM}\ge0}\widetilde{F}_{M,\mathrm{high}}^{\mathrm{RS}}(\mathrm{diag}\,\bS,\lambda), \label{tildeFMtildeFMhigh}
\\\widetilde{F}_{M,\mathrm{high}}^{\mathrm{RS}}(\mathrm{diag}\,\bS,\lambda)&:=\sum_{i=1}^MI(x_0;x_0+\sqrt{\bS_{ii}}z)+\frac{\lambda M}{4}\left(\frac{1}{\lambda\sigma}-\rho\right)^2, \nonumber
\end{align}
where we further recall that $x_0\sim\pP_X,z\sim\mathcal{N}(0,1)$ and write $\mathrm{diag}\,\bS$ for the diagonal of $\bS$. Writing $\bg=(\gamma_1,\ldots,\gamma_M)$, the $\bg$-directional derivative of $\widetilde{F}_{M,\mathrm{high}}^{\mathrm{RS}}(\mathrm{diag}\,\bS,\lambda)$ is
\begin{equation*}
\nabla_{\bg}\widetilde{F}_{M,\mathrm{high}}^{\mathrm{RS}}(\mathrm{diag}\,\bS,\lambda)=\sum_{i=1}^M\gamma_i\left(\iota(\bS_{ii})+\nu(\sigma)\right),
\end{equation*}
where $\iota(\bS_{ii})$ and $\nu(\sigma)$ are specified by equations \eqref{iotasigma} and \eqref{nusigma}, respectively. Now, by hypothesis \ref{H5}, there exists $\lambda_L''>0$ such that $\lambda^2\mathrm{mmse}(\lambda)$ is monotone on $[\lambda_L'',\infty)$. Then, by equation \eqref{scalarIMMSE}, $\iota(\bS_{ii})$ is monotone for $\bS_{ii}<(\lambda_L'')^{-1}$. This condition is satisfied when $\lambda>\lambda_L',\lambda_L''/\rho'$, since then
\begin{equation*}
\bS_{ii}=\sum_{j=1}^M\bO_{ij}^2\sigma_j<\frac{1}{\lambda\rho'}\sum_{j=1}^M\bO_{ij}^2=\frac{1}{\lambda\rho'}<\frac{1}{\lambda_L''},\quad 1\le i\le M,
\end{equation*}
where $\bO_i$ is the eigenvector of $\bS$ corresponding to $\sigma_i$, the first inequality follows from the bound \eqref{sigmaiupperbound} valid for $\lambda>\lambda_L'$, and the second inequality is due to the constraint $\lambda>\lambda_L''/\rho'$. Hence, setting $\lambda_L:=\max\{\lambda_L',\lambda_L''/\rho'\}$, $\iota(\bS_{ii})$ is a monotone, bijective function for $\lambda>\lambda_L$ and the critical points of $\widetilde{F}_{M,\mathrm{high}}^{\mathrm{RS}}(\mathrm{diag}\,\bS,\lambda)$ occur at
\begin{equation*}
\bS_{ii}=\iota^{-1}\circ\nu(\sigma),\quad1\le i\le M,
\end{equation*}
which is satisfied only if $\bS_{11}=\cdots=\bS_{MM}$.

For $\widetilde{F}_{M,\mathrm{high}}^{\mathrm{RS}}(\mathrm{diag}\,\bS,\lambda)$ to be minimized on the boundary of $[0,\infty)^M$, the proposed minimizer $\mathrm{diag}\,\bS$ must be such that at least one of the $\bS_{ii}$ vanishes or tends to infinity. This is not possible for finite $\lambda>\lambda_L$ since the bound \eqref{sigmaiupperbound} and the relation $\bS=\sum_{j=1}^M\bO_{ij}^2\sigma_j$ implies the constraint $(\lambda\rho)^{-1}<\bS_{ii}<(\lambda\rho')^{-1}$. When $\lambda\to\infty$, this constraint suggests that $\bS_{11}=\cdots=\bS_{MM}\to0$. As in the proof of Lemma \ref{Lemma7}, we thus conclude that the right-hand side of inequality \eqref{tildeFMtildeFMhigh} reduces to
\begin{equation*}
\inf_{\bS_{11},\ldots,\bS_{MM}\ge0}\widetilde{F}_{M,\mathrm{high}}^{\mathrm{RS}}(\mathrm{diag}\,\bS,\lambda)=M\inf_{\sigma\ge0}\left\{I(x_0;x_0+\sqrt{\sigma}z)+\frac{\lambda}{4}\left(\frac{1}{\lambda\sigma}-\rho\right)^2\right\}.
\end{equation*}
Since substituting $\bS=\sigma I_M$ into the left-hand side of inequality \eqref{tildeFMtildeFMhigh} yields the above, the remainder of this proof follows in the same manner as in the proof of Lemma \ref{Lemma7}.
\end{proof}

In order to complete the proof of Theorem \ref{thrm2}, it remains only to use analytic continuation to combine lemmas \ref{Lemma7} and \ref{Lemma8}.
\begin{proof}[\textbf{Proof of Theorem \ref{thrm2}}]
Using the notation of equation \eqref{phiM}, we have by lemmas \ref{Lemma7} and~\ref{Lemma8} that there exist $0<\lambda_S<\lambda_c$ and $\lambda_L>\lambda_c$ such that $\phi_1(\lambda)=\sup_{q\in[0,\rho]}F_1^{\mathrm{RS}}(q,\lambda)$ and $\phi_M(\lambda)=\sup_{\bQ\in\cS_M}F_M^{\mathrm{RS}}(\bQ,\lambda)$ agree on $[0,\lambda_S)\cup(\lambda_L,\infty)$. By hypothesis \ref{H4}, $\phi_1(\lambda)$ and $\phi_M(\lambda)$ are real analytic on $[0,\infty)\setminus\{\lambda_c\}$, so by the identity theorem for real analytic functions, the aforementioned agreement between said functions extends to the domain $[0,\infty)\setminus\{\lambda_c\}$. Since $\phi_1(\lambda)$ and $\phi_M(\lambda)$ are also continuous functions of $\lambda$ (recall Remark \ref{rmk1}), they must then agree at $\lambda=\lambda_c$, as well.
\end{proof}

\section{Standard prerequisites: Interpolation and overlap concentration} \label{s4}
With Theorem \ref{thrm2} established, we would now like to move on to the next stage of the proof of Theorem~\ref{thrm1}. We recall that said proof amounts to combining the lower and upper bounds on the limiting free entropy given by propositions \ref{prop2} and \ref{prop3}. The brunt of the argument henceforth lies within the proof of Proposition \ref{prop3}. We present the key idea of the proof, which is a multiscale application of the cavity method, in Section~\ref{s5} and use this section to lay out some standard prerequisite results that will be needed therein. In \S\ref{s4.1}, we use Guerra's interpolation method \cite{Guerra} to prove Proposition \ref{prop2}, which is both a counterpart to Proposition \ref{prop3} and an ingredient of its proof. In \S\ref{s4.2}, we prove thermal concentration of the overlap matrix $\bR_{10}$ \eqref{overlap} with respect to the perturbed Hamiltonian $\widetilde{H}_{N,\eps}(\bX)$~\eqref{Htilde}, as described in Theorem \ref{thrm4}, along with some straightforward corollaries. Finally, in \S\ref{s4.3}, we prove negligibility of the difference between the limiting free entropies $\lim_{N\to\infty}F_N(\lambda)$ and $\lim_{N\to\infty}\widetilde{F}_N(\lambda)$ corresponding respectively to the original and perturbed Hamiltonians $H_N(\bX)$ \eqref{Hamiltonian} and $\widetilde{H}_{N,\eps}(\bX)$.

\subsection{Free entropy lower bound: Guerra interpolation} \label{s4.1}
Proposition \ref{prop2} follows from the standard (finite-rank) interpolation argument seen in, e.g., \cite{LelargeMiolane} in combination with the rank-one reduction of Theorem \ref{thrm2}. The main idea is to introduce a Hamiltonian that interpolates between $H_N(\bX)$ and one corresponding to $N$ copies of the Hamiltonian associated with the Gibbs average $\langle\,\cdot\,\rangle_{\mathrm{RS}}$ \eqref{RSgibbs}. Then, it is simply a matter of bounding the derivative of the associated interpolating free entropy. We present the details for the sake of completeness.

\begin{proof}[\textbf{Proof of Proposition \ref{prop2}}]
Introduce the interpolating Hamiltonian
\begin{align*}
H_t (\bX) &:=\frac12\Tr\Big( \sqrt{\frac{\lambda t}{N}}  \bZ \bX \bX^\intercal  + \frac{\lambda t}{N}  \bX_0 \bX_0^\intercal \bX \bX^\intercal  - \frac{\lambda t}{2N}   \bX \bX^\intercal \bX \bX^\intercal \Big)
\\& \;\quad+ \sum_{i = 1}^N \Big(\sqrt{\lambda (1-t)}  \bx_i^\intercal \sqrt{\bQ} \widetilde{\bz}_i   + (1-t)\lambda \bx_{0,i}^\intercal \bQ \bx_i  -  \frac{(1-t)\lambda}{2}   \bx_i^\intercal \bQ \bx_i \Big),
\end{align*}
with $\bZ,\bX_0$ as in equation \eqref{spikedwignermodel}, each $\bx_i \in \R^M$ the $i\textsuperscript{th}$ row of $\bX$, each $\bx_{0,i}\in\R^M$ the $i\textsuperscript{th}$ row of $\bX_0$, and $\widetilde{\bz}_1,\ldots,\widetilde{\bz}_N\sim\mathcal{N}(0,I_M)$ i.i.d.~standard Gaussian vectors independent of $\bZ$. Define the associated interpolating free entropy as
\begin{align*}
\varphi(t) := \frac{1}{NM} \E \ln \int_{\R^{N\times M}} e^{H_t(\bX) } \, \de\!\pP_{X,N,M}(\bX),
\end{align*}
with the expectation taken over $\bZ,\bX_0,\widetilde{\bz}_1,\ldots,\widetilde{\bz}_N$ here and throughout the remainder of this proof. It follows that
\begin{align}
\varphi'(t) &= \frac{1}{NM} \E \bigg\langle \frac{\de}{\de t} H_t(\bX) \bigg\rangle_t \nonumber
\\&= \frac{1}{2NM} \Tr\E \bigg\langle \frac12\sqrt{\frac{\lambda}{t N}}  \bZ \bX \bX^\intercal  + \frac{\lambda}{N}  \bX_0 \bX_0^\intercal \bX \bX^\intercal  - \frac{\lambda}{2N}  \bX \bX^\intercal  \bX \bX^\intercal  \bigg\rangle_t \nonumber
\\&\quad - \frac{1}{NM} \E \bigg\langle \sum_{i = 1}^N \frac12\sqrt{\frac{\lambda}{{1 - t}}}  \bx_i^\intercal \sqrt{\bQ} \widetilde{\bz}_i   + \lambda  \bx_{0,i}^\intercal \bQ \bx_i  -  \frac{\lambda}{2}  \bx_i^\intercal \bQ \bx_i   \bigg\rangle_t, \label{phiprime}
\end{align}
where we write
\begin{equation*}
\langle\,\cdot\,\rangle_t:=\frac{1}{\widehat{Z}_{N,t}}\int_{\R^{N\times M}}\cdot\,\,e^{H_t(\bX)}\,\de\!\pP_{X,N,M},\qquad \widehat{Z}_{N,t}:=\int_{\R^{N\times M}} e^{H_t(\bX)}\,\de\!\pP_{X,N,M}
\end{equation*}
for the Gibbs average with respect to the interpolating Hamiltonian $H_t(\bX)$ and the corresponding partition function. By Gaussian integration by parts with respect to the (symmetric) Wigner matrix $\bZ$, we have that
\begin{align*}
\Tr\E\left\langle\bZ\bX\bX^\intercal\right\rangle_t&=2\sum_{1\le i\le j\le N}\E\left[\frac{\partial}{\partial\bZ_{ij}}\left\langle(\bX\bX^\intercal)_{ji}\right\rangle_t\right]
\\&=2\sum_{1\le i\le j\le N}\E\left[\left\langle(\bX\bX^\intercal)_{ji}\frac{\partial H_t(\bX)}{\partial\bZ_{ij}}\right\rangle_t-\frac{1}{\widehat{Z}_{N,t}}\left\langle(\bX\bX^\intercal)_{ji}\right\rangle_t\frac{\partial \widehat{Z}_{N,t}}{\partial\bZ_{ij}}\right]
\\&=2\sum_{1\le i\le j\le N}\E\left[\left\langle(\bX\bX^\intercal)_{ji}\frac{\partial H_t(\bX)}{\partial\bZ_{ij}}\right\rangle_t-\left\langle(\bX\bX^\intercal)_{ji}\right\rangle_t\left\langle\frac{\partial H_t(\bX)}{\partial\bZ_{ij}}\right\rangle_t\right]
\\&=\sqrt{\frac{\lambda t}{N}}\Tr\E\left\langle\bX_1\bX_1^\intercal\bX_1\bX_1^\intercal-\bX_1\bX_1^\intercal\bX_2\bX_2^\intercal\right\rangle_t,
\end{align*}
where we have written $\bX_1$ for $\bX$, defined $\bX_2$ to be a sample, conditionally independent of $\bX_1$, from the interpolating posterior distribution, and have redefined $\langle\,\cdot\,\rangle_t$ to be the Gibbs average over both samples $\bX_1$ and $\bX_2$, as per the convention described at the end of \S\ref{s2.3}. In a similar fashion, Gaussian integration by parts with respect to $\widetilde{\bz}_i$ ($1\leq i\leq N$) shows that
\begin{equation*}
\E\left\langle\bx_i^\intercal\sqrt{\bQ}\widetilde{\bz}_i\right\rangle_t=\sqrt{\lambda(1-t)}\E\left\langle\bx_{1,i}^\intercal\bQ\bx_{1,i}-\bx_{1,i}^\intercal\bQ\bx_{2,i}\right\rangle_t.
\end{equation*}
Thus, using the cyclic property of the trace and defining
\begin{equation}
\bR_{12} := \frac{1}{N}\bX_1^\intercal \bX_2 =  \frac{1}{N} \sum_{i = 1}^N \bx_{1,i} \bx_{2,i}^\intercal \label{overlap12}
\end{equation}
in analogy with equation \eqref{overlap}, we may rewrite equation \eqref{phiprime} as
\begin{align*}
\varphi'(t)&=\frac{\lambda }{2M} \Tr\E \left\langle  \bR_{10}^\intercal \bR_{10}  - \frac{1}{2}   \bR_{12}^\intercal \bR_{12} \right\rangle_t - \frac{\lambda }{NM} \sum_{i = 1}^N \Tr\E \left\langle  \bQ \bx_{1,i} \bx_{0,i}^\intercal  - \frac{1}{2}   \bQ \bx_{1,i} \bx_{2,i}^\intercal  \right\rangle_t
\\&= \frac{\lambda }{2M} \Tr\E \left\langle  \bR_{10}^\intercal \bR_{10}  - \frac{1}{2}  \bR_{12}^\intercal \bR_{12}  \right\rangle_t - \frac{\lambda }{M} \Tr \E \left\langle  \bQ \bR_{10}  - \frac{1}{2}   \bQ \bR_{12}  \right\rangle_t.
\end{align*}
Using the Nishimori identity to replace $\bX_2$ by $\bX_0$, this further simplifies to
\begin{align*}
\varphi'(t)&=\frac{\lambda}{4M} \Tr\E \left\langle\bR_{10}^\intercal \bR_{10}  - 2   \bQ \bR_{10} \right\rangle_t 
\\&\geq -\frac{\lambda}{4M} \Tr\bQ^2,
\end{align*}
with the last line following from completing the square. Combining this with the definitions \eqref{FrenEnt} and \eqref{RSpot} of the free entropy and replica symmetric potential shows that 
\begin{align*}
F_N(\lambda)-F_M^{\mathrm{RS}}(\bQ,\lambda)&=\varphi(1)-\varphi(0)+\frac{\lambda}{4M}\Tr\bQ^2
\\&=\int_0^1\varphi'(t)\,\de t+\frac{\lambda}{4M}\Tr\bQ^2
\\&\geq0.
\end{align*}
As this holds for all $\bQ\in\mathcal{S}_M$, we then have that
\begin{equation*}
F_N(\lambda)\ge\sup_{\bQ\in\cS_M}F_M^{\mathrm{RS}}(\bQ,\lambda).
\end{equation*}
Using Theorem \ref{thrm2} to replace the right-hand side of this identity completes the proof.	
\end{proof}

\subsection{Thermal concentration of the overlap matrix} \label{s4.2}

The cavity computations of \S\ref{s5.2} rely on a form of thermal concentration of the overlap matrix $\bR_{10}$ \eqref{overlap}. Taken at face value, one might expect this to mean that 
\begin{equation*}
\E_{\bZ,\widetilde{\bZ},\bX_0}\langle\lVert\bR_{10}-\langle\bR_{10}\rangle_{N,0}'\rVert_{\mathrm{F}}^2\rangle_{N,0}'\overset{N\to\infty}{\longrightarrow}0,
\end{equation*}
with $\langle\,\cdot\,\rangle_{N,0}'$ specified by equation \eqref{Gibbspert}, so that one may replace $\bR_{10}$ by the more amenable quantity $\langle\bR_{10}\rangle_{N,0}'$ within the calculations of \S\ref{s5.2}. However, it is known that the concentration result proposed above does not hold in generality, with ``non-physical'' singularities arising at particular model-dependent values of SNR $\lambda$ (see \cite{barbier2022strong} and references therein). Nonetheless, due to the symmetries of our model, we are able to add infinitesimal side information (without affecting the limiting free entropy, as discussed in \S\ref{s4.3}) to induce thermal concentration of the overlap matrix up to an average over a perturbation parameter. This is achieved by replacing $\langle\,\cdot\,\rangle_{N,0}'$ in the above with the Gibbs average $\langle\,\cdot\,\rangle_{N,\eps}'$ \eqref{Gibbspert} and then integrating the left-hand side over $\eps$ as described in Theorem \ref{thrm4}, with the integration step serving to smooth out the aforementioned singularities. Similar perturbations have been commonly used in other spin glass models, where regularizing perturbations are introduced in order to enforce stability properties of the high dimensional probability measures \cite{aizenmancontuccistability,GG} that in turn induce asymptotically nice structural phenomena such as ultrametricity \cite{PUltra} and synchronization \cite{PPotts,PVS}. 

It turns out that the driving mechanisms behind our proof of thermal concentration of the overlap matrix are the presence of the perturbation Hamiltonian $H_{N,\eps}(\bX)$~\eqref{Hpert} and the Nishimori identity, which applies due to our working in the Bayes-optimal setting. Indeed, our proof holds in more general settings, so long as the model under consideration exhibits the same symmetry properties as the spiked Wigner model~\eqref{spikedwignermodel}. Thus, we now consider the fully factorized channel \eqref{factorizedchannel}
\begin{equation*}
\pP_{\mathrm{out}}(\bY''\mid\bX_0\bX_0^\intercal)=\prod_{i=1}^N\pP_{\mathrm{out}}^{(i=j)}(\bY_{ii}''\mid\bx_{0,i}^\intercal\bx_{0,i})\prod_{1\le i<j\le N}\pP_{\mathrm{out}}^{(i<j)}(\bY_{ij}''\mid\bx_{0,i}^\intercal\bx_{0,j})
\end{equation*}
and study the inference problem
\begin{equation*}
\begin{cases}\bY''\sim\pP_{\mathrm{out}}(\,\cdot\mid\bX_0\bX_0^\intercal),\\ \bY^{(\eps)}=\sqrt{\eps}\bX_0+\widetilde{\bZ},\end{cases}
\end{equation*}
with $\widetilde{\bZ}\in\R^{N\times M}$ a standard Ginibre matrix of i.i.d.~standard Gaussian entries independent of everything else, $\eps\in[s_N,2s_N]$ a perturbation parameter, and $(s_N)_{N\ge1}$ a sequence of numbers in $(0,1)$ that decrease slowly to zero as $N\to\infty$. Denoting the log-likelihood of the first channel by $H_N''(\bX)$ and recalling that the analogous term for the second channel is $H_{N,\eps}(\bX)$ \eqref{Hpert}, we see that the Hamiltonian corresponding to the full inference problem is $\widetilde{H}_{N,\eps}''(\bX)=H_N''(\bX)+H_{N,\eps}(\bX)$~\eqref{Htildegeneral}. As stated above, we observe that this Hamiltonian corresponds to the log-likelihood of an inference problem in the Bayes-optimal setting and the Nishimori identity holds, which induces concentration-of-measure \cite{overlapconcentration,barbier2022strong}.

We now prove thermal concentration of $\bR_{10}$ \eqref{overlap} with respect to the general perturbed Hamiltonian $\widetilde{H}_{N,\eps}''(\bX)$.
 
\begin{proof}[\textbf{Proof of Theorem \ref{thrm4}}]
To simplify notation throughout this proof, we write $x$ for $\bX$, $X^*$ for $\bX_0$, $\bR$ for $\bR_{10}$, $C$ a generic positive constant independent of $M,N$ and depending only on the properties of $\pP_X$, $\E[\,\cdot\,]$ for expectation over $\widetilde{\bZ},X^*$ and the randomness in $\pP_{\mathrm{out}}(\,\cdot\mid X^*X^{*\intercal})$, $h_N$ for $\widetilde{H}_{N,\eps}''(x)$, and $\langle\,\cdot\,\rangle$ for the Gibbs average
\begin{align*}
\langle\,\cdot\,\rangle_{N,\eps}''&:=\frac{1}{Z_{N,\eps}''}\int_{\R^{N\times M}}\,\cdot\,e^{h_N}\,\de\!\pP_{X,N,M}(x),
\\ Z_{N,\eps}''&:=\int_{\R^{N\times M}}e^{h_N}\,\de\!\pP_{X,N,M}(x).
\end{align*}

Following \cite{overlapconcentration,adaptiveinterpolation}, introduce the auxiliary function
\begin{equation*}
\cL:=-\frac{1}{N}\frac{\partial h_N}{\partial\eps}=-\frac{1}{N}\Tr\left(\frac{1}{2\sqrt{\eps}}\widetilde{\bZ}x^\intercal+X^*x^\intercal-\frac{1}{2}xx^\intercal\right).
\end{equation*}
Then, one sees from Gaussian integration by parts with respect to $\widetilde{\bZ}$ and the Nishimori identity that
\begin{equation} \label{RLrelation}
\frac{1}{N}\E\langle\cL\rangle=-\frac{1}{2N}\Tr\E\langle\bR\rangle .
\end{equation}
Differentiating both sides of this equation with respect to $\eps$ and simplifying the left-hand side shows that
\begin{equation} \label{LRrelation}
\frac{1}{N}\E\left\langle\frac{\partial\cL}{\partial\eps}\right\rangle-\E\langle(\cL-\langle\cL\rangle)^2\rangle=-\frac{1}{2N}\frac{\partial}{\partial\eps}\Tr\E\langle\bR\rangle .
\end{equation}
Another application of Gaussian integration by parts with respect to $\widetilde{\bZ}$ and the Nishimori identity shows that the first term in the above simplifies as
\begin{equation} \label{dLde}
\frac{1}{N}\E\left\langle\frac{\partial\cL}{\partial\eps}\right\rangle=\frac{1}{N}\E\left\langle\frac{1}{4N\eps^{3/2}}\Tr\widetilde{Z}x^\intercal\right\rangle=\frac{1}{4\eps N}\Tr\E\left(\frac{1}{N}X^*X^{*\intercal}-\langle\bR\rangle\right).
\end{equation}
By the positive semidefiniteness of $\langle x\rangle^\intercal\langle x\rangle$, the Nishimori identity, Jensen's inequality, hypothesis \ref{H2}, and the boundedness of the second moment of $\pP_X$, we have that
\begin{equation} \label{Rbound}
0\le\Tr\E\langle\bR\rangle=\frac{1}{N}\Tr\E\langle x\rangle^\intercal\langle x\rangle\le \frac{1}{N}\Tr\E\langle x^\intercal x\rangle=\frac{1}{N}\Tr\E X^{*\intercal}X^*\le CM.
\end{equation}
Hence, the quantity \eqref{dLde} is such that
\begin{equation} \label{dLdebounds}
0\le \frac{1}{N}\E\left\langle\frac{\partial\cL}{\partial\eps}\right\rangle\le \frac{CM}{4\eps N}
\end{equation}
and we read from equation \eqref{LRrelation} that
\begin{equation} \label{RLineq}
\frac{1}{2N}\frac{\partial}{\partial\eps}\Tr\E\langle\bR\rangle\le\E\langle(\cL-\langle\cL\rangle)^2\rangle .
\end{equation}
It remains now to show how the left-hand side of this inequality relates to the difference $\E\langle(\bR_{\ell\ell'}-\langle\bR_{\ell\ell'}\rangle)^2\rangle$, which sums over $1\le \ell,\ell'\le M$ to give $\E\langle\lVert\bR-\langle\bR\rangle\rVert^2_{\mathrm{F}}\rangle$, and to prove concentration of the right-hand side.

Carrying out the differentiation on the left-hand side of inequality \eqref{RLineq} shows that
\begin{multline} \label{trR10expansion}
\frac{1}{2N}\frac{\partial}{\partial\eps}\Tr\E\langle\bR\rangle=\frac{1}{2}\Tr\E[\langle\bR\rangle\langle\cL\rangle-\langle\bR\cL\rangle]
\\=\frac{1}{2N^2}\sum_{i,j=1}^N\sum_{\ell,\ell'=1}^M\E\left[\frac{1}{2}X^*_{i\ell}\langle x_{i\ell}\rangle\langle x_{j\ell'}^2\rangle-X^*_{i\ell}X^*_{j\ell'}\langle x_{i\ell}\rangle\langle x_{j\ell'}\rangle-\frac{1}{2\sqrt{\eps}}X^*_{i\ell}\langle x_{il}\rangle\langle x_{j\ell'}\rangle\widetilde\bZ_{j\ell'}\right.
\\\left.-\frac{1}{2}X^*_{i\ell}\langle x_{i\ell}x^2_{j\ell'}\rangle+X^*_{i\ell}X^*_{j\ell'}\langle x_{i\ell}x_{j\ell'}\rangle+\frac{1}{2\sqrt{\eps}}X^*_{i\ell}\langle x_{il}x_{j\ell'}\rangle\widetilde\bZ_{j\ell'}\right].
\end{multline}
Gaussian integration by parts with respect to $\widetilde\bZ_{j\ell'}$ yields
\begin{multline*}
\frac{1}{2\sqrt{\eps}}\E\left[X^*_{i\ell}\langle x_{il}x_{j\ell'}\rangle\widetilde\bZ_{j\ell'}-X^*_{i\ell}\langle x_{il}\rangle\langle x_{j\ell'}\rangle\widetilde\bZ_{j\ell'}\right]
\\=\E\left[X^*_{i\ell}\langle x_{i\ell}\rangle\langle x_{j\ell'}\rangle^2-X^*_{i\ell}\langle x_{i\ell}x_{j\ell'}\rangle\langle x_{j\ell'}\rangle+\frac{1}{2}X^*_{i\ell}\langle x_{i\ell}x_{j\ell'}^2\rangle-\frac{1}{2}X^*_{i\ell}\langle x_{i\ell}\rangle\langle x_{j\ell'}^2\rangle\right],
\end{multline*}
so that, after an application of the Nishimori identity, equation \eqref{trR10expansion} reduces to
\begin{equation*}
\frac{1}{2N}\frac{\partial}{\partial\eps}\Tr\E\langle\bR\rangle=\frac{1}{2N^2}\sum_{i,j=1}^N\sum_{\ell,\ell'=1}^M\E\left[(\langle x_{i\ell}x_{j\ell'}\rangle-\langle x_{i\ell}\rangle\langle x_{j\ell'}\rangle)^2\right].
\end{equation*}
Substituting this into inequality \eqref{RLineq} then gives
\begin{equation} \label{ineq75}
\frac{1}{2N^2}\sum_{i,j=1}^N\sum_{\ell,\ell'=1}^M\E\left[(\langle x_{i\ell}x_{j\ell'}\rangle-\langle x_{i\ell}\rangle\langle x_{j\ell'}\rangle)^2\right] \leq \E\langle(\cL-\langle\cL\rangle)^2\rangle .
\end{equation}
Our hypothesis on $\pP_{\mathrm{out}}( \,\cdot \mid\bX_0 \bX_0^\intercal )$ being fully factorized implies that the term below is unchanged upon exchanging indices $k\leftrightarrow k'$, so we see for any $1\le k\le M$ that
\begin{align*}
&\frac{1}{2N^2}\sum_{i,j=1}^N\E\left[(\langle x_{ik}x_{jk}\rangle-\langle x_{ik}\rangle\langle x_{jk}\rangle)^2\right]
\\&=\frac{1}{2MN^2}\sum_{i,j=1}^N\sum_{\ell=1}^M\E\left[(\langle x_{i\ell}x_{j\ell}\rangle-\langle x_{i\ell}\rangle\langle x_{j\ell}\rangle)^2\right]
\\&\le\frac{1}{2MN^2}\sum_{i,j=1}^N\sum_{\ell,\ell'=1}^M\E\left[(\langle x_{i\ell}x_{j\ell'}\rangle-\langle x_{i\ell}\rangle\langle x_{j\ell'}\rangle)^2\right]
\\&\le\frac{1}{M}\E\langle(\cL-\langle\cL\rangle)^2\rangle;
\end{align*}
the first inequality is due to the trivial fact that we are adding expectations of squares, while the second inequality is an application of inequality \eqref{ineq75}. Combining this with the Cauchy--Schwarz inequality shows that
\begin{align}
&\frac{1}{s_N}\int_{s_N}^{2s_N}\E\langle(\bR_{\ell\ell'}-\langle\bR_{\ell\ell'}\rangle)^2\rangle\,\de\eps \nonumber
\\&\leq \sqrt{\frac{1}{N^2}\sum_{i,j=1}^N\E\left[(X^*_{i\ell'}X^*_{j\ell'})^2\right]}\sqrt{\frac{1}{N^2s_N}\sum_{i,j=1}^N\int_{s_N}^{2s_N}\E\left[(\langle x_{i\ell}x_{j\ell}\rangle-\langle x_{i\ell}\rangle\langle x_{j\ell}\rangle)^2\right]\,\de\eps} \nonumber
\\ &\leq C\sqrt{\frac{2}{Ms_N}\int_{s_N}^{2s_N}\E\langle(\cL-\langle \cL\rangle)^2\rangle\,\de\eps}. \label{RLconcentration}
\end{align}

Moving on to the thermal concentration of $\cL$, first note that
\begin{align*}
\E\langle(\cL-\langle\cL\rangle)^2\rangle&=\frac{M}{N}\frac{\partial^2f_N}{\partial\eps^2}+\frac{1}{N}\E\left\langle\frac{\partial\cL}{\partial\eps}\right\rangle
\\&\leq \frac{M}{N}\frac{\partial^2f_N}{\partial\eps^2}+\frac{CM}{4\eps N},
\end{align*}
where we have written
\begin{equation*}
f_N:=\frac{1}{NM}\E\ln Z_{N,\eps}''
\end{equation*}
and used the bound \eqref{dLdebounds}. Integrating this quantity over $\eps$ then shows that
\begin{align*}
\frac{1}{s_N}\int_{s_N}^{2s_N}\E\langle(\cL-\langle\cL\rangle)^2\rangle\,\de\eps&\le \frac{M}{Ns_N}\left(\frac{\partial f_N}{\partial\eps}\Big\vert_{\eps=2s_N}-\frac{\partial f_N}{\partial\eps}\Big\vert_{\eps=s_N}\right)+\frac{CM\ln2}{4Ns_N}
\\&=\frac{1}{Ns_N}\left(\E\langle\cL\rangle\vert_{\eps=s_N}-\E\langle\cL\rangle\vert_{\eps=2s_N}\right)+\frac{CM\ln2}{4Ns_N}
\\&=\frac{1}{2Ns_N}\left(\Tr\E\langle\bR\rangle\vert_{\eps=2s_N}-\Tr\E\langle\bR\rangle\vert_{\eps=s_N}\right)+\frac{CM\ln2}{4Ns_N}
\\&\leq \frac{CM}{Ns_N}.
\end{align*}
Here, the first line follows from the fundamental theorem of calculus, the second from comparing the definitions of $\cL$ and $f_N$, the third from equation \eqref{RLrelation}, and the fourth from the fact that the bounds \eqref{Rbound} hold uniformly over $\eps$.

Substituting the above bound into inequality \eqref{RLconcentration} shows that
\begin{equation*}
\frac{1}{s_N}\int_{s_N}^{2s_N}\E\langle(\bR_{\ell\ell'}-\langle\bR_{\ell\ell'}\rangle)^2\rangle\,\de\eps\le\frac{C}{\sqrt{Ns_N}}.
\end{equation*}
Summing this expression over $1\le\ell,\ell'\le M$ completes the proof.
\end{proof}

Note the particularly simple form of side information used here compared to, e.g., \cite{LelargeMiolane,overlapconcentration,adaptiveinterpolation,reeves}, wherein the side channel is Bernoulli distributed or is a matrix Gaussian channel like ours, but with perturbation parameter $\eps$ of matrix form rather than scalar. Our choice of perturbation is effective due to the fact that the cavity method only requires concentration with respect to the Gibbs average $\langle\,\cdot\,\rangle_{N,\eps}''$ instead of the stronger statements proved in the aforementioned references that include concentration with respect to the quenched Gibbs average $\E\langle\,\cdot\,\rangle_{N,\eps}''$ -- the latter corresponds to analogs of Theorem \ref{thrm4} involving the quantity $\E\langle\lVert\bR_{10}-\E\langle\bR_{10}\rangle_{N,\eps}''\rVert^2_{\mathrm{F}}\rangle_{N,\eps}''$. Another key point of our simple perturbation is that it maintains the exchangeability of the variables $(\bx_{ij})$ under the quenched Gibbs average: no statistical symmetries present in the original model are broken by the perturbation apart from the trivial sign ambiguity $\bX\to-\bX$. These two points allow us to obtain concentration for higher regimes of rank than would be possible using more complicated perturbations. As a concrete consequence, Theorem \ref{thrm4} gives thermal concentration of the overlap matrix for $M=\mathrm{o}(N^{1/4})$. Yet, there is room for improvement in order to reach $M=\mathrm{o}(N)$, the regime for which we believe Theorem \ref{thrm1} to hold.

For later convenience, we present a corollary of Theorem \ref{thrm4} regarding concentration of the overlap matrix $\bR_{12}$ \eqref{overlap12}.
\begin{corollary}[Thermal concentration of $\bR_{12}$] \label{Cor2}
Working in the setting of Theorem \ref{thrm4} and recalling that $\bR_{12} = \frac{1}{N}\bX_1^\intercal \bX_2$, there exists a finite positive constant $C$ independent of $M,N$ and depending only on properties of $\pP_X$ such that
\begin{align}
&\frac{1}{s_N}\int_{s_N}^{2s_N}\E\langle\lVert\bR_{12}-\langle\bR_{12}\rangle_{N,\eps}''\rVert_{\mathrm{F}}^2\rangle_{N,\eps}''\,\de\eps\le\Gamma(N,M), \label{R12concentration}
\\&\frac{1}{s_N}\int_{s_N}^{2s_N}\E\langle\lVert\bR_{10}-\langle\bR_{12}\rangle_{N,\eps}''\rVert_{\mathrm{F}}^2\rangle_{N,\eps}''\,\de\eps\le\Gamma(N,M). \label{R10R12concentration}
\end{align}
\end{corollary}
\begin{proof}
We reuse the notation introduced in the proof of Theorem \ref{thrm4}, with the Gibbs average $\langle\,\cdot\,\rangle$ now being taken over $\bX_1,\bX_2$, and further write $\widetilde{\bR}$ for $\bR_{12}$. In analogy with the bound \eqref{RLconcentration}, we first compute through the Cauchy--Schwarz inequality, Jensen's inequality, and the Nishimori identity that
\begin{align*}
&\frac{1}{s_N}\int_{s_N}^{2s_N}\E[\langle\bR_{\ell\ell'}\rangle^2-\langle\widetilde{\bR}_{\ell\ell'}\rangle^2]\,\de\eps
\\&\leq \sqrt{\frac{1}{N^2s_N}\sum_{i,j=1}^N\int_{s_N}^{2s_N}\E[\langle x_{i\ell'}\rangle^2\langle x_{j\ell'}\rangle^2]\,\de\eps}
\\&\qquad\times\sqrt{\frac{1}{N^2s_N}\sum_{i,j=1}^N\int_{s_N}^{2s_N}\E\left[(\langle x_{i\ell}x_{j\ell}\rangle-\langle x_{i\ell}\rangle\langle x_{j\ell}\rangle)^2\right]\,\de\eps}
\\&\leq\sqrt{\frac{1}{N^2}\sum_{i,j=1}^N\sqrt{\E (X^{*}_{i\ell'})^4\,\E (X^{*}_{j\ell'})^4}}
\\&\qquad\times\sqrt{\frac{1}{N^2s_N}\sum_{i,j=1}^N\int_{s_N}^{2s_N}\E\left[(\langle x_{i\ell}x_{j\ell}\rangle-\langle x_{i\ell}\rangle\langle x_{j\ell}\rangle)^2\right]\,\de\eps}
\\ &\leq C\sqrt{\frac{2}{Ms_N}\int_{s_N}^{2s_N}\E\langle(\cL-\langle \cL\rangle)^2\rangle\,\de\eps}.
\end{align*}
Thus, we may conclude as in the proof of Theorem \ref{thrm4} that
\begin{equation} \label{Fnormdiffconc}
\frac{1}{s_N}\int_{s_N}^{2s_N}\E[\lVert\langle\bR\rangle\rVert^2_{\mathrm{F}}-\lVert\langle\widetilde{\bR}\rangle\rVert_{\mathrm{F}}^2]\,\de\eps\le\Gamma(N,M).
\end{equation}
Next, we see that
\begin{align*}
\E\langle\lVert\widetilde{\bR}-\langle\widetilde{\bR}\rangle\rVert_{\mathrm{F}}^2\rangle&=\E\langle\widetilde{\bR}^\intercal\widetilde{\bR}\rangle-\E\langle\widetilde{\bR}^\intercal\rangle\langle\widetilde{\bR}\rangle
\\&=\E\langle\bR^\intercal\bR\rangle-\E\langle\widetilde{\bR}^\intercal\rangle\langle\widetilde{\bR}\rangle
\\&=\E\langle\lVert\bR-\langle\bR\rangle\rVert_{\mathrm{F}}^2\rangle+\E[\lVert\langle\bR\rangle\rVert_{\mathrm{F}}^2-\lVert\langle\widetilde{\bR}\rangle\rVert_{\mathrm{F}}^2],
\end{align*}
where the second line follows from using the Nishimori identity on the first term and the final line follows from adding and subtracting $\E\lVert\langle\bR\rangle\rVert^2_{\mathrm{F}}$. Integrating this over $\eps$ and using Theorem \ref{thrm4} together with inequality \eqref{Fnormdiffconc} yields inequality \eqref{R12concentration}. To obtain inequality \eqref{R10R12concentration} from \eqref{R12concentration}, we need only observe that due to the Nishimori identity,
\begin{equation*}
\E\langle\lVert\bR-\langle\widetilde{\bR}\rangle\rVert^2_{\mathrm{F}}\rangle=\E\langle\lVert\widetilde{\bR}-\langle\widetilde{\bR}\rangle\rVert_{\mathrm{F}}^2\rangle.
\end{equation*}
\end{proof}

\subsection{Negligibility of the side information} \label{s4.3}

In the previous subsection we showed that upon introduction of the perturbation Hamiltonian $H_{N,\eps}(\bX)$ \eqref{Hpert}, the overlap matrix $\bR_{10}$ \eqref{overlap} can be shown to concentrate with respect to the general Gibbs average $\langle\,\cdot\,\rangle_{N,\eps}''$ for $M=\mathrm{o}(N^{1/4})$. Our interest lies in the particular Gibbs average $\langle\,\cdot\,\rangle_{N,\eps}'$ specified by equation \eqref{Gibbspert}, which relates via the cavity method described in Section \ref{s5} to the perturbed Hamiltonian $\widetilde{H}_{N,\eps}(\bX)$ \eqref{Htilde} and perturbed free entropy
\begin{equation*}
\widetilde{F}_N(\lambda):=\frac{1}{NM}\E_{\bZ,\widetilde{\bZ},\bX_0,\eps}\ln\int_{\R^{N\times M}}e^{\widetilde{H}_{N,\eps}(\bX)}\,\de\!\pP_{X,N,M}(\bX),
\end{equation*}
where the average over $\eps$ is with respect to the uniform measure on $[s_N,2s_N]$. We have thus far claimed that upper bounding $\limsup_{N\to\infty}\widetilde{F}_N(\lambda)$ by $\sup_{q\in[0,\rho]}F_1^{\mathrm{RS}}(q,\lambda)$ is sufficient for proving Proposition \ref{prop3} since said quantity is itself an upper bound on the limit supremum of the free entropy $F_N(\lambda)$ of the spiked Wigner model \eqref{spikedwignermodel}. We now present a more precise finite $N$ result that implies this claim.

\begin{lemma}[Negligibility of perturbation] \label{Lemma9}
Let $(s_N)_{N\ge1}$ be a sequence of numbers in $(0,1)$ that decrease slowly to zero as $N\to\infty$ and let $\eps=\eps_N\in[s_N,2s_N]$. Recall the definition \eqref{ZNM} of the partition function $Z_{N,M}$ and that the perturbed partition function is given by
\begin{equation*}
\widetilde{Z}_{N,M}=\int_{\R^{N\times M}}e^{\widetilde{H}_{N,\eps}(\bX)}\,\de\!\pP_{X,N,M}(\bX)
\end{equation*}
with perturbed Hamiltonian $\widetilde{H}_{N,\eps}$ specified by equations \eqref{Htilde} and \eqref{Hpert}. Then, for each $N\ge1$, we have that
\begin{equation} \label{ZvsZpert}
\frac{1}{NM}\E_{\bZ,\bX_0}\ln Z_{N,M}\le \frac{1}{NM}\E_{\bZ,\widetilde{\bZ},\bX_0}\ln\widetilde{Z}_{N,M}.
\end{equation}
\end{lemma}
\begin{proof}
We employ an interpolation argument akin to that of \S\ref{s4.1}. Hence, define an interpolating Hamiltonian
\begin{align*}
H_t(\bX)&:=\frac12\Tr\Big( \sqrt{\frac{\lambda }{N}}  \bZ \bX \bX^\intercal  + \frac{\lambda }{N}  \bX_0 \bX_0^\intercal \bX \bX^\intercal  - \frac{\lambda }{2N}   \bX \bX^\intercal \bX \bX^\intercal \Big)
\\& \;\quad+\Tr\left(\sqrt{\eps t}\widetilde{\bZ}\bX^\intercal+\eps t\bX_0\bX^\intercal-\frac{\eps t}{2}\bX\bX^\intercal\right),
\end{align*}
with $\bZ,\bX_0$ as in equation \eqref{spikedwignermodel} and $\widetilde{\bZ}$ as in equation \eqref{Hpert}. The associated interpolating free entropy is
\begin{equation*}
\varphi(t):=\frac{1}{NM}\E_{\bZ,\widetilde{\bZ},\bX_0}\ln\int_{\R^{N\times M}}e^{H_t(\bX)}\,\de\!\pP_{X,N,M}(\bX).
\end{equation*}
Writing
\begin{equation*}
\langle\,\cdot\,\rangle_t:=\frac{1}{\widehat{Z}_{N,t}}\int_{\R^{N\times M}}\cdot\,\,e^{H_t(\bX)}\,\de\!\pP_{X,N,M},\qquad \widehat{Z}_{N,t}:=\int_{\R^{N\times M}} e^{H_t(\bX)}\,\de\!\pP_{X,N,M},
\end{equation*}
we see that
\begin{align*}
\varphi'(t)&=\frac{1}{NM}\E_{\bZ,\widetilde{\bZ},\bX_0}\left\langle\frac{\de}{\de t}H_t(\bX)\right\rangle_t
\\&=\frac{1}{NM}\Tr\E_{\bZ,\widetilde{\bZ},\bX_0}\left\langle\frac{1}{2}\sqrt{\frac{\eps}{t}}\widetilde{\bZ}\bX^\intercal+\eps\bX_0\bX^\intercal-\frac{\eps}{2}\bX\bX^\intercal\right\rangle_t
\\&=\frac{\eps}{2M}\Tr\E_{\bZ,\widetilde{\bZ},\bX_0}\langle\bR_{10}\rangle_t
\\&\ge0,
\end{align*}
with the last two lines following from Gaussian integration by parts with respect to $\widetilde{\bZ}$, the Nishimori identity, and the positive semidefiniteness of
\begin{equation*}
\E_{\bZ,\widetilde{\bZ},\bX_0}\langle\bR_{10}\rangle_t=\frac{1}{N}\E_{\bZ,\widetilde{\bZ},\bX_0}\langle\bX^\intercal\rangle_t\langle\bX\rangle_t.
\end{equation*}
By the definition of $\varphi(t)$, we then have that
\begin{align*}
&\frac{1}{NM}\E_{\bZ,\widetilde{\bZ},\bX_0}\ln\widetilde{Z}_{N,M}-\frac{1}{NM}\E_{\bZ,\bX_0}\ln Z_{N,M}
\\&=\varphi(1)-\varphi(0)
\\&=\int_0^1\varphi'(t)\,\de t
\\&\ge0,
\end{align*}
as desired.
\end{proof}

Since the left-hand side of equation \eqref{ZvsZpert} is independent of $\eps$, it is immediate from integrating both sides over $\eps\in[s_N,2s_N]$ and then taking the limit supremum that
\begin{equation} \label{Lemma9cor}
\limsup_{N\to\infty}F_N(\lambda)\le \limsup_{N\to\infty}\widetilde{F}_N(\lambda).
\end{equation}

\section{Treatment of growing rank: The multiscale cavity method} \label{s5}

The main methodological novelty begins now. In this section, we prove that the limit supremum of the free entropy $\widetilde{F}_N(\lambda)$ of the perturbed model corresponding to the Hamiltonian $\widetilde{H}_{N,\eps}(\bX)$ \eqref{Htilde} is bounded above by the supremum of the rank-one replica symmetric potential $F_1^{\mathrm{RS}}(q,\lambda)$ \eqref{F1RSpot}. By Lemma \ref{Lemma9} above, this is sufficient for proving Proposition \ref{prop3}, which combines with Proposition \ref{prop2} to produce Theorem \ref{thrm1}, as described in \S\ref{s2.3} and detailed in \S\ref{s5.4}.

For convenience, we repeat definitions of the involved Hamiltonians and cavity fields here, taking care to highlight now-relevant dependencies on SNR and various sources of noise. First, we recall that we take $(s_N)_{N\ge1}$ to be a sequence of numbers in $(0,1)$ that decrease slowly to zero as $N\to\infty$ and that for each $N\ge1$, we have $\eps_N\in[s_N,2s_N]$. Then, the Hamiltonian of the size $N$ perturbed model corresponding to the inference problem
\begin{equation*}
\begin{cases}\bY=\sqrt{\frac{\lambda}{N}}\bX_0\bX_0^\intercal+\bZ,\\ \bY^{(\eps_N)}=\sqrt{\eps_N}\bX_0+\widetilde{\bZ},\end{cases}
\end{equation*}
with $\bX_0\sim\pP_{X,N,M}$, $\bZ\in\R^{N\times N}$ Wigner, and $\widetilde{\bZ}\in\R^{N\times M}$ Ginibre is
\begin{equation} \label{Htildevar}
\widetilde{H}_{N,\eps_N}(\bX;\bZ,\widetilde{\bZ},\lambda)=H_N(\bX;\bZ,\lambda)+H_{N,\eps_N}(\bX;\widetilde{\bZ}),
\end{equation}
where the Hamiltonians corresponding to each channel are respectively
\begin{align}
H_N(\bX;\bZ,\lambda)&=\frac{1}{2} \Tr\Big( \sqrt{\frac{ \lambda }{N}}  \bZ \bX \bX^\intercal  + \frac{\lambda}{N}  \bX_0 \bX_0^\intercal \bX \bX^\intercal- \frac{\lambda}{2N}   \bX \bX^\intercal \bX \bX^\intercal  \Big),\nonumber
\\ H_{N,\eps_N}(\bX;\widetilde{\bZ})&=\Tr\left(\sqrt{\eps_N}\widetilde{\bZ}\bX^\intercal+\eps_N\bX_0\bX^\intercal-\frac{\eps_N}{2}\bX\bX^\intercal\right). \label{HNeps5}
\end{align}
The related perturbed free entropy and partition function are then
\begin{align}
\widetilde{F}_N(\lambda)&=\frac{1}{NM}\E\ln\widetilde{Z}_{N,M}, \label{Fnmtilde}
\\ \widetilde{Z}_{N,M}&=\int_{\R^{N\times M}}e^{\widetilde{H}_{N,\eps_N}(\bX;\bZ,\widetilde{\bZ},\lambda)}\,\de\!\pP_{X,N,M}(\bX), \label{Znmtilde}
\end{align}
with the expectation here and henceforth being taken over $\bZ,\widetilde{\bZ},\bX_0$, and uniformly over $\eps_N\in[s_N,2s_N]$. Letting $\bet\in\R^M$, $\bet_0\sim\pP_{X,M}$, $\bxi\sim\mathcal{N}(0,I_N)$, $\widetilde{\bxi}\sim\mathcal{N}(0,I_M)$, and $\xi\sim\mathcal{N}(0,1)$ account for the $(N+1)\textsuperscript{th}$ coordinate, we have that the size $N+1$ equivalent of the Hamiltonian $\widetilde{H}_{N,\eps_N}(\bX;\bZ,\widetilde{\bZ},\lambda)$ is given by
\begin{align}
\widetilde{H}_{N+1,\eps_{N+1}}(\bX,\bet;\bZ,\widetilde{\bZ},\bxi,\widetilde{\bxi},\xi,\lambda)&=H_N'(\bX;\bZ,\lambda)+H_{N,\eps_{N+1}}(\bX;\widetilde{\bZ})+\bet z(\bX;\bxi,\lambda)\nonumber
\\&\qquad+\bet s(\bX;\lambda)\bet^\intercal +m(\bet;\xi,\lambda)+r_{\eps_{N+1}}(\bet;\widetilde{\bxi}), \label{HtildeN+1var}
\end{align}
with cavity fields
\begin{align}
H_N'(\bX;\bZ,\lambda)&=\frac{1}{2}\Tr\Bigg( \sqrt{\frac{ \lambda }{N+1}}  \bZ \bX \bX^\intercal  + \frac{\lambda}{N+1}  \bX_0 \bX_0^\intercal \bX \bX^\intercal\nonumber
\\&\hspace{4.1cm} -\frac{\lambda}{2(N+1)}   \bX \bX^\intercal \bX \bX^\intercal  \Bigg), \label{HN'5}
\\ z(\bX;\bxi,\lambda)&=\sqrt{\frac{\lambda}{N+1}}\bX^\intercal\bxi^\intercal+\frac{\lambda}{N+1}\bX^\intercal\bX_0\bet_0^\intercal, \label{zcavity5}
\\ s(\bX;\lambda)&=-\frac{\lambda}{2(N+1)}\bX^\intercal\bX, \label{scavity5}
\\ m(\bet;\xi,\lambda)&=\frac{1}{2}\left(\sqrt{\frac{\lambda}{N+1}}\xi+\frac{\lambda}{N+1}\bet_0\bet_0^\intercal-\frac{\lambda}{2(N+1)}\bet\bet^\intercal\right)\bet\bet^\intercal, \label{mcavity5}
\\ r_{\eps_{N+1}}(\bet;\widetilde{\bxi})&=\sqrt{\eps_{N+1}}\widetilde{\bxi}\bet^\intercal+\eps_{N+1}\bet_0\bet^\intercal-\frac{\eps_{N+1}}{2}\bet\bet^\intercal. \label{rcavity5}
\end{align}
The perturbed free entropy $\widetilde{F}_{N+1}(\lambda)$ and partition function $\widetilde{Z}_{N+1,M}$ are defined as their size $N$ analogs according to equations \eqref{Fnmtilde} and \eqref{Znmtilde}, but with the expectation $\E[\,\cdot\,]$ now being taken over $\eps_{N+1}\in[s_{N+1},2s_{N+1}]$ instead of $\eps_N$ and over the additional variables $\bxi,\widetilde{\bxi},\xi,\bet_0$.

In \S\ref{s5.1}, we prove Theorem \ref{thrm3}, which shows that $\limsup_{N\to\infty}\widetilde{F}_N(\lambda)$ is bounded above by a convex combination of the limit suprema of $\Delta_N/M$ and $\Delta_M/N$, where
\begin{align}
\Delta_N&=\E\ln\widetilde{Z}_{N+1,M_{N+1}}-\E\ln\widetilde{Z}_{N,M_{N+1}}, \label{DeltaN5}
\\ \Delta_M&=\E\ln\widetilde{Z}_{N,M_{N+1}}-\E\ln\widetilde{Z}_{N,M_N}. \label{DeltaM5}
\end{align}
As $\Delta_N$ corresponds to a cavity in $N$ of fixed rank $M_{N+1}$ and vice versa for $\Delta_M$, this enables us to decompose the problem of dealing with two growing indices $M,N$ into two separate single-index cavity computations. It turns out that said convex combination has the desired upper bound $\sup_{q\in[0,\rho]}F_1^{\mathrm{RS}}(q,\lambda)$ due to both of the involved limit suprema being bounded above by this quantity. This is shown in \S\ref{s5.2} and \S\ref{s5.3}. In the former, we give a detailed presentation of the standard fixed-rank Aizenman--Sims--Starr scheme \cite{ASS}, \cite[\S4.3 \& \S6.2.5]{LelargeMiolane}, while in the latter, we prove the heuristic (valid for $N$ such that $M_{N+1}=M_N+1$)
\begin{align*}
\frac{\Delta_M}{N}&=\frac{1}{N}\E\ln\widetilde{Z}_{N,M_{N+1}}-\frac{1}{N}\E\ln\widetilde{Z}_{N,M_N}
\\&\approx(M_N+1)\sup_{q\in[0,\rho]}F_1^{\mathrm{RS}}(q,\lambda)-M_N\sup_{q\in[0,\rho]}F_1^{\mathrm{RS}}(q,\lambda)=\sup_{q\in[0,\rho]}F_1^{\mathrm{RS}}(q,\lambda).
\end{align*}

\subsection{Free entropy upper bound: Multiscale Aizenman--Sims--Starr scheme} \label{s5.1}

In this subsection, we allow the matrix size $N$ and rank $M$ to grow generically with each other. Our primary goal is to show how the standard Ces\`aro sum extends to the multiscale setting to give
\begin{equation*}
 \limsup_{N\to\infty}\widetilde F_N(\lambda)\leq \alpha\limsup_{N\to\infty}\frac{\Delta_N}{M}+(1-\alpha)\limsup_{N\to\infty}\frac{\Delta_M}{N}
 \end{equation*}
for some $0\le \alpha\le 1$.
 
\begin{proof}[\textbf{Proof of Theorem \ref{thrm3}}]
We focus on proving the upper bound \eqref{DeltaN+M}, since the analogous lower bound \eqref{DeltaN+M_inf} will follow from obvious modifications. Let $\epsilon>0$ and for $1\le n\le N$, define $m_n=M_n$ and its generalized inverse
\begin{equation*}
n_m=\inf\{n\mid m_n\ge m\}
\end{equation*}
in order to denote the $m$ index as a function of $n$ and vice versa; these dependencies are the same as between $M_N$ and $N$. By the definition of the limit supremum, there exists $T=T(\epsilon)$ independent of $N$ such that for all $n\ge T$,
\begin{equation} \label{DeltaNsupbound}
\frac{\widetilde{\Delta}_N(n)}{m_n}\le\limsup_{N\to\infty}\frac{\Delta_N}{M}+\epsilon
\end{equation}
and (considering only $n\ge T$ for which $m_{n+1}\ne m_n$)
\begin{equation} \label{DeltaMsupbound}
\frac{\widetilde{\Delta}_M(n)}{n}\le\limsup_{N\to\infty}\frac{\Delta_M}{N}+\epsilon .
\end{equation}

With $T$ in hand, we may now proceed as in the standard proof of the Aizenman--Sims-Starr scheme. Thus, express the log partition function as the telescoping sum
\begin{align}
\widetilde{F}_N(\lambda) &= \frac{1}{NM} \sum_{n=T}^{N-1} \Big(\E\ln\widetilde{Z}_{n+1,m_{n + 1}} - \E\ln\widetilde{Z}_{n,m_n}\Big)+ \mathrm{o}_N(T) \nonumber
\\&=\frac{1}{NM}\Big(\sum_{n=T}^{N-1}\widetilde{\Delta}_N(n) +\sum_{n=T}^{N-1}\widetilde{\Delta}_M(n)  \Big) + \mathrm{o}_N(T), \label{DeltaDecomposition}
\end{align}
where $\widetilde{Z}_{0,0}\equiv1$ by convention, we write
\begin{align*}
 \widetilde{\Delta}_N(n) &:=\E\ln\widetilde{Z}_{n + 1,m_{n + 1}}-\E\ln\widetilde{Z}_{n,m_{n + 1}},
 \\ \widetilde{\Delta}_M(n) &:=\E\ln\widetilde{Z}_{n,m_{n + 1}}-\E\ln\widetilde{Z}_{n,m_{n}},
 \end{align*}
and the error term is the truncated sum
\begin{equation*}
\mathrm{o}_N(T)=\frac{1}{NM}\sum_{n=0}^{T-1}\left(\widetilde{\Delta}_N(n)+\widetilde{\Delta}_M(n)\right),
\end{equation*}
which vanishes as $N\to\infty$: $T$ is independent of $N$ and for each $0\le n\le T-1$, $\widetilde{\Delta}_N(n)$ and $\widetilde{\Delta}_M(n)$ are finite $(nm_n)$-dimensional integrals that are also independent of $N$.

Notice that $\widetilde{\Delta}_M(n) = 0$ if $m_{n + 1} = m_n$, so keeping only the terms for which the $m_n$ coordinate increases, we read from the bound \eqref{DeltaMsupbound} that
\begin{equation*}
\sum_{n=T}^{N-1}\widetilde{\Delta}_M(n)\le\sum_{m=M_T}^{M_{N-1}}n_m\left(\limsup_{N\to\infty}\frac{\Delta_M}{N}+\epsilon\right).
\end{equation*}
Substituting this and the analogous statement deriving from the bound \eqref{DeltaNsupbound} into equation \eqref{DeltaDecomposition} gives the upper bound
\begin{multline}
    \widetilde{F}_{N}(\lambda) \leq \Big( \frac{1}{NM}\sum_{n=T}^{N - 1} m_n \Big) \Big( \limsup_{N\to\infty}\frac{\Delta_N}{M}  + \epsilon \Big)
    \\\quad +  \Big( \frac{1}{NM}\sum_{m=M_T}^{M_{N - 1}} n_m \Big) \Big( \limsup_{N\to\infty}\frac{\Delta_M}{N} + \epsilon \Big) + \mathrm{o}_N(T) \label{MScavupbd}.
\end{multline}
To obtain agreement with the bound \eqref{DeltaN+M}, we let
\begin{equation*}
\alpha=\limsup_{N\to\infty}\frac{1}{NM}\sum_{n=T}^{N-1} m_n
\end{equation*}
so that the proof of inequality \eqref{DeltaN+M} is completed upon showing that $0\le\alpha\le1$ and that
\begin{equation} \label{1-alpha}
\limsup_{N\to\infty}\frac{1}{NM}\sum_{m=M_T}^{M_{N - 1}} n_m=1-\alpha.
\end{equation}
As $1\le m_n\le M$ for each $1\le n\le N$, it is immediate that $1\le\sum_{n=T}^{N-1} m_n\le NM$, so $0\le\alpha\le1$. To prove the second claim, we first recall Young's inequality for increasing functions \cite{younginequality},
\begin{equation*}
ab\le\int_0^a f(x)\,\mathrm{d}x+\int_0^bf^{-1}(x)\,\mathrm{d}x,\quad a,b>0,
\end{equation*}
where $f$ is differentiable, strictly increasing, and is such that $f(0)=0$. When $b=f(a)$, this inequality is an equality that we may discretize and adopt to our setting to see that
\begin{equation*}
\sum_{n=T}^{N-1}m_n+\sum_{m=m_T}^{M_{N-1}}n_m = NM_{N-1}-T(M_T-1).
\end{equation*}
Normalizing by $NM$ and taking the limit supremum then shows that
\begin{align*}
\alpha+\limsup_{N\to\infty}\frac{1}{NM}\sum_{m=m_T}^{M_{N-1}}n_m &= \limsup_{N\to\infty}\frac{1}{NM}\Big(\sum_{n=T}^{N-1}m_n+\sum_{m=m_T}^{M_{N-1}}n_m\Big)
\\&=\limsup_{N\to\infty}\frac{NM_{N-1}-T(M_T-1)}{NM},
\end{align*}
which equals one, hence equation \eqref{1-alpha} is verified. Taking the limit supremum of both sides of inequality \eqref{MScavupbd} and then taking $\epsilon \to 0$ completes the proof of the upper bound. The lower bound follows from a similar argument to that just presented with 
\begin{equation*}
\beta =\liminf_{N\to\infty}\frac{1}{NM}\sum_{n=T}^{N-1} m_n
\end{equation*}
in place of $\alpha$.

We now show that $\alpha,\beta$ satisfies the relation \eqref{eqalpha} in the case when $M=\lfloor f(N)\rfloor$. Without loss of generality, we modify $f$ so that $f(N)\in\N$ to simplify notation. Noting that $m_n=\lfloor f(n)\rfloor$, using L'H\^opital's rule to simplify the integral upper bound of the normalized sum shows that
\begin{equation*}
\alpha = \limsup_{N\to\infty}\frac{1}{NM}\sum_{n=T}^{N-1}m_n\le\limsup_{N\to\infty}\int_T^N\frac{f(x)}{Nf(N)}\,\mathrm{d}x\le\limsup_{x\to\infty}\frac{f(x)}{f(x)+xf'(x)}.
\end{equation*}
Similarly, consideration of the integral lower bound of the same sum gives
\begin{equation*}
\beta = \liminf_{N\to\infty}\frac{1}{NM}\sum_{n=T}^{N-1}m_n\ge\liminf_{N\to\infty}\int_{T-1}^{N-1}\frac{f(x)}{Nf(N)}\,\mathrm{d}x\ge\liminf_{x\to\infty}\frac{f(x-1)}{f(x)+xf'(x)},
\end{equation*}
concluding the proof.
\end{proof}

As mentioned before, it can be observed that the above proof applies to generic sequences with two growing indices and that no properties of the log partition function are used. If $\widetilde{F}_N(\lambda)$ remains finite in the thermodynamic limit, then $\Delta_{N}$ is an order $M$ quantity and $\Delta_{M}$ is an order $N$ quantity, so the normalized differences in the spin and rank coordinates are both of constant order and both have non-negligible contribution to the limiting value when $\alpha\ne0,1$. In other words, even if $M_N\ll N$ so that adding a column occurs much less frequently than rows when the system size grows, the free entropy difference due to column addition is much larger than the one due to row addition because columns are of length $N$, while rows are of length $M$. These two effects perfectly compensate each other so that both row and column addition contributes the same order in the multiscale Aizenman--Sims--Starr scheme.

As Theorem \ref{thrm3} shows that the effects of incrementing $N$ or $M$ can be isolated to the $\Delta_N$ and $\Delta_M$ cavities, we now move on to the task of showing that the limit suprema of both of these are bounded above by the right-hand side of inequality \eqref{FrenUpperBound}.

\subsection{The standard row cavity bound} \label{s5.2}

As $\Delta_N$ \eqref{DeltaN5} is a difference of two terms that correspond to the same rank $M_{N+1}$, it suffices to bound $\limsup_{N\to\infty}\Delta_N/M$ using the standard fixed-rank arguments of \cite{ASS,LelargeMiolane} while tracking any $M$-dependencies that could become problematic in the sublinear-rank regime. Thus, we begin by showing that $\Delta_N/M$ is bounded above by a simpler analog that is obtained by showing that in the large $N$ limit, 
\begin{itemize}
\item the cavity fields $m(\bet)$ \eqref{mcavity5}, $r_{\eps_{N+1}}(\bet)$ \eqref{rcavity5}, $H_{N,\eps_{N+1}}(\bX)-H_{N,\eps_N}(\bX)$ \eqref{HNeps5} become negligible;
\item the cavity fields $z(\bX)$ \eqref{zcavity5} and $s(\bX)$ \eqref{scavity5} can be replaced by
\begin{align*}
\widehat{z}(\bX;\bxi,\lambda)&:=\sqrt{\frac{\lambda}{N}}\bX^\intercal\bxi^\intercal+\lambda\bR_{10}\bet_0^\intercal,
\\ \widehat{s}(\bX;\lambda)&:=-\frac{\lambda}{2}\bR_{11},
\end{align*}
where we recall that $\bR_{10}=\frac{1}{N}\bX^\intercal\bX_0$ \eqref{overlap} and write $\bR_{11}=\frac{1}{N}\bX_1^\intercal\bX_1=\frac{1}{N}\bX^\intercal\bX$ in accordance with equation \eqref{overlap12};
\item and the Hamiltonian $\widetilde{H}_{N,\eps_N}(\bX)$ is statistically equivalent to
\begin{equation*}
H_N(\bX)+H_{N,\eps_N}(\bX)+\widehat{y}(\bX),
\end{equation*}
where for $\widehat{\bZ}$ an independent copy of $\bZ$,
\begin{equation} \label{yhat}
\widehat{y}(\bX;\bX_0,\widehat{\bZ},\lambda)=\frac{1}{2}\Tr\Bigg(\frac{\sqrt{\lambda}}{N}\widehat{\bZ}\bX\bX^\intercal+\lambda\bR_{10}^\intercal\bR_{10}-\frac{\lambda}{2}\bR_{11}^\intercal\bR_{11}\Bigg).
\end{equation}
\end{itemize}
Then, we use the thermal concentration results of Theorem \ref{thrm4} and Corollary~\ref{Cor2} to show that said upper bound can be simplified even further to be written in terms of the replica symmetric potential $F_M^{\mathrm{RS}}(\bQ,\lambda)$ \eqref{RSpot}.

\begin{lemma}[Standard Aizenman--Sims--Starr Scheme] \label{Lemma10}
Assume hypothesis \ref{H2}, let $\pP_X$ be a centered distribution with bounded fourth moment, let $\bet_0\sim\pP_{X,M}$, let $\widetilde{Z}_{N,M}$ be specified by equation \eqref{Znmtilde}, and let $\widehat{z}(\bX)$, $\widehat{s}(\bX)$, and $\widehat{y}(\bX)$ be as above. Then,
\begin{multline} \label{Lemma10_1}
\frac{\E_{\bxi,\widetilde{\bxi},\xi,\bet_0}\E\ln\widetilde{Z}_{N+1,M}-\E\ln\widetilde{Z}_{N,M}}{M}
\\ \le\frac{1}{M}\E_{\bxi,\bet_0}\E\ln\left\langle\int_{\R^M}e^{\bet\widehat{z}(\bX;\bxi,\lambda)+\bet\widehat{s}(\bX;\lambda)\bet^\intercal}\,\de\!\pP_{X,M}(\bet)\right\rangle_{N,\eps_N}'
\\-\frac{1}{M}\E_{\widehat{\bZ}}\E\ln\left\langle e^{\widehat{y}(\bX;\bX_0,\widehat{\bZ},\lambda)}\right\rangle_{N,\eps_N}'+\Gamma'(N,M),
\end{multline}
where the unspecified expectation $\E[\,\cdot\,]$ is as in equation \eqref{Fnmtilde}, the Gibbs average $\langle\,\cdot\,\rangle_{N,\eps_N}'$ is as specified by equation \eqref{Gibbspert}, and we define, with $C$ some finite positive constant independent of $M,N$,
\begin{equation*}
\Gamma'(N,M):=C\left(\frac{M-N}{N+1}+1+s_N\right),
\end{equation*}
which is $\mathrm{o}_N(1)$ for any $M=\mathrm{o}(N)$.
\end{lemma}
\begin{proof}
We break the proof into two parts. First, we quantify the difference
\begin{multline} \label{ZN+1Mdiff}
\frac{\E_{\bxi,\widetilde{\bxi},\xi,\bet_0}\E\ln\widetilde{Z}_{N+1,M}}{M}
\\-\frac{1}{M}\E_{\bxi,\bet_0}\E\ln\left[Z_{N,\eps_N}'\left\langle\int_{\R^M}e^{\bet\widehat{z}(\bX;\bxi,\lambda)+\bet\widehat{s}(\bX;\lambda)\bet^\intercal}\,\de\!\pP_{X,M}(\bet)\right\rangle_{N,\eps_N}'\right],
\end{multline}
where we recall the definition \eqref{ZNeps'} of the partition function $Z_{N,\eps_N}'$. To this end, we introduce the interpolating Hamiltonian
\begin{multline*}
H_{1,t}(\bX,\bet):=\widetilde{H}_{N+1,ts_{N+1}\widehat{\eps}}(\bX,\bet;\bZ,\widetilde{\bZ},\bxi,\widetilde{\bxi},\xi,t\lambda)+H_N'(\bX;\bU,(1-t)\lambda)
\\+H_{N,(1-t)s_N\widehat{\eps}}(\bX;\widetilde{\bU})+\bet\widehat{z}(\bX;\bmu,(1-t)\lambda)+\bet\widehat{s}(\bX;(1-t)\lambda)\bet^\intercal,
\end{multline*}
with $\bU,\widetilde{\bU},\bmu$ being independent copies of $\bZ,\widetilde{\bZ},\bxi$ and $\widehat{\eps}$ a perturbation parameter distributed uniformly on $[1,2]$ so that $\eps_N=s_N\widehat{\eps}$ in distribution. Letting $\E_1[\,\cdot\,]$ denote expectation with respect to the signal variables $\bX_0,\bet_0$, the perturbation parameter $\widehat{\eps}$, and the noise variables $\bZ,\widetilde{\bZ},\bU,\widetilde{\bU},\bxi,\widetilde{\bxi},\xi,\bmu$, the corresponding free entropy
\begin{equation*}
\varphi_1(t):=\frac{1}{M}\E_1\ln\int_{\R^{(N+1)\times M}}e^{H_{1,t}(\bX,\bet)}\,\de\!\pP_{X,N+1,M}(\bX,\bet)
\end{equation*}
is such that $\varphi_1(1)$ and $\varphi_1(0)$ respectively equal the first and second terms of the difference \eqref{ZN+1Mdiff}. As usual, we proceed by computing the derivative of this free entropy and see upon substituting in equation \eqref{HtildeN+1var} and writing $\langle\,\cdot\,\rangle_{1,t}$ for the Gibbs average with respect to the Hamiltonian $H_{1,t}(\bX,\bet)$ that
\begin{align*}
\varphi_1'(t)&=\frac{1}{M}\E_1\left\langle\frac{\de}{\de t}H_{1,t}(\bX,\bet)\right\rangle_{1,t}
\\&=\frac{1}{M}\E_1\left\langle\frac{\de}{\de t}\left[H_N'(\bX;\bZ,t\lambda)+H_N'(\bX;\bU,(1-t)\lambda)\right]\right\rangle_{1,t}
\\&\quad+\frac{1}{M}\E_1\left\langle\frac{\de}{\de t}\left[H_{N,ts_{N+1}\widehat{\eps}}(\bX;\widetilde{\bZ})+H_{N,(1-t)s_N\widehat{\eps}}(\bX;\widetilde{\bU})\right]\right\rangle_{1,t}
\\&\quad+\frac{1}{M}\E_1\left\langle\frac{\de}{\de t}\bet \left[z(\bX;\bxi,t\lambda)+\widehat{z}(\bX;\bmu,(1-t)\lambda)\right]\right\rangle_{1,t}
\\&\quad+\frac{1}{M}\E_1\left\langle\frac{\de}{\de t}\bet\left[s(\bX;t\lambda)+\widehat{s}(\bX;(1-t)\lambda)\right]\bet^\intercal\right\rangle_{1,t}
\\&\quad+\frac{1}{M}\E_1\left\langle\frac{\de}{\de t}m(\bet;\xi,t\lambda)\right\rangle_{1,t}+\frac{1}{M}\E_1\left\langle\frac{\de}{\de t}r_{ts_{N+1}\widehat{\eps}}(\bet;\widetilde{\bxi})\right\rangle_{1,t}.
\end{align*}
By Gaussian integration by parts and the Nishimori identity, we see that
\begin{equation*}
\frac{1}{M}\E_1\left\langle\frac{\de}{\de t}\left[H_N'(\bX;\bZ,t\lambda)+H_N'(\bX;\bU,(1-t)\lambda)\right]\right\rangle_{1,t}=0,
\end{equation*}
\begin{multline*}
\frac{1}{M}\E_1\left\langle\frac{\de}{\de t}\left[H_{N,ts_{N+1}\widehat{\eps}}(\bX;\widetilde{\bZ})+H_{N,(1-t)s_N\widehat{\eps}}(\bX;\widetilde{\bU})\right]\right\rangle_{1,t}
\\=\frac{N(s_{N+1}-s_N)}{2M}\Tr\E_1\langle\widehat{\eps}\bR_{10}\rangle_{1,t},
\end{multline*}
\begin{multline*}
\frac{1}{M}\E_1\left\langle\frac{\de}{\de t}\bet \left[z(\bX;\bxi,t\lambda)+\widehat{z}(\bX;\bmu,(1-t)\lambda)\right]\right\rangle_{1,t}
\\=-\frac{\lambda}{2M(N+1)}\E_1\left\langle\bet\bR_{10}\bet_0^\intercal+\bet\bR_{11}\bet^\intercal\right\rangle_{1,t},
\end{multline*}
\begin{equation*}
\frac{1}{M}\E_1\left\langle\frac{\de}{\de t}\bet\left[s(\bX;t\lambda)+\widehat{s}(\bX;(1-t)\lambda)\right]\bet^\intercal\right\rangle_{1,t}=\frac{1}{2M(N+1)}\E_1\left\langle\bet\bR_{11}\bet^\intercal\right\rangle_{1,t},
\end{equation*}
\begin{equation*}
\frac{1}{M}\E_1\left\langle\frac{\de}{\de t}m(\bet;\xi,t\lambda)\right\rangle_{1,t}=\frac{\lambda}{4M(N+1)}\E_1\left\langle\bet_0\bet_0^\intercal\bet\bet^\intercal\right\rangle_{1,t},
\end{equation*}
\begin{equation*}
\frac{1}{M}\E_1\left\langle\frac{\de}{\de t}r_{ts_{N+1}\widehat{\eps}}(\bet;\widetilde{\bxi})\right\rangle_{1,t}=\frac{s_{N+1}}{2M}\E_1\left\langle\widehat{\eps}\bet_0\bet^\intercal\right\rangle_{1,t}.
\end{equation*}
Let $\E_1'[\,\cdot\,]$ denote the same expectation as $\E_1[\,\cdot\,]$ but without the average over $\widehat{\eps}$ so that $\E_1[\,\cdot\,]=\E_{\widehat{\eps}}\E_1'[\,\cdot\,]$. Due to the same reasoning as behind the series of inequalities \eqref{Rbound}, we have that
\begin{equation*}
0\le\Tr\E_1'\langle\bR_{10}\rangle_{1,t},\E_1\langle\bet\bR_{10}\bet_0^\intercal\rangle_{1,t},\frac{1}{M}\E_1\langle\bet_0\bet_0^\intercal\bet\bet^\intercal\rangle_{1,t},\E_1'\langle\bet_0\bet^\intercal\rangle_{1,t}\le CM
\end{equation*}
for some finite positive constant $C$ independent of $M,N$ and depending only on properties of $\pP_X$. Then, since $s_N$ is a decreasing sequence, we see that
\begin{equation*}
\frac{N(s_{N+1}-s_N)}{2M}\Tr\E_1\langle\widehat{\eps}\bR_{10}\rangle_{1,t} = \frac{N(s_{N+1}-s_N)}{2M}\E_{\widehat{\eps}}\left[\widehat{\eps}\Tr\E_1'\langle\bR_{10}\rangle_{1,t}\right] \le 0,
\end{equation*}
as we have the product of the expectation of a positive trace and the negative quantity $N(s_{N+1}-s_N)/(2M)$. Likewise, we have that the other terms in $\varphi_1'(t)$ can be bounded according to
\begin{multline*}
-\frac{\lambda}{2M(N+1)}\E_1\left\langle\bet\bR_{10}\bet_0^\intercal+\bet\bR_{11}\bet^\intercal\right\rangle_{1,t}+\frac{1}{2M(N+1)}\E_1\left\langle\bet\bR_{11}\bet^\intercal\right\rangle_{1,t}
\\= -\frac{\lambda}{2M(N+1)}\E_1\left\langle\bet\bR_{10}\bet_0^\intercal\right\rangle_{1,t}\le 0,
\end{multline*}
\begin{equation*}
\frac{\lambda}{4M(N+1)}\E_1\left\langle\bet_0\bet_0^\intercal\bet\bet^\intercal\right\rangle_{1,t} \le \frac{C\lambda M}{4(N+1)},
\end{equation*}
\begin{equation*}
\frac{s_{N+1}}{2M}\E_1\left\langle\widehat{\eps}\bet_0\bet^\intercal\right\rangle_{1,t}=\frac{s_{N+1}}{2M}\E_{\widehat{\eps}}\left[\widehat{\eps}\,\E_1'\langle\bet_0\bet^\intercal\rangle_{1,t}\right] \le \frac{Cs_{N+1}}{2}\E_{\widehat{\eps}}\,\widehat{\eps} \le 3Cs_N.
\end{equation*}
Combining all of this together shows that
\begin{equation*}
\varphi_1'(t)\le\frac{C\lambda M}{4(N+1)}+3Cs_N.
\end{equation*}
Since this is independent of $t$, we have that
\begin{equation} \label{Gprime1}
\varphi_1(1)-\varphi_1(0)=\int_0^1\varphi_1'(t)\,\de t \le \frac{C\lambda M}{4(N+1)}+3Cs_N,
\end{equation}
so the difference \eqref{ZN+1Mdiff} obeys the same upper bound.

We now move on to quantifying the difference
\begin{equation} \label{ZNMdiff}
\frac{1}{M}\E_{\widehat{\bZ}}\E\ln\left[Z_{N,\eps_N}'\left\langle e^{\widehat{y}(\bX;\bX_0,\widehat{\bZ},\lambda)}\right\rangle_{N,\eps_N}'\right]-\frac{1}{M}\E\ln\widetilde{Z}_{N,M}.
\end{equation}
The relevant interpolating Hamiltonian and free entropy are given by
\begin{align*}
H_{2,t}(\bX)&:=H_N'(\bX;\bZ,t\lambda)+H_{N,t\eps_N}(\bX;\widetilde{\bZ})+\widehat{y}(\bX;\bX_0,\widehat{\bZ},t\lambda)
\\&\quad\;+\widetilde{H}_{N,(1-t)\eps_N}(\bX;\bU,\widetilde{\bU},(1-t)\lambda),
\\ \varphi_2(t)&:=\frac{1}{M}\E_2\ln\int_{\R^{N\times M}}e^{H_{2,t}(\bX)}\,\de\!\pP_{X,N,M}(\bX),
\end{align*}
where $\bU,\widetilde{\bU}$ are independent copies of $\bZ,\widetilde{\bZ}$, as before, and $\E_2[\,\cdot\,]$ denotes an expectaton with respect to $\bX_0,\bZ,\widetilde{\bZ},\widehat{\bZ},\bU,\widetilde{\bU}$, and $\eps_N$. Observe that $\varphi_2(1)$ and $\varphi_2(0)$ respectively equal the first and second terms of the difference \eqref{ZNMdiff}. Writing $\langle\,\cdot\,\rangle_{2,t}$ for the Gibbs average with respect to $H_{2,t}(\bX)$ and using equation \eqref{Htildevar} shows that
\begin{align*}
\varphi_2'(t)&=\frac{1}{M}\E_2\left\langle\frac{\de}{\de t}H_{2,t}(\bX)\right\rangle_{2,t}
\\&=\frac{1}{M}\E_2\left\langle\frac{\de}{\de t}\left[H_N'(\bX;\bZ,t\lambda)+H_N(\bX;\bU,(1-t)\lambda)\right]\right\rangle_{2,t}
\\&\quad+\frac{1}{M}\E_2\left\langle\frac{\de}{\de t}\left[H_{N,t\eps_N}(\bX;\widetilde{\bZ})+H_{N,(1-t)\eps_N}(\bX;\widetilde{\bU})\right]\right\rangle_{2,t}
\\&\quad+\frac{1}{M}\E_2\left\langle\frac{\de}{\de t}\widehat{y}(\bX;\bX_0,\widehat{\bZ},t\lambda)\right\rangle_{2,t}.
\end{align*}
By Gaussian integration by parts and the Nishimori identity, we see that
\begin{multline*}
\frac{1}{M}\E_2\left\langle\frac{\de}{\de t}\left[H_N'(\bX;\bZ,t\lambda)+H_N(\bX;\bU,(1-t)\lambda)\right]\right\rangle_{2,t}
\\=-\frac{\lambda N}{4(N+1)M}\Tr\E_2\left\langle\bR_{10}^\intercal\bR_{10}\right\rangle_{2,t},
\end{multline*}
\begin{equation*}
\\ \frac{1}{M}\E_2\left\langle\frac{\de}{\de t}\left[H_{N,t\eps_N}(\bX;\widetilde{\bZ})+H_{N,(1-t)\eps_N}(\bX;\widetilde{\bU})\right]\right\rangle_{2,t}=0,
\end{equation*}
\begin{equation*}
\frac{1}{M}\E_2\left\langle\frac{\de}{\de t}\widehat{y}(\bX;\bX_0,\widehat{\bZ},t\lambda)\right\rangle_{2,t}=\frac{\lambda}{4M}\Tr\E_2\left\langle\bR_{10}^\intercal\bR_{10}\right\rangle_{2,t}.
\end{equation*}
Combining these equalities shows that
\begin{equation*}
\varphi_2'(t) = \frac{\lambda}{4M}\left(1-\frac{N}{N+1}\right)\Tr\E_2\left\langle\bR_{10}^\intercal\bR_{10}\right\rangle_{2,t}.
\end{equation*}
Extending the logic behind the bounds \eqref{Rbound} shows that
\begin{multline*}
0\le \Tr\E_2\left\langle\bR_{10}^\intercal\bR_{10}\right\rangle_{2,t} \le \frac{1}{N^2}\Tr\E_2\left[\bX_0^\intercal\bX_0\bX_0^\intercal\bX_0\right]
\\ =\frac{1}{N^2}\sum_{i,j=1}^N\sum_{\ell,\ell'=1}^M\E_2\left[\bx_{0,i\ell}\bx_{0,i\ell'}\bx_{0,j\ell'}\bx_{0,j\ell}\right] \le C\left(M+\tfrac{M^2}{N}\right),
\end{multline*}
where the last inequality follows from the fact that, since the entries of $\bX_0$ are i.i.d. with bounded fourth moments, $\E_2\left[\bx_{0,i\ell}\bx_{0,i\ell'}\bx_{0,j\ell'}\bx_{0,j\ell}\right]$ is given by some finite positive constant $C$ if either $i=j$ or $\ell=\ell'$ and is otherwise zero -- the summand is non-zero for $\mathrm{o}(MN^2+M^2N)$ combinations of $i,j,\ell,\ell'$. As $M\le N$, we hence see that
\begin{equation*}
\varphi_2'(t) \le \frac{C\lambda}{4}\left(1-\frac{N}{N+1}\right)
\end{equation*}
and integrating this over $t$ shows that the difference \eqref{ZNMdiff} also obeys this upper bound. Summing this with the upper bound \eqref{Gprime1} on the difference \eqref{ZN+1Mdiff} produces $\Gamma'(N,M)$. Thus, we are done upon noting that $\E_{\bxi,\bet_0}\E\ln Z_{N,\eps_N}'=\E_{\widehat{\bZ}}\E\ln Z_{N,\eps_N}'$.
\end{proof}

Lemma \ref{Lemma10} displays a similarity in structure to the replica symmetric potential $F_M^{\mathrm{RS}}(\bQ,\lambda)$ \eqref{RSpot}. Our goal now is to refine this observation. Writing
\begin{equation} \label{HMNRS}
\widehat{H}_{M,N}(\bA,\bB,\bx;\bz^{(N)},\lambda):=\sqrt{\frac{\lambda}{N}}\bx^\intercal\bA^\intercal\bz^{(N)}+\frac{\lambda}{N}\bx^\intercal\bA^\intercal\bB\bx_0-\frac{\lambda}{2N}\bx^\intercal\bA^\intercal\bA\bx
\end{equation}
for $\bA,\bB\in\R^{N\times M}$, $\bx\in\R^M$, $\bz^{(N)}\sim\mathcal{N}(0,I_N)$, and $\bx_0\sim\pP_{X,M}$, one observes upon replacing $\bxi^\intercal,\bet^\intercal,\bet_0^\intercal$ respectively by $\bz^{(N)},\bx,\bx_0$ that the first term in the right-hand side of inequality \eqref{Lemma10_1} is
\begin{equation*}
\frac{1}{M}\E_{\bz^{(N)},\bx_0}\E\ln\left\langle\int_{\R^M}e^{\widehat{H}_{M,N}(\bX,\bX_0,\bx;\bz^{(N)},\lambda)}\,\de\!\pP_{X,M}(\bx)\right\rangle_{N,\eps}'.
\end{equation*}
On the other hand, further replacing $\bX$ and $\bX_0$ by $\langle\bX\rangle_{N,\eps}'$ in the integrand produces
\begin{align}
&\frac{1}{M}\E_{\bz^{(N)},\bx_0}\E\ln\left\langle\int_{\R^M}e^{\widehat{H}_{M,N}(\langle\bX\rangle_{N,\eps}',\langle\bX\rangle_{N,\eps}',\bx;\bz^{(N)},\lambda)}\,\de\!\pP_{X,M}(\bx)\right\rangle_{N,\eps}' \nonumber
\\=&\frac{1}{M}\E_{\bz^{(N)},\bx_0}\E\ln\int_{\R^M}e^{\widehat{H}_{M,N}(\langle\bX\rangle_{N,\eps}',\langle\bX\rangle_{N,\eps}',\bx;\bz^{(N)},\lambda)}\,\de\!\pP_{X,M}(\bx) \label{HMNRS1}
\\=&\frac{1}{M}\E_{\bz^{(M)},\bx_0}\E\ln\int_{\R^M}e^{\widehat{H}_{M,M}(\sqrt{\bT},\sqrt{\bT},\bx;\bz^{(M)},\lambda N)}\,\de\!\pP_{X,M}(\bx), \label{HMMRS}
\end{align}
where we have defined
\begin{equation} \label{Tdef}
\bT:=\langle\bR_{12}\rangle_{N,\eps}'=\frac{1}{N}\langle\bX^\intercal\rangle_{N,\eps}'\langle\bX\rangle_{N,\eps}'=\frac{1}{N}\sum_{i=1}^N\langle\bx_i\rangle_{N,\eps}'\langle\bx_i^\intercal\rangle_{N,\eps}'.
\end{equation}
Here, equation \eqref{HMNRS1} is due to the fact that the integral is independent of $\bX$, which is what the Gibbs average $\langle\,\cdot\,\rangle_{N,\eps}'$ is over, while equation \eqref{HMMRS} follows from the fact that the Gaussian vectors $\langle\bX^\intercal\rangle_{N,\eps}'\bz^{(N)}$ and $\sqrt{N\bT}\bz^{(M)}$ are equal in distribution, since $\bz^{(N)},\bz^{(M)}$ are independent of $\langle\bX\rangle_{N,\eps}'$. 

Similarly, substituting $\bX$ and $\bX_0$ by $\langle\bX\rangle_{N,\eps}'$ in the exponent of the second term in the right-hand side of inequality~\eqref{Lemma10_1} yields the simplification
\begin{align*}
\frac{1}{M}\E_{\widehat{\bZ}}\E\ln\left\langle e^{\widehat{y}(\langle\bX\rangle_{N,\eps}';\langle\bX\rangle_{N,\eps}',\widehat{\bZ},\lambda)}\right\rangle_{N,\eps}'&=\frac{1}{M}\E_{\widehat{\bZ}}\E\left[\widehat{y}(\langle\bX\rangle_{N,\eps}';\langle\bX\rangle_{N,\eps}',\widehat{\bZ},\lambda)\right]
\\&=\frac{\lambda}{4M}\Tr\E\,\bT^2,
\end{align*}
where the first equality follows as above, while the second is due to the independence of $\widehat{\bZ}$ and $\langle\bX\rangle_{N,\eps}'$. Combining this with equation~\eqref{HMMRS} shows that replacing $\bX$ and $\bX_0$ by $\langle\bX\rangle_{N,\eps}'$ within the averages in the right-hand side of inequality \eqref{Lemma10_1} produces the averaged replica symmetric potential $\E F_M^{\mathrm{RS}}(\bT,\lambda)$. Connection with Theorem~\ref{thrm1} is made upon noting that $\bT$ is symmetric positive semidefinite, replacing the average with a supremum, and using Theorem \ref{thrm2} to establish that
\begin{equation*}
\E F_M^{\mathrm{RS}}(\bT,\lambda) \le \sup_{\bQ\in\cS_M}F_M^{\mathrm{RS}}(\bQ,\lambda)=\sup_{q\in[0,\rho]}F_1^{\mathrm{RS}}(q,\lambda).
\end{equation*}

The driving mechanism behind the above is the action of replacing $\bX$ and $\bX_0$ by $\langle\bX\rangle_{N,\eps}'$. This is allowed, with a negligible cost, due to the concentration results proved in \S\ref{s4.2}. We now make this precise through a slight rewriting of the arguments of \cite{LelargeMiolane}.

\begin{lemma}[First application of overlap concentration] \label{Lemma11}
Assume hypotheses \ref{H2} and \ref{H3}. Define
\begin{align*}
U_1&:=\left\langle\int_{\R^M}e^{\widehat{H}_{M,N}(\bX,\bX_0,\bx;\bz,\lambda)}\,\de\!\pP_{X,M}(\bx)\right\rangle_{N,\eps}',
\\ V_1&:=\int_{\R^M}e^{\widehat{H}_{M,N}(\langle\bX\rangle_{N,\eps}',\langle\bX\rangle_{N,\eps}',\bx;\bz,\lambda)}\,\de\!\pP_{X,M}(\bx)
\end{align*}
with $\bz\sim\mathcal{N}(0,I_N)$, $\widehat{H}_{M,N}(\bA,\bB,\bx;\bz,\lambda)$ as specified by equation \eqref{HMNRS}, and the Gibbs average $\langle\,\cdot\,\rangle_{N,\eps_N}'$ as in equation \eqref{Gibbspert}. Then, there is a positive constant $C$ independent of $M,N$ such that
\begin{equation*}
\left|\E'\E_{\bz}\ln U_1-\E'\E_{\bz}\ln V_1\right| \le C\sqrt{M}e^{CM^2}\Gamma(N,M)^{1/4},
\end{equation*}
where $\Gamma(N,M)=CM^2/\sqrt{N s_N}$ is the rate of concentration given in Theorem \ref{thrm4} and we write $\E'[\,\cdot\,]$ for the expectation over $\bZ,\widetilde{\bZ},\bX_0,\bx_0$, and uniformly over $\eps\in[s_N,2s_N]$.
\end{lemma}
\begin{proof}
First observe that since $|\ln(x)-\ln(y)|\le\max\{x^{-1},y^{-1}\}\,|x-y|$, we have by Jensen's inequality and the Cauchy--Schwarz inequality that
\begin{align}
&\left|\E'\E_{\bz}\ln U_1-\E'\E_{\bz}\ln V_1\right| \nonumber
\\&\le \E'\E_{\bz}\left|\ln U_1-\ln V_1\right| \nonumber
\\&\le\E'\sqrt{\E_{\bz}[U_1^{-2}+V_1^{-2}]}\sqrt{\E_{\bz}[(U_1-V_1)^2]} \nonumber
\\&\le\E'\sqrt{\E_{\bz}[U_1^{-2}+V_1^{-2}]}\sqrt{2|\E_{\bz}U_1^2-\E_{\bz}V_1^2|+2|\E_{\bz}[U_1V_1]-\E_{\bz}V_1^2|}. \label{Lem11_0}
\end{align}
Thus, we bound each of the terms $\E_{\bz}[U_1^{-2}+V_1^{-2}]$, $|\E_{\bz}U_1^2-\E_{\bz}V_1^2|$, $|\E_{\bz}[U_1V_1]-\E_{\bz}V_1^2|$ in a series of three steps and then simplify the above using Corollary \ref{Cor2} of Theorem \ref{thrm4}.

\textit{Step 1:} Applying Jensen's inequality shows that
\begin{equation*}
U_1\ge\int_{\R^M}e^{\langle\widehat{H}_{M,N}(\bX,\bX_0,\bx;\bz,\lambda)\rangle_{N,\eps}'}\,\de\!\pP_{X,M}(\bx).
\end{equation*}
Thus, using Jensen's inequality again shows that
\begin{equation*}
U_1^{-2}\le\int_{\R^M}e^{-2\langle\widehat{H}_{M,N}(\bX,\bX_0,\bx;\bz,\lambda)\rangle_{N,\eps}'}\,\de\!\pP_{X,M}(\bx).
\end{equation*}
Now, using the fact that $\E_ze^{az}=e^{a^2/2}$ for $z\sim\mathcal{N}(0,1)$ and $a$ independent of $z$ gives
\begin{equation*}
\E_{\bz}U_1^{-2}\le\int_{\R^M}e^{2\lambda\bx^\intercal\bT\bx-\lambda\langle2\bx^\intercal\bR_{10}\bx_0-\bx^\intercal\bR_{11}\bx\rangle_{N,\eps}'}\,\de\!\pP_{X,M}(\bx),
\end{equation*}
where we recall that $\bT=\frac{1}{N}\langle\bX^\intercal\rangle_{N,\eps}'\langle\bX\rangle_{N,\eps}'$ \eqref{Tdef}. Since $\pP_X$ has $D$-bounded support, the entries of $\bX,\bX_0$ lie in $[-D,D]$, so $|\bx^\intercal\bT\bx|,|\bx^\intercal\bR_{10}\bx_0|,|\bx^\intercal\bR_{11}\bx|\le M^2D^4$ and we obtain the upper bound $\E_{\bz}U_1^{-2}\le e^{5\lambda M^2D^4}$. The exact same argument shows that we also have $\E_{\bz}V_1^{-2}\le e^{5\lambda M^2D^4}$, hence
\begin{equation} \label{Lem11_1}
\E_{\bz}[U_1^{-2}+V_1^{-2}] \le 2e^{5\lambda M^2D^4}.
\end{equation}

\textit{Step 2:} Letting $\langle\,\cdot\,\rangle_{N,\eps}'$ now refer to a Gibbs average with respect to the replicas $\bX_1,\bX_2$ of $\bX$, we have that
\begin{align*}
\E_{\bz}U_1^2&=\E_{\bz}\left\langle\int_{\R^M\times\R^M}e^{\widehat{H}_{M,N}(\bX_1,\bX_0,\bx_1;\bz,\lambda)+\widehat{H}_{M,N}(\bX_2,\bX_0,\bx_2;\bz,\lambda)}\,\de\!\pP_{X,M}^{\otimes 2}(\bx_1,\bx_2)\right\rangle_{N,\eps}'
\\&=\left\langle\int_{\R^M\times\R^M}e^{\lambda(\bx_1^\intercal\bR_{10}\bx_0+\bx_2^\intercal\bR_{20}\bx_0+\bx_1^\intercal\bR_{12}\bx_2)}\,\de\!\pP_{X,M}^{\otimes 2}(\bx_1,\bx_2)\right\rangle_{N,\eps}',
\end{align*}
where we have used the fact that
\begin{equation*}
\E_{\bz}\exp\left(\sqrt{\frac{\lambda}{N}}\left(\bX_1\bx_1+\bX_2\bx_2\right)^\intercal\bz\right)=\exp\left(\frac{\lambda}{2N}\lVert\bX_1\bx_1+\bX_2\bx_2\rVert_{\mathrm{F}}^2\right).
\end{equation*}
Defining $F_1:([-D^2,D^2]^{M\times M})^3\to\R$ by
\begin{equation*}
F_1(\bA,\bB,\bC):=\int_{\R^M\times\R^M}e^{\lambda(\bx_1^\intercal\bA\bx_0+\bx_2^\intercal\bB\bx_0+\bx_1^\intercal\bC\bx_2)}\,\de\!\pP_{X,M}^{\otimes 2}(\bx_1,\bx_2),
\end{equation*}
we observe that
\begin{equation*}
\E_{\bz}U_1^2=\langle F_1(\bR_{10},\bR_{20},\bR_{12})\rangle_{N,\eps}'.
\end{equation*}
Likewise, we have that
\begin{equation*}
\E_{\bz}V_1^2= F_1(\bT,\bT,\bT).
\end{equation*}
As $\pP_X$ has $D$-bounded support, we compute for all $\bA,\bB,\bC\in[-D^2,D^2]^{M\times M}$ that
\begin{equation*}
\lVert\nabla F_1(\bA,\bB,\bC)\rVert_{\mathrm{F}}\le\sqrt{3}\lambda M D^2e^{3\lambda M^2D^4}.
\end{equation*}
Thus, $F_1(\bA,\bB,\bC)$ is $(C_1 Me^{C_2M^2})$-Lipschitz for some constants $C_1,C_2$ independent of $M,N$. Hence, using Jensen's inequality and our Lipschitz bound shows that
\begin{align}
|\E_{\bz}U_1^2-\E_{\bz}V_1^2|&=|\langle F_1(\bR_{10},\bR_{20},\bR_{12})\rangle_{N,\eps}'-F_1(\bT,\bT,\bT)| \nonumber
\\&\le\langle|F_1(\bR_{10},\bR_{20},\bR_{12})-F_1(\bT,\bT,\bT)|\rangle_{N,\eps}' \nonumber
\\&\le C_1 Me^{C_2M^2}\left\langle\sqrt{\lVert\bR_{10}-\bT\rVert_{\mathrm{F}}^2+\lVert\bR_{20}-\bT\rVert_{\mathrm{F}}^2+\lVert\bR_{12}-\bT\rVert_{\mathrm{F}}^2}\right\rangle_{N,\eps}'. \label{Lem11_2}
\end{align}

\textit{Step 3:} From the same reasoning as in the above step, we see that 
\begin{align*}
&\E_{\bz}[U_1V_1]
\\&=\E_{\bz}\left\langle\int_{\R^M\times\R^M}e^{\widehat{H}_{M,N}(\bX,\bX_0,\bx_1;\bz,\lambda)+\widehat{H}_{M,N}(\langle\bX\rangle_{N,\eps}',\langle\bX\rangle_{N,\eps}',\bx_2;\bz,\lambda)}\,\de\!\pP_{X,M}^{\otimes 2}(\bx_1,\bx_2)\right\rangle_{N,\eps}'
\\&=\langle F_1(\bR_{10},\bT,\tfrac{1}{N}\bX^\intercal\langle\bX\rangle_{N,\eps}')\rangle_{N,\eps}'.
\end{align*}
Hence, using Jensen's inequality and our Lipschitz bound on $F_1(\bA,\bB,\bC)$ yet again shows that
\begin{align}
|\E_{\bz}[U_1V_1]-\E_{\bz}V_1^2|&=|\langle F_1(\bR_{10},\bT,\tfrac{1}{N}\bX^\intercal\langle\bX\rangle_{N,\eps}')\rangle_{N,\eps}'-F_1(\bT,\bT,\bT)| \nonumber
\\&\le\langle|F_1(\bR_{10},\bT,\tfrac{1}{N}\bX^\intercal\langle\bX\rangle_{N,\eps}')-F_1(\bT,\bT,\bT)|\rangle_{N,\eps}' \nonumber
\\&\le C_1Me^{C_2M^2}\left\langle\sqrt{\lVert\bR_{10}-\bT\rVert_{\mathrm{F}}^2+\lVert\tfrac{1}{N}\bX^\intercal\langle\bX\rangle_{N,\eps}'-\bT\rVert_{\mathrm{F}}^2}\right\rangle_{N,\eps}'. \label{Lem11_3}
\end{align}

\textit{Step 4:} Applying Jensen's inequality and Corollary \ref{Cor2} to inequality \eqref{Lem11_2} shows that
\begin{align}
\E'|\E_{\bz}U_1^2-\E_{\bz}V_1^2|&\le C_1Me^{C_2M^2}\sqrt{\E'\langle\lVert\bR_{10}-\bT\rVert_{\mathrm{F}}^2+\lVert\bR_{20}-\bT\rVert_{\mathrm{F}}^2+\lVert\bR_{12}-\bT\rVert_{\mathrm{F}}^2\rangle_{N,\eps}'} \nonumber
\\&\le C_1Me^{C_2M^2}\sqrt{3\Gamma(N,M)}, \label{Lem11_4.1}
\end{align}
where we have used the Nishimori identity to replace $\bR_{20}$ by $\bR_{10}$. Doing the same to inequality \eqref{Lem11_3} shows that
\begin{align}
\E'|\E_{\bz}[U_1V_1]-\E_{\bz}V_1^2|&\le C_1Me^{C_2M^2}\sqrt{\Gamma(N,M)+\E'\langle\lVert\tfrac{1}{N}\bX^\intercal\langle\bX\rangle_{N,\eps}'-\bT\rVert_{\mathrm{F}}^2\rangle_{N,\eps}'} \nonumber
\\&\le C_1Me^{C_2M^2}\sqrt{2\Gamma(N,M)}, \label{Lem11_4.2}
\end{align}
where in the last line we have used the fact that, if we let $\langle\,\cdot\,\rangle$ denote the Gibbs average $\langle\,\cdot\,\rangle_{N,\eps}'$ with respect to only $\bX_2$,
\begin{align*}
\E'\langle\lVert\tfrac{1}{N}\bX^\intercal\langle\bX\rangle_{N,\eps}'-\bT\rVert_{\mathrm{F}}^2\rangle_{N,\eps}'&=\E'\langle\lVert\langle\tfrac{1}{N}\bX_1^\intercal\bX_2-\bT\rangle\rVert_{\mathrm{F}}^2\rangle_{N,\eps}'
\\&\le \E'\langle\langle\lVert\bR_{12}-\bT\rVert_{\mathrm{F}}^2\rangle\rangle_{N,\eps}'
\\&=\E'\langle\lVert\bR_{12}-\bT\rVert_{\mathrm{F}}^2\rangle_{N,\eps}'
\\&\le \Gamma(N,M).
\end{align*}

Returning to inequality \eqref{Lem11_0}, inserting inequality \eqref{Lem11_1}, using Jensen's inequality, and then finally using inequalities \eqref{Lem11_4.1}, \eqref{Lem11_4.2}, we see that there is a finite positive constant $C$ independent of $M,N$ such that
\begin{align*}
&\left|\E'\E_{\bz}\ln U_1-\E'\E_{\bz}\ln V_1\right|
\\&\le Ce^{CM^2}\E'\sqrt{2|\E_{\bz}U_1^2-\E_{\bz}V_1^2|+2|\E_{\bz}[U_1V_1]-\E_{\bz}V_1^2|}
\\&\le Ce^{CM^2}\sqrt{\E'|\E_{\bz}U_1^2-\E_{\bz}V_1^2|+\E'|\E_{\bz}[U_1V_1]-\E_{\bz}V_1^2|}
\\&\le C\sqrt{M}e^{CM^2}\Gamma(N,M)^{1/4},
\end{align*}
as desired.
\end{proof}

\begin{lemma}[Second application of overlap concentration] \label{Lemma12}
Assume hypotheses \ref{H2} and \ref{H3}. Define
\begin{align*}
U_2&:=\left\langle e^{\widehat{y}(\bX;\bX_0,\widehat{\bZ},\lambda)}\right\rangle_{N,\eps}',
\\ V_2&:= e^{\widehat{y}(\langle\bX\rangle_{N,\eps}';\langle\bX\rangle_{N,\eps}',\widehat{\bZ},\lambda)}
\end{align*}
with $\widehat{y}(\bX;\bX_0,\widehat{\bZ},\lambda)$ as specified by equation \eqref{yhat} and the Gibbs average $\langle\,\cdot\,\rangle_{N,\eps_N}'$ as in equation \eqref{Gibbspert}. Then, there is a positive constant $C$ independent of $M,N$ such that
\begin{equation*}
\left|\E'\E_{\widehat{\bZ}}\ln U_2-\E'\E_{\widehat{\bZ}}\ln V_2\right| \le C\sqrt{M}e^{CM^2}\Gamma(N,M)^{1/4},
\end{equation*}
where $\Gamma(N,M)=CM^2/\sqrt{N s_N}$ is the rate of concentration given in Theorem \ref{thrm4} and $\E'[\,\cdot\,]$ retains its meaning from Lemma \ref{Lemma11}.
\end{lemma}
\begin{proof}
The structure of the proof is identical to that of Lemma \ref{Lemma11}. Indeed, as before, we have that
\begin{multline} \label{Lem12_0}
|\E'\E_{\widehat{\bZ}}\ln U_2-\E'\E_{\widehat{\bZ}}\ln V_2|
\\ \le\E'\sqrt{\E_{\widehat{\bZ}}[U_2^{-2}+V_2^{-2}]}\sqrt{2|\E_{\widehat{\bZ}}U_2^2-\E_{\widehat{\bZ}}V_2^2|+2|\E_{\widehat{\bZ}}[U_2V_2]-\E_{\widehat{\bZ}}V_2^2|}
\end{multline}
and we proceed by bounding each of the terms $\E_{\widehat{\bZ}}[U_2^{-2}+V_2^{-2}]$, $|\E_{\widehat{\bZ}}U_2^2-\E_{\widehat{\bZ}}V_2^2|$, $|\E_{\widehat{\bZ}}[U_2V_2]-\E_{\widehat{\bZ}}V_2^2|$ in three steps analogous to those of the proof of Lemma \ref{Lemma11}, before applying Corollary \ref{Cor2}.

\textit{Step 1:} Using Jensen's inequality shows that
\begin{equation*}
U_2^{-2}\le e^{-2\langle\widehat{y}(\bX;\bX_0,\widehat{\bZ},\lambda)\rangle_{N,\eps}'},\qquad V_2^{-2}\le e^{-2\widehat{y}(\langle\bX\rangle_{N,\eps}';\langle\bX\rangle_{N,\eps}',\widehat{\bZ},\lambda)}.
\end{equation*}
Then, since
\begin{equation} \label{Gaussianint}
\E_{\widehat{\bZ}}e^{\Tr(\widehat{\bZ}\bA)}=e^{\Tr\bA^2}
\end{equation}
for symmetric $\bA\in\R^{N\times N}$ independent of $\widehat{\bZ}$, we have that
\begin{align*}
\E_{\widehat{\bZ}}U_2^{-2}&\le e^{\lambda\Tr\left(\langle\bR_{11}^\intercal\rangle_{N,\eps}'\langle\bR_{11}\rangle_{N,\eps}'-\langle\bR_{10}^\intercal\bR_{10}\rangle_{N,\eps}'+\tfrac{1}{2}\langle\bR_{11}^\intercal\bR_{11}\rangle_{N,\eps}'\right)},
\\ \E_{\widehat{\bZ}}V_2^{-2}&\le e^{\tfrac{\lambda}{2}\Tr\bT^2},
\end{align*}
where we recall that $\bT=\frac{1}{N}\langle\bX^\intercal\rangle_{N,\eps}'\langle\bX\rangle_{N,\eps}'$ \eqref{Tdef}. As the entries of $\bX,\bX_0$ lie in $[-D,D]$, we have that
\begin{equation*}
|\Tr(\langle\bR_{11}^\intercal\rangle_{N,\eps}'\langle\bR_{11}\rangle_{N,\eps}')|, |\Tr(\bR_{10}^\intercal\bR_{10})|,|\Tr(\bR_{11}^\intercal\bR_{11})|,|\Tr\bT^2|\le M^2D^4
\end{equation*}
and thus
\begin{equation} \label{Lem12_1}
\E_{\widehat{\bZ}}[U_2^{-2}+V_2^{-2}]\le 2e^{\frac{5}{2}\lambda M^2D^4}.
\end{equation}

\textit{Step 2:} Letting $\langle\,\cdot\,\rangle_{N,\eps}'$ denote the Gibbs average with respect to the replicas $\bX_1,\bX_2$ of $\bX$, we have
\begin{align*}
\E_{\widehat{\bZ}}U_2^2&=\E_{\widehat{\bZ}}\left\langle e^{\widehat{y}(\bX_1;\bX_0,\widehat{\bZ},\lambda)+\widehat{y}(\bX_2;\bX_0,\widehat{\bZ},\lambda)}\right\rangle_{N,\eps}'
\\&=\langle F_2(\bR_{10},\bR_{20},\bR_{12})\rangle_{N,\eps}',
\end{align*}
where we have again used identity \eqref{Gaussianint} and defined $F_2:([-D,D]^{M\times M})^3\to\R$ by
\begin{equation*}
F_2(\bA,\bB,\bC):=e^{\frac{\lambda}{2}\Tr\left(\bA^\intercal\bA+\bB^\intercal\bB+\bC^\intercal\bC\right)}.
\end{equation*}
Note also that
\begin{align*}
\E_{\widehat{\bZ}}V_2^2&=\E_{\widehat{\bZ}}e^{2\widehat{y}(\langle\bX\rangle_{N,\eps}';\langle\bX\rangle_{N,\eps}',\widehat{\bZ},\lambda)}
\\&=e^{\frac{3\lambda}{2}\Tr\bT^\intercal\bT}
\\&=F_2(\bT,\bT,\bT).
\end{align*}
Now, on the domain of $F_2$, we have that
\begin{equation*}
\lVert\nabla F_2(\bA,\bB,\bC)\rVert_{\mathrm{F}} \le \frac{\sqrt{3}}{2}\lambda MD^2e^{\frac{3}{2}\lambda M^2D^4}.
\end{equation*}
Thus, $F_2(\bA,\bB,\bC)$ is $(C_1Me^{C_2M^2})$-Lipschitz for some constants $C_1,C_2$ independent of $M,N$. Hence, using Jensen's inequality shows that
\begin{align}
|\E_{\widehat{\bZ}}U_2^2-\E_{\widehat{\bZ}}V_2^2|&=|\langle F_2(\bR_{10},\bR_{20},\bR_{12})\rangle_{N,\eps}'-F_2(\bT,\bT,\bT)| \nonumber
\\&\le \langle |F_2(\bR_{10},\bR_{20},\bR_{12})-F_2(\bT,\bT,\bT)|\rangle_{N,\eps}' \nonumber
\\&\le C_1Me^{C_2M^2}\left\langle\sqrt{\lVert\bR_{10}-\bT\rVert_{\mathrm{F}}^2+\lVert\bR_{20}-\bT\rVert_{\mathrm{F}}^2+\lVert\bR_{12}-\bT\rVert_{\mathrm{F}}^2}\right\rangle_{N,\eps}'. \label{Lem12_2}
\end{align}

\textit{Step 3:} Similarly to above, we see that
\begin{align*}
\E_{\widehat{\bZ}}[U_2V_2]&=\E_{\widehat{\bZ}}\left\langle e^{\widehat{y}(\bX;\bX_0,\widehat{\bZ},\lambda)+\widehat{y}(\langle\bX\rangle_{N,\eps}';\langle\bX\rangle_{N,\eps}',\widehat{\bZ},\lambda)}\right\rangle_{N,\eps}'
\\&=\langle F_2(\bR_{10},\bT,\tfrac{1}{N}\bX^\intercal\langle\bX\rangle_{N,\eps}')\rangle_{N,\eps}'.
\end{align*}
Hence, using Jensen's inequality and inserting our Lipschitz constant shows that
\begin{align}
|\E_{\widehat{\bZ}}[U_2V_2]-\E_{\widehat{\bZ}}V_2^2|&=|\langle F_2(\bR_{10},\bT,\tfrac{1}{N}\bX^\intercal\langle\bX\rangle_{N,\eps}')\rangle_{N,\eps}'-F_2(\bT,\bT,\bT)| \nonumber
\\&\le \langle |F_2(\bR_{10},\bT,\tfrac{1}{N}\bX^\intercal\langle\bX\rangle_{N,\eps}')-F_2(\bT,\bT,\bT)|\rangle_{N,\eps}' \nonumber
\\&\le C_1Me^{C_2M^2}\left\langle\sqrt{\lVert\bR_{10}-\bT\rVert_{\mathrm{F}}^2+\lVert\tfrac{1}{N}\bX^\intercal\langle\bX\rangle_{N,\eps}'-\bT\rVert_{\mathrm{F}}^2}\right\rangle_{N,\eps}'. \label{Lem12_3}
\end{align}

\textit{Step 4:} As inequalities \eqref{Lem12_2}, \eqref{Lem12_3} are equivalent to inequalities \eqref{Lem11_2}, \eqref{Lem11_3}, we may read off from Step 4 of the proof of Lemma \ref{Lemma11} that
\begin{align*}
\E'|\E_{\widehat{\bZ}}U_2^2-\E_{\widehat{\bZ}}V_2^2|&\le C_1Me^{C_2M^2}\sqrt{3\Gamma(N,M)}, \label{Lem12_4.1}
\\ \E'|\E_{\widehat{\bZ}}[U_2V_2]-\E_{\widehat{\bZ}}V_2^2|&\le C_1Me^{C_2M^2}\sqrt{2\Gamma(N,M)}. 
\end{align*}
Therefore, inserting inequality \eqref{Lem12_1} into inequality \eqref{Lem12_0}, using Jensen's inequality, and then finally using the above two inequalities, we see that there is a finite positive constant $C$ independent of $M,N$ such that
\begin{align*}
&\left|\E'\E_{\widehat{\bZ}}\ln U_2-\E'\E_{\widehat{\bZ}}\ln V_2\right|
\\&\le Ce^{CM^2}\E'\sqrt{2|\E_{\widehat{\bZ}}U_2^2-\E_{\widehat{\bZ}}V_2^2|+2|\E_{\widehat{\bZ}}[U_2V_2]-\E_{\widehat{\bZ}}V_2^2|}
\\&\le Ce^{CM^2}\sqrt{\E'|\E_{\widehat{\bZ}}U_2^2-\E_{\widehat{\bZ}}V_2^2|+\E'|\E_{\widehat{\bZ}}[U_2V_2]-\E_{\widehat{\bZ}}V_2^2|}
\\&\le C\sqrt{M}e^{CM^2}\Gamma(N,M)^{1/4}.
\end{align*}
\end{proof}

To conclude this subsection, it remains now to combine lemmas \ref{Lemma11} and \ref{Lemma12} with Lemma \ref{Lemma10}, use the equality in distribution of $\langle\bX^\intercal\rangle_{N,\eps}'\bz^{(N)}$ and $\sqrt{N\bT}\bz^{(M)}$ for $\bz^{(N)}\sim\mathcal{N}(0,I_N)$ and $\bz^{(M)}\sim\mathcal{N}(0,I_M)$ discussed earlier, replace the expectation over $\bT$ with a supremum, and finally apply Theorem \ref{thrm2}.

\begin{proposition}[$\Delta_N/M$ cavity upper bound] \label{prop4}
Assume the hypotheses of Theorem \ref{thrm1}, let $M=M_N$, and recall \eqref{DeltaN5} that
\begin{equation*}
\Delta_N=\E\ln\widetilde{Z}_{N+1,M_{N+1}}-\E\ln\widetilde{Z}_{N,M_{N+1}},
\end{equation*}
where $\E[\,\cdot\,]$ denotes an expectation with respect to $\bZ,\widetilde{\bZ},\bX_0,\bxi,\widetilde{\bxi},\xi,\bet_0,\eps_N,\eps_{N+1}$. There exists a finite positive constant $C$ independent of $M,N$ such that enforcing the constraint $s_N\ge\min\{0.9,N^{-1/9}(Ce^{CM^2})^{8/9}\}$, with $s_N$ still vanishing at large $N$, ensures that
\begin{equation} \label{DeltaNMbound}
\limsup_{N\to\infty}\frac{\Delta_N}{M}\le \sup_{q\in[0,\rho]}F_1^{\mathrm{RS}}(q,\lambda).
\end{equation}
\end{proposition}
\begin{proof}
First, we set $M=M_{N+1}$ instead of $M=M_N$, recalling that Lemma \ref{Lemma10} holds for any $M$. Rewriting said lemma in the notation of lemmas \ref{Lemma11} and \ref{Lemma12} then yields
\begin{equation*}
\frac{\Delta_N}{M}\le\frac{\E\ln U_1-\E_{\widehat{\bZ}}\E\ln U_2}{M}+\Gamma'(N,M),
\end{equation*}
where we have replaced $\bxi^\intercal,\bet^\intercal,\bet_0^\intercal$ by $\bz,\bx,\bx_0$. By lemmas \ref{Lemma11} and \ref{Lemma12}, we then have
\begin{equation} \label{prop4_1}
\frac{\Delta_N}{M}\le\frac{\E\ln V_1-\E_{\widehat{\bZ}}\E\ln V_2}{M}+\frac{C}{\sqrt{M}}e^{CM^2}\Gamma(N,M)^{1/4}+\Gamma'(N,M)
\end{equation}
for some finite positive constant $C$ independent of $M,N$. Now, writing $\bz'\sim\mathcal{N}(0,I_M)$, we observe that $\langle\bX^\intercal\rangle_{N,\eps_N}'\bz$ and $\sqrt{N\bT}\bz'$ are equal in distribution, so
\begin{equation*}
\frac{1}{M}\E\ln V_1=\frac{1}{M}\E_{\bz'}\E\ln\int_{\R^M}e^{\sqrt{\lambda}\bx^\intercal\sqrt{\bT}\bz'+\bx^\intercal\bT\bx_0-\frac{\lambda}{2}\bx^\intercal\bT\bx}\,\de\!\pP_{X,M}(\bx).
\end{equation*}
On the other hand, using the independence of $\widehat{\bZ}$ and $\langle\bX\rangle_{N,\eps_N}'$ shows that
\begin{equation*}
\frac{1}{M}\E_{\widehat{\bZ}}\E\ln V_2=\frac{1}{2M}\Tr\E_{\widehat{\bZ}}\E\left[\sqrt{\lambda}\widehat{\bZ}\bT+\frac{\lambda}{2}\bT^2\right]=\frac{\lambda}{4M}\Tr\E\bT^2.
\end{equation*}
Inserting these two expressions back into inequality \eqref{prop4_1} gives
\begin{align*}
\frac{\Delta_N}{M}&\le\frac{1}{M}\E_{\bz'}\E\ln\int_{\R^M}e^{\sqrt{\lambda}\bx^\intercal\sqrt{\bT}\bz'+\bx^\intercal\bT\bx_0-\frac{\lambda}{2}\bx^\intercal\bT\bx}\,\de\!\pP_{X,M}(\bx)-\frac{\lambda}{4M}\Tr\E\bT^2
\\&\quad+\frac{C}{\sqrt{M}}e^{CM^2}\Gamma(N,M)^{1/4}+\Gamma'(N,M).
\end{align*}
As this integral depends on all of the involved random variables through only $\bT,\bz',\bx_0$, we recognize from equation \eqref{RSpot} that the above simplifies to
\begin{equation*}
\frac{\Delta_N}{M}\le \E_{\bT}F_M^{\mathrm{RS}}(\bT,\lambda)+\frac{C}{\sqrt{M}}e^{CM^2}\Gamma(N,M)^{1/4}+\Gamma'(N,M).
\end{equation*}
Noting that $\bT$ is symmetric positive semidefinite, we circumvent the intractability of the expectation over $\bT$ by replacing it with a supremum to obtain
\begin{align} 
\frac{\Delta_N}{M}&\le \sup_{\bT\in\cS_M}F_M^{\mathrm{RS}}(\bT,\lambda)+\frac{C}{\sqrt{M}}e^{CM^2}\Gamma(N,M)^{1/4}+\Gamma'(N,M) \nonumber
\\&= \sup_{q\in[0,\rho]}F_1^{\mathrm{RS}}(q,\lambda)+\frac{C}{\sqrt{M}}e^{CM^2}\Gamma(N,M)^{1/4}+\Gamma'(N,M), \label{prop4_2}
\end{align}
with the second line following from using Theorem \ref{thrm2} to replace the $M$-dependent supremum $\sup_{\bT\in\cS_M}F_M^{\mathrm{RS}}(\bT,\lambda)$ by its $M$-independent analog. To make connection with inequality \eqref{DeltaNMbound}, we multiply both sides by $M_{N+1}/M_N$ to obtain
\begin{multline*}
\frac{\Delta_N}{M_N}\le \frac{M_{N+1}}{M_N}\sup_{q\in[0,\rho]}F_1^{\mathrm{RS}}(q,\lambda)
\\+\frac{C\sqrt{M_{N+1}}}{M_N}e^{CM_{N+1}^2}\Gamma(N,M_{N+1})^{1/4}+\frac{M_{N+1}}{M_N}\Gamma'(N,M_{N+1}).
\end{multline*}
By hypothesis, either $M_{N+1}=M_N$ or $M_{N+1}=M_N+1$. In either case, the result follows due to our choice of $s_N$ inducing the vanishing of all error terms when taking the limit supremum; the second term in the right-hand side vanishes because $M_N=\mathrm{o}(\sqrt{\ln N})$ and $s_N\ge N^{-1/9}(Ce^{CM_N^2})^{8/9}$, while the final term vanishes because $s_N\to0$.
\end{proof}

\subsection{The novel column cavity bound} \label{s5.3}
In this subsection, we formalize the intuition contained within the approximation \eqref{DeltaMapprox} to prove the counterpart to Proposition \ref{prop4} bounding the limit supremum of $\Delta_M/N$. This is a novel term that only comes into play when examining multiscale models and arises here due to our working in the sublinear rank regime $M=\mathrm{o}(\sqrt{\ln N})$. As per the approximation \eqref{DeltaMapprox}, our approach is to bootstrap the finite $M,N$ cavity bounds through a second application of the cavity method.

The strategy then is to fix $N$ and $M=M_N$ at a value such that $M_{N+1}=M_N+1$ and bound each of the terms in \eqref{DeltaM5}
\begin{equation*}
\frac{\Delta_M}{N}=\frac{1}{N}\E\ln\widetilde{Z}_{N,M+1}-\frac{1}{N}\E\ln\widetilde{Z}_{N,M}
\end{equation*}
separately. The upper bound on the second term, being negative, can be obtained simply from the free entropy lower bound of Proposition \ref{prop2} -- note that this proposition holds for any $M,N\ge 1$. To obtain an upper bound on $\frac{1}{N}\E\ln\widetilde{Z}_{N,M+1}$, we proceed by introducing the auxiliary difference
\begin{equation} \label{DeltaNn}
\widehat{\Delta}_N(n):=\E\ln\widetilde{Z}_{n+1,M+1}-\E\ln\widetilde{Z}_{n,M+1}
\end{equation}
so that we have the telescoping series
\begin{equation*}
\frac{1}{N}\E\ln\widetilde{Z}_{N,M+1}=\frac{1}{N}\sum_{n=0}^{N-1}\widehat{\Delta}_N(n).
\end{equation*}
Then, we truncate this sum at a truncation parameter $n=T$ and show that the truncated sum can be bounded by a standard application of the cavity method, while the remainder is bounded with $T$-dependent error due to Proposition \ref{prop4}. Finally, we take the limit supremum, choosing $T$ such that all error terms vanish at large $N$.

\begin{lemma}[Bounding the truncated sum of $\widehat{\Delta}_N(n)$] \label{Lemma13}
Let $N,M,T\ge1$, assume hypotheses \ref{H2} and \ref{H3}, and let $\widehat{\Delta}_N(n)$ be defined by equation \eqref{DeltaNn}. Then, there exists some positive constant $C$ independent of $M,N,T$ such that
\begin{equation*}
\sum_{n=0}^{T-1}\widehat{\Delta}_N(n) \le CTM^2.
\end{equation*}
\end{lemma}
\begin{proof}
First, note that
\begin{equation} \label{Lem13_0}
\sum_{n=0}^{T-1}\widehat{\Delta}_N(n)\le T\sup_{n\le T}\widehat{\Delta}_N(n).
\end{equation}
Then, replacing $N$ by $n$ and $M$ by $M+1$ in Lemma \ref{Lemma10} shows that for $C$ a generic finite positive constant independent of $N,M,T$ here on out,
\begin{multline} \label{Lem13_1}
\widehat{\Delta}_N(n)\le \E_{\bxi',\bet_0}\E\ln\left\langle\int_{\R^{M+1}}e^{\bet\widehat{z}(\bX';\bxi',\lambda)+\bet\widehat{s}(\bX';\lambda)\bet^\intercal}\,\de\!\pP_{X,M+1}(\bet)\right\rangle_{n,\eps_n}'
\\-\E_{\widehat{\bZ}'}\E\ln\langle e^{\widehat{y}(\bX';\bX_0',\widehat{\bZ}',\lambda)}\rangle_{n,\eps_n}'+(M+1)\Gamma'(n,M+1),
\end{multline}
where we define $\bX'\in\R^{n\times (M+1)}$ to be the integration variable of the Gibbs average $\langle\,\cdot\,\rangle_{n,\eps_n}'$, let $\bX_0'\in\R^{n\times (M+1)}$ have i.i.d.~entries drawn from $\pP_X$, let $\bet_0\sim\pP_{X,M+1}$, let $\widehat{\bZ}'\in\R^{n\times n}$ be a standard Wigner matrix, and let $\bxi'\sim\mathcal{N}(0,I_n)$.

Subsituting $\widehat{z}(\bX')$ \eqref{zcavity5} and $\widehat{s}(\bX')$ \eqref{scavity5} into the first term of the right-hand side of inequality \eqref{Lem13_1}, using Jensen's inequality, and then using the identity $\E_ze^{az}=e^{a^2/2}$ for $z\sim\mathcal{N}(0,1)$ and $a$ independent of $z$ shows that
\begin{align*}
&\E_{\bxi',\bet_0}\E\ln\left\langle\int_{\R^{M+1}}e^{\bet\widehat{z}(\bX';\bxi',\lambda)+\bet\widehat{s}(\bX';\lambda)\bet^\intercal}\,\de\!\pP_{X,M+1}(\bet)\right\rangle_{n,\eps_n}'
\\&= \E_{\bxi',\bet_0}\E\ln\left\langle\int_{\R^{M+1}}e^{\sqrt{\frac{\lambda}{n}}\bet\bX'^\intercal\bxi'^\intercal+\frac{\lambda}{n}\bet\bX'^\intercal\bX_0'\bet_0-\frac{\lambda}{2n}\bet\bX'^\intercal\bX'\bet^\intercal}\,\de\!\pP_{X,M+1}(\bet)\right\rangle_{n,\eps_n}'
\\&\le \ln\E_{\bet_0}\E\left\langle\int_{\R^{M+1}}\E_{\bxi'}e^{\sqrt{\frac{\lambda}{n}}\bet\bX'^\intercal\bxi'^\intercal+\frac{\lambda}{n}\bet\bX'^\intercal\bX_0'\bet_0-\frac{\lambda}{2n}\bet\bX'^\intercal\bX'\bet^\intercal}\,\de\!\pP_{X,M+1}(\bet)\right\rangle_{n,\eps_n}'
\\&= \ln\E_{\bet_0}\E\left\langle\int_{\R^{M+1}}e^{\frac{\lambda}{n}\bet\bX'^\intercal\bX_0'\bet_0}\,\de\!\pP_{X,M+1}(\bet)\right\rangle_{n,\eps_n}'.
\end{align*}
As $\pP_X$ has $D$-bounded support, we have that $|\bet\bX'^\intercal\bX_0'\bet_0|\le n(M+1)^2D^4$ and therefore,
\begin{align}
&\E_{\bxi',\bet_0}\E\ln\left\langle\int_{\R^{M+1}}e^{\bet\widehat{z}(\bX';\bxi',\lambda)+\bet\widehat{s}(\bX';\lambda)\bet^\intercal}\,\de\!\pP_{X,M+1}(\bet)\right\rangle_{n,\eps_n}' \nonumber
\\&\le \ln\E_{\bet_0}\E\left\langle\int_{\R^{M+1}}e^{\frac{\lambda}{n}\bet\bX'^\intercal\bX_0'\bet_0}\,\de\!\pP_{X,M+1}(\bet)\right\rangle_{n,\eps_n}' \nonumber
\\&\le \ln\E_{\bet_0}\E\left\langle\int_{\R^{M+1}}e^{\lambda (M+1)^2D^4}\,\de\!\pP_{X,M+1}(\bet)\right\rangle_{n,\eps_n}' \nonumber
\\&= \lambda (M+1)^2D^4. \label{Lem13_2}
\end{align}

Likewise, substituting $\widehat{y}(\bX')$ \eqref{yhat} into the second term of the right-hand side of inequality \eqref{Lem13_1}, using Jensen's inequality, using the independence of $\widehat{\bZ}'$ and $\bX'$, and then using the $D$-boundedness of $\supp\pP_X$ shows that
\begin{align}
&-\E_{\widehat{\bZ}'}\E\ln\langle e^{\widehat{y}(\bX';\bX_0',\widehat{\bZ}',\lambda)}\rangle_{n,\eps_n}' \nonumber
\\&=-\E_{\widehat{\bZ}'}\E\ln\left\langle e^{\frac{1}{2}\Tr\left(\frac{\sqrt{\lambda}}{n}\widehat{\bZ}'\bX'\bX'^\intercal+\frac{\lambda}{n^2}\bX_0'^\intercal\bX'\bX'^\intercal\bX_0'-\frac{\lambda}{2n^2}\bX'^\intercal\bX'\bX'^\intercal\bX'\right)}\right\rangle_{n,\eps_n}' \nonumber
\\&\le -\E_{\widehat{\bZ}'}\E\left\langle \frac{1}{2}\Tr\left(\frac{\sqrt{\lambda}}{n}\widehat{\bZ}'\bX'\bX'^\intercal+\frac{\lambda}{n^2}\bX_0'^\intercal\bX'\bX'^\intercal\bX_0'-\frac{\lambda}{2n^2}\bX'^\intercal\bX'\bX'^\intercal\bX'\right)\right\rangle_{n,\eps_n}' \nonumber
\\&=-\E\left\langle \frac{1}{2}\Tr\left(\frac{\lambda}{n^2}\bX_0'^\intercal\bX'\bX'^\intercal\bX_0'-\frac{\lambda}{2n^2}\bX'^\intercal\bX'\bX'^\intercal\bX'\right)\right\rangle_{n,\eps_n}' \nonumber
\\&\le \frac{3\lambda (M+1)^2D^4}{4}. \label{Lem13_3}
\end{align}

Combining inequalities \eqref{Lem13_2} and \eqref{Lem13_3} with the fact that the remaining error term $(M+1)\Gamma'(n,M+1)$ in inequality \eqref{Lem13_1} is at most quadratic in $M$ for all choices of $n$, said inequality simplifies to give
\begin{equation*}
\widehat{\Delta}_N(n)\le CM^2.
\end{equation*}
Substituting this into inequality \eqref{Lem13_0} completes the proof.
\end{proof}

The next proposition allows us to deduce the $\Delta_M$ cavity bound from the $\Delta_N$ cavity bound at a slight loss of the rate, provided that we have a rank-one equivalence formula. 
\begin{proposition}[$\Delta_M/N$ cavity upper bound] \label{prop5}
Assume hypotheses \ref{H2}--\ref{H5} and recall \eqref{DeltaM5} that for $N$ and $M=M_N$ such that $M_{N+1}=M+1$,
\begin{equation*}
\Delta_M=\E\ln\widetilde{Z}_{N,M+1}-\E\ln\widetilde{Z}_{N,M},
\end{equation*}
where $\E[\,\cdot\,]$ denotes an expectation with respect to $\bZ,\widetilde{\bZ},\bX_0,\eps_N$. Furthermore, suppose that
\begin{equation}\label{eq:generic_bound}
\frac{\Delta_N}{M+1} \leq \sup_{q\in[0,\rho]}F_1^{\mathrm{RS}}(q,\lambda) + \Phi(N,M)
\end{equation}
for some error function $\Phi(N,M) : \N^2 \to \R_+$ decreasing in $N$. Then, for any $T \leq N$, there exists a finite positive constant $C$ independent of $M,N,T$ such that
\begin{equation} \label{DeltaMerror}
\frac{\Delta_M}{N}\le \sup_{q\in[0,\rho]}F_1^{\mathrm{RS}}(q,\lambda) + (M + 1)\Phi(T,M) + \frac{CTM^2}{N}.
\end{equation}
\end{proposition}
\begin{proof}
By Lemma \ref{Lemma9} and Proposition \ref{prop2}, we have for each $N,M\ge1$ that
\begin{equation*}
\frac{1}{N}\E\ln\widetilde{Z}_{N,M}\ge MF_N(\lambda)\ge M\sup_{q\in[0,\rho]}F_1^{\mathrm{RS}}(q,\lambda).
\end{equation*}
Hence, it is immediate that
\begin{equation} \label{prop5_1}
\frac{\Delta_M}{N}\le\frac{1}{N}\E\ln\widetilde{Z}_{N,M+1}-M\sup_{q\in[0,\rho]}F_1^{\mathrm{RS}}(q,\lambda).
\end{equation}
Recalling that $\widehat{\Delta}_N(n)=\E\ln\widetilde{Z}_{n+1,M+1}-\E\ln\widetilde{Z}_{n,M+1}$ \eqref{DeltaNn} and letting $1\le T\le N-1$, we have the telescoping series
\begin{align}
\frac{1}{N}\E\ln\widetilde{Z}_{N,M+1}&=\frac{1}{N}\sum_{n=0}^{N-1}\widehat{\Delta}_N(n) \nonumber
\\&=\frac{1}{N}\sum_{n=0}^{T-1}\widehat{\Delta}_N(n)+\frac{1}{N}\sum_{n=T}^{N-1}\widehat{\Delta}_N(n) \nonumber
\\&\le \left(1-\frac{T}{N}\right)\sup_{n\ge T}\widehat{\Delta}_N(n)+\frac{CTM^2}{N}, \label{prop5_2}
\end{align}
where we have used Lemma \ref{Lemma13}. Replacing $N$ by $n$ in inequality \eqref{eq:generic_bound} shows that for all $n\ge T$,
\begin{equation*}
\widehat{\Delta}_N(n)\le (M+1)\sup_{q\in[0,\rho]}F_1^{\mathrm{RS}}(q,\lambda)
 + (M + 1) \Phi(T,M).
\end{equation*}
As $T\le N-1$, substituting this into inequality \eqref{prop5_2} produces
\begin{equation*}
\frac{1}{N}\E\ln\widetilde{Z}_{N,M+1}\le (M+1)\sup_{q\in[0,\rho]}F_1^{\mathrm{RS}}(q,\lambda)+ (M + 1) \Phi(T,M) +\frac{CTM^2}{N},
\end{equation*}
which further combines with inequality \eqref{prop5_1} to yield
\begin{equation*}
\frac{\Delta_M}{N}\le\sup_{q\in[0,\rho]}F_1^{\mathrm{RS}}(q,\lambda)+ (M + 1) \Phi(T,M) +\frac{CTM^2}{N},
\end{equation*}
as required.
\end{proof}

A key property of this upper bound is that it is sharp when one is able to take $T$ such that both $T = \mathrm{o}(N M_N^{-2})$ and $M_N \Phi( T ,M_N) \to 0$ at large $N$. Indeed, a sharp bound is made possible upon taking $M_N=\mathrm{o}(\sqrt{\ln N})$ and setting
\begin{equation} \label{Phichoice}
\Phi(N,M) = \frac{C}{\sqrt{M+1}}e^{C(M+1)^2}\Gamma(N,M+1)^{1/4}+\Gamma'(N,M+1),
\end{equation}
as prescribed by inequality \eqref{prop4_2} (recalling that $M$ therein stands for $M_{N+1}$).
\begin{corollary} \label{Cor3}
Assume the hypotheses of Theorem \ref{thrm1} and recall \eqref{DeltaM5} that for $N$ and $M=M_N$ such that $M_{N+1}\ne M$,
\begin{equation*}
\Delta_M=\E\ln\widetilde{Z}_{N,M+1}-\E\ln\widetilde{Z}_{N,M},
\end{equation*}
where $\E[\,\cdot\,]$ denotes an expectation with respect to $\bZ,\widetilde{\bZ},\bX_0,\eps_N$. There exists a finite positive constant $C$ independent of $M,N$ such that choosing $s_N$ to decay to zero while satisfying $s_N\ge\min\{0.9,(C^2\ln N/(8C+9))^4N^{-1/(8C+9)}\}$ ensures that
\begin{equation*}
\limsup_{N\to\infty}\frac{\Delta_M}{N}\le \sup_{q\in[0,\rho]}F_1^{\mathrm{RS}}(q,\lambda).
\end{equation*}
\end{corollary}
\begin{proof}
Inequality \eqref{prop4_2} stems from the same hypotheses of Proposition \ref{prop4} as those being assumed here, so we read from said inequality that
\begin{equation*}
\frac{\Delta_N}{M+1}\le \sup_{q\in[0,\rho]}F_1^{\mathrm{RS}}(q,\lambda)+\frac{C}{\sqrt{M+1}}e^{C(M+1)^2}\Gamma(N,M+1)^{1/4}+\Gamma'(N,M+1).
\end{equation*}
Thus, we may take $\Phi(N,M)$ within Proposition \ref{prop5} to be as in equation \eqref{Phichoice} so that 
\begin{multline} \label{prop5_3}
	\frac{\Delta_M}{N}\le\sup_{q\in[0,\rho]}F_1^{\mathrm{RS}}(q,\lambda)
	\\+ C\sqrt{M+1}e^{C(M+1)^2}\Gamma(T,M+1)^{1/4}+\Gamma'(T,M+1) +\frac{CTM^2}{N}
\end{multline}
for any $T\le N$.
		
It remains now to choose $T$ such that the error terms in the above vanish in the large $N$ limit. Hence, let $\epsilon=1/(8C+10)$ and set $T=N^{1-\epsilon}$. Since $M=\mathrm{o}(\sqrt{\ln N})$ by hypothesis \ref{H1}, we have that for $N$ large enough, $M<M+1\le\sqrt{\ln N^{\epsilon}}$ and thus
\begin{align*}
C\sqrt{M+1}e^{C(M+1)^2}\Gamma(T,M+1)^{1/4}&=e^{C(M+1)^2}\frac{C(M+1)}{T^{1/8}s_T^{1/8}}
\\&\le C\sqrt{\epsilon}\sqrt{\ln N}N^{(C+1/8)\epsilon-1/8}s_T^{-1/8},
\\ (M+1)\Gamma'(T,M+1)&=C\left(\frac{M+1-T}{T+1}+1+s_T\right)
\\&\le C\left(\frac{\sqrt{\epsilon}\sqrt{\ln N}-T}{T+1}+1+s_T\right),
\\\frac{CTM^2}{N}&\le C\epsilon N^{-\epsilon}\ln N.
\end{align*}
By our choice of $s_N$, we have that $s_T\ge(C^2\tfrac{\epsilon}{1-\epsilon}\ln T)^4T^{\epsilon/(\epsilon-1)}$ in the thermodynamic limit. Thus, inserting $T=N^{1-\epsilon}$ into the first error term above shows that it is bounded above by $N^{-\epsilon}$, hence vanishes at large $N$. As $s_T\to0$ and $\sqrt{\ln N}\ll T=N^{1-\epsilon}$, the second error term also vanishes at large $N$. Likewise for the final error term. Therefore, taking the limit supremum in inequality \eqref{prop5_3} produces the stated result.
\end{proof}
	
We conclude this subsection with a demonstration of how the arguments underlying Proposition \ref{prop5} can be extended to the case where $\Phi(N,M)$ vanishes in the $M=\mathrm{o}(N^{\gamma/2})$ regime for some $0<\gamma\le1$ (in line with the rate of concentration in Theorem \ref{thrm4}), highlighting the fact that the bottleneck posed by hypothesis \ref{H1} is purely due to the method of proof used in \S\ref{s5.2}.
\begin{lemma}
If there exists $0 < \gamma \le 1$ such that
\begin{equation*}
\Phi(N,M) = \frac{M^2}{N^\gamma},
\end{equation*}
then for $M = \mathrm{o}(N^{\gamma/(2\gamma+3)})$, there exists $T$ a sequence in $N$ such that
\begin{equation*}
\max\bigg\{ M\Phi(T,M), \frac{TM^2}{N} \bigg\} \to 0. 
\end{equation*}
\end{lemma}
\begin{proof}
We show that $M = \mathrm{o}(N^{\gamma/(2\gamma+3)})$ is the optimal rate for which there exists $T$ such that the error terms in question vanish at large $N$. We set $M = N^x$ and $T = N^y$ for some yet to be determined constants $0 \le x,y \le 1$. We have that
\begin{equation*}
M\Phi(T,M) = \frac{M^3}{N^\gamma} \to 0  \iff x < \frac{\gamma y}{3}
\end{equation*}
and
\begin{equation*}
\frac{TM^2}{N} \to 0 \iff x < \frac{1 - y}{2}.
\end{equation*}
The curves bounding these regions intersect precisely when 
\begin{equation*}
y = \frac{3}{2\gamma+3}.
\end{equation*}
Therefore, for any $\epsilon > 0$ sufficiently small, both error terms limit to zero upon setting $x = \gamma/(2\gamma+3) - \epsilon$ and $y = 3/(2\gamma+3) - \epsilon$. 
\end{proof}
\begin{remark}
Notice that even upon assuming that a rate of $M = \mathrm{o}(N^{\frac{\gamma}{2}})$ is sufficient for the error term $\Phi(N,M)$ in the $\Delta_N$ cavity bound \eqref{eq:generic_bound} to vanish at large $N$, one requires the slower rate $M = \mathrm{o}(N^{\gamma/(2\gamma+3)})$ for the equivalent error term of the accompanying $\Delta_M$ cavity bound \eqref{DeltaMerror} to also vanish as $N\to\infty$.
\end{remark}

\subsection{Proof of Proposition \ref{prop3} and Theorem \ref{thrm1}} \label{s5.4}
We now conclude this paper by giving a brief demonstration of how the above results constitute a proof of Theorem \ref{thrm1}. The majority of the argument is in fact a proof of Proposition \ref{prop3}, which combines with Proposition \ref{prop2}, proven in \S\ref{s4.1}, to give Theorem \ref{thrm1}.

\begin{proof}[\textbf{Proof of Theorem \ref{thrm1}}]
Throughout this paper, $C$ has stood for a generic finite positive constant independent of $N,M$, and the truncation parameter $T$ of Lemma \ref{Lemma13}. Noting that each inequality of this paper involving $C$ remains true upon arbitrarily increasing $C$, we now take $C$ to be large enough such that every such inequality is true for the same choice of $C$. Then, setting
\begin{equation*}
s_N=\min\left\{0.9,\max\left\{N^{-1/9}(Ce^{CM^2})^{8/9},\left(C^2\ln N/(8C+9)\right)^4N^{-1/(8C+9)}\right\}\right\}
\end{equation*}
in order to satisfy the hypotheses of Proposition \ref{prop4} and Corollary \ref{Cor3}, we have by said proposition and corollary that
\begin{align*}
\limsup_{N\to\infty}\frac{\Delta_N}{M}&\le\sup_{q\in[0,\rho]}F_1^{\mathrm{RS}}(q,\lambda),
\\ \limsup_{N\to\infty}\frac{\Delta_M}{N}&\le\sup_{q\in[0,\rho]}F_1^{\mathrm{RS}}(q,\lambda).
\end{align*}
Thus, by Theorem \ref{thrm3}, there exists some $0\le\alpha\le1$ such that
\begin{align*}
\limsup_{N\to\infty}\widetilde F_N(\lambda)&\leq \alpha\limsup_{N\to\infty}\frac{\Delta_N}{M}+(1-\alpha)\limsup_{N\to\infty}\frac{\Delta_M}{N}
\\&\leq\sup_{q\in[0,\rho]}F_1^{\mathrm{RS}}(q,\lambda).
\end{align*}

Proposition~\ref{prop3} hence follows from inequality \eqref{Lemma9cor}, which we recall is a straightforward consequence of Lemma \ref{Lemma9}. Combining Proposition \ref{prop3} with the result of taking the limit infimum in Proposition \ref{prop2} finally shows that
\begin{equation*}
\limsup_{N\to\infty}F_N(\lambda)\le\sup_{q\in[0,\rho]}F_1^{\mathrm{RS}}(q,\lambda)\le \liminf_{N\to\infty}F_N(\lambda).
\end{equation*}
This is equivalent to Theorem \ref{thrm1}.
\end{proof}

\begin{appendices}

\section{The replica symmetric potential in terms of mutual information}\label{appA}
Consider the vector Gaussian channel
\begin{equation} \label{vectorchannel}
\by=\sqrt{\lambda\bQ}\bx_0+\bz
\end{equation}
with $\bz\in\mathbb{R}^M$ a standard Gaussian vector and $\bx_0\sim\pP_{X,M}$. Inspection of the replica symmetric potential \eqref{RSpot} reveals that it is closely related to the free entropy of this vector channel. Indeed, the intuition behind the replica method is that the size $N$, rank $M$ spiked Wigner model \eqref{spikedwignermodel} is, up to the regularization term $-\tfrac{\lambda}{4M}\Tr\bQ^2$, statistically equivalent to the mean field model corresponding to $N$ replicas of the channel \eqref{vectorchannel} --- the quadratic interactions within the $\bX_0\bX_0^\intercal$ term of \eqref{spikedwignermodel} are accounted for by the order parameter $\bQ$, which is eventually seen to play the role of the averaged overlap matrix $\E_{\bz,\bx_0}\langle\bx\bx_0^\intercal\rangle_{\rm RS}$ (see Lemma \ref{Lemma2}).

To rewrite the replica symmetric potential in the form \eqref{RSpotMI}, we study the vector channel \eqref{vectorchannel} from an information theoretic viewpoint. Thus, begin by recalling that the \textit{differential entropy} \cite{coverthomas} of a random vector $\bX\sim p_X(\bX)$ is defined as
\begin{equation*}
H(\bX):=-\int_{\R^M}\ln p_X(\bX)\,\de p_X(\bX).
\end{equation*}
Then, the mutual information between $\bx_0$ and the output $\by$ of the channel \eqref{vectorchannel} is
\begin{equation} \label{MIentropy}
I(\bx_0;\by)=H(\by)-H(\by\mid\bx_0)=H(\sqrt{\lambda\bQ}\bx_0+\bz)-H(\bz);
\end{equation}
it is a standard result that $H(\bz)=M\ln(2\pi e)/2$. By Bayes' rule, the first term in the right-hand side of \eqref{MIentropy} reads
\begin{align*}
H(\sqrt{\lambda\bQ}\bx_0+\bz)&=H(\by)
\\&=-\E_{\bz,\bx_0}\ln\int_{\R^M}\pP_{Y|X}(\by(\bx_0,\bz)\mid\bx)\,\de\!\pP_{X,M}(\bx).
\end{align*}
Conditional on $\bx_0$, the data $\by$ is Gaussian distributed with mean $\sqrt{\lambda\bQ}\bx_0$, so the above expression for the entropy reduces further to
\begin{align*}
H(\sqrt{\lambda\bQ}\bx_0+\bz)&=-\E_{\bz,\bx_0}\ln\int_{\R^M}\exp\left(-\tfrac{1}{2}\|\sqrt{\lambda\bQ}\bx_0+\bz-\sqrt{\lambda\bQ}\bx\|_{\rm F}^2\right)\,\de\!\pP_{X,M}(\bx)
\\&\qquad+\frac{M}{2}\ln(2\pi)
\\&=-\E_{\bz,\bx_0}\ln\int_{\R^M}\exp\left(\sqrt{\lambda}\bx^\intercal\sqrt{\bQ}\bz+\lambda\bx_0^\intercal\bQ\bx-\tfrac{\lambda}{2}\bx^\intercal\bQ\bx\right)\,\de\!\pP_{X,M}(\bx)
\\&\qquad+\frac{1}{2}\E_{\bz,\bx_0}\|\sqrt{\lambda\bQ}\bx_0+\bz\|_{\rm F}^2+\frac{M}{2}\ln(2\pi).
\end{align*}
Since $\bx_0$ and $\bz$ are independent and $\bz$ has identity covariance, we have that
\begin{equation*}
\E_{\bz,\bx_0}\|\sqrt{\lambda\bQ}\bx_0+\bz\|_{\rm F}^2=\lambda\E_{\bz,\bx_0}\left[\bx_0^\intercal\bQ\bx_0\right]+M.
\end{equation*}
Moreover, we have as hypothesis that our prior distribution factorizes as $\pP_{X,M}=\pP_X^{\otimes M}$ according to \eqref{iidprior} and $\pP_X$ has variance $\rho$, so
\begin{equation*}
\E_{\bz,\bx_0}\left[\bx_0^\intercal\bQ\bx_0\right]=\lambda\Tr\bQ\E\left[\bx_0\bx_0^\intercal\right]=\lambda\rho\Tr\bQ.
\end{equation*}
Thus, returning to equation \eqref{MIentropy}, we find that
\begin{equation*}
I(\bx_0;\by)=-\E_{\bz,\bx_0}\ln\int_{\R^M}e^{\sqrt{\lambda}\bx^\intercal\sqrt{\bQ}\bz+\lambda\bx_0^\intercal\bQ\bx-\tfrac{\lambda}{2}\bx^\intercal\bQ\bx}\,\de\!\pP_{X,M}(\bx)+\frac{\lambda\rho}{2}\Tr\bQ.
\end{equation*}
Comparing this expression to the definition \eqref{RSpot} of $F_M^{\rm RS}(\bQ,\lambda)$, and completing the square to write $\Tr\bQ^2-2\rho\Tr\bQ=\|\bQ-\rho I_M\|_{\rm F}^2-M\rho^2$ finally shows that
\begin{align*}
	F^{\rm RS}_M(\bQ,\lambda)&=-\frac1M I(\bx_0;\sqrt{\lambda\bQ}\bx_0+\bz)+\frac{\lambda\rho }{2M} \Tr \bQ - \frac{\lambda}{4M}  \Tr\bQ^2\\
	&=-\frac1M I(\bx_0;\sqrt{\lambda\bQ}\bx_0+\bz) - \frac{\lambda}{4M}  \|\bQ-\rho I_M\|_{\rm F}^2 +\frac{\lambda\rho^2 }{4}.
\end{align*}

\end{appendices}

\subsection*{Acknowledgments}
J.~Barbier and A.~A.~Rahman were funded by the European Union (ERC, CHORAL, project number 101039794). J.~Ko was funded by the European Union (ERC, LDRAM, project number 884584) and the Natural Sciences and Engineering Research Council of Canada (RGPIN-2020-04597). Views and opinions expressed are however those of the authors only and do not necessarily reflect those of the European Union or the European Research Council. Neither the European Union nor the granting authority can be held responsible for them. The authors acknowledge crucial discussions with Francesco Camilli and Koki Okajima which led to the joint development of the multiscale cavity method. J.~Barbier wishes to thank Nicolas Macris for the uncountable discussions concerning this problem and many others over the years.

\bibliographystyle{unsrtnat}
\bibliography{refs}

@book{spinglass,
author={M\'{e}zard, M. and Parisi, G. and Virasoro, M.},
title={{Spin Glass Theory and Beyond: An Introduction to the Replica Method and Its Applications}},
publisher={World Scientific Publishing Co. Inc.},
address={Singapore},
year={1987}
}

@book{coverthomas,
author={Cover, T. M. and Thomas, J. A.},
title={{Elements of Information Theory}},
publisher={John Wiley \& Sons, Inc.},
edition={2},
address={New Jersey},
year={2006}
}

@article{zdeborova2016statistical,
  title={Statistical physics of inference: Thresholds and algorithms},
  author={Zdeborov{\'a}, Lenka and Krzakala, Florent},
  journal={Adv. Phys.},
  volume={65},
  number={5},
  pages={453--552},
  year={2016},
  publisher={Taylor \& Francis}
}

@inproceedings{reeves2018,
  title={{Mutual Information as a Function of Matrix SNR for Linear Gaussian Channels}},
  author={Reeves, Galen and Pfister, Henry D. and Dytso, A.},
  booktitle={2018 IEEE International Symposium on Information Theory (ISIT)},
  pages={1754--1758},
  year={2018},
  organization={IEEE}
}

@inproceedings{wibisono2018convexity,
  title={Convexity of mutual information along the heat flow},
  author={Wibisono, Andre and Jog, Varun},
  booktitle={2018 IEEE International Symposium on Information Theory (ISIT)},
  pages={1615--1619},
  year={2018},
  organization={IEEE}
}

@inproceedings{spikedtensor,
  title={Statistical and computational phase transitions in spiked tensor estimation},
  author={Lesieur, T. and Miolane, L. and Lelarge, M. and Krzakala, F. and Zdeborov\'{a}, L.},
  booktitle={2017 IEEE International Symposium on Information Theory (ISIT)},
  pages={511--515},
  year={2017},
  organization={IEEE}
}

@inproceedings{sPCA2,
  title={{Phase transitions in sparse PCA}},
  author={Lesieur, T. and Krzakala, F. and Zdeborov\'{a}, L.},
  booktitle={2015 IEEE International Symposium on Information Theory (ISIT)},
  pages={1635--1639},
  DOI={10.1109/ISIT.2015.7282733},
  year={2015},
  organization={IEEE}
}

@inproceedings{sPCA1,
  title={{Information-theoretically optimal sparse PCA}},
  author={Deshpande, Y. and Montanari, A.},
  booktitle={2014 IEEE International Symposium on Information Theory (ISIT)},
  pages={2197--2201},
  DOI={10.1109/ISIT.2014.6875223},
  year={2014},
  organization={IEEE}
}

@inproceedings{worstnoise,
  title={{Is Gaussian noise the worst-case additive noise in wireless networks?}},
  author={Shomorony, I. and Avestimehr, A. S.},
  booktitle={2012 IEEE International Symposium on Information Theory (ISIT)},
  pages={214--218},
  DOI={10.1109/ISIT.2012.6283743},
  year={2012},
  organization={IEEE}
}

@inproceedings{diggavicover97,
author={Diggavi, S. N. and Cover, T. M.},
title={Is maximum entropy noise the worst?},
DOI={10.1109/ISIT.1997.613196},
booktitle={1997 IEEE International Symposium on Information Theory (ISIT)},
year={1997},
pages={278},
organization={IEEE}
}

@inproceedings{universality1,
  title={{MMSE} of probabilistic low-rank matrix estimation: Universality with respect to the output channel},
  author={Lesieur, T. and Krzakala, F. and Zdeborov\'{a}, L.},
  booktitle={53rd Annual Allerton Conference on Communication, Control, and Computing},
  pages={680--687},
  year={2015},
  organization={IEEE}
}

@inproceedings{universality2,
  title={Mutual information in rank-one matrix estimation},
  author={Krzakala, F. and Xu, J. and Zdeborov\'{a}, L.},
  booktitle={2016 IEEE Information Theory Workshop (ITW)},
  pages={71--75},
  year={2016},
  organization={IEEE}
}

@inproceedings{replicaproof,
  title={{Mutual information for symmetric rank-one matrix estimation: A proof of the replica formula}},
  author={Barbier, J. and Dia, M. and Macris, N. and Krzakala, F. and Lesieur, T. and Zdeborov\'{a}, L.},
  booktitle={Advances in Neural Information Processing Systems 29},
  pages={424--432},
  year={2016},
  organization={NIPS}
}

@inproceedings{tensorPCA,
  title={{A statistical model for tensor PCA}},
  author={Montanari, A. and Richard, E.},
  booktitle={Advances in Neural Information Processing Systems 27},
  pages={2987--2905},
  year={2014},
  organization={NIPS}
}

@article{universality3,
author={Guionnet, A. and Ko, J. and Krzakala, F. and Zdeborov\'{a}, L.},
    title = {Low-rank matrix estimation with inhomogeneous noise},
    journal = {Inf. Inference J. IMA},
    volume = {14},
    number = {2},
    pages = {iaaf010},
    year = {2025},
    doi = {10.1093/imaiai/iaaf010},
}

@article{farzad,
author={Pourkamali, Farzad and Barbier, Jean and Macris, Nicolas},
  journal={IEEE Trans. Inf. Theory}, 
  title={{Matrix Inference in Growing Rank Regimes}}, 
  year={2024},
  volume={70},
  number={11},
  pages={8133-8163},
  doi={10.1109/TIT.2024.3422263}
}

@article{Miolane17,
author={Miolane, L.},
title={Fundamental limits of low-rank matrix estimation: the non-symmetric case},
journal={arXiv preprint arXiv:1702.00473},
year={2017}
}

@article{Miolane19,
author={Miolane, L.},
title={Phase transitions in spiked matrix estimation: information-theoretic analysis},
journal={arXiv preprint arXiv:1806.04343},
year={2019}
}

@article{adaptiveinterpolation,
author={Luneau, C. and Barbier, J. and Macris, N.},
title={Mutual information for low-rank even-order symmetric tensor estimation},
volume={10},
DOI={10.1093/imaiai/iaaa022},
number={4},
journal={Inf. Inference J. IMA},
year={2021},
pages={1167--1207}
}

@article{camilli2024decimation,
  title={The decimation scheme for symmetric matrix factorization},
  author={Camilli, Francesco and M{\'e}zard, Marc},
  journal={J. Phys. A: Math. Theor.},
  volume={57},
  number={8},
  pages={085002},
  year={2024},
  publisher={IOP Publishing}
}

@article{camilli2023matrix,
  title={Matrix factorization with neural networks},
  author={Camilli, Francesco and M{\'e}zard, Marc},
  journal={Phys. Rev. E},
  volume={107},
  number={6},
  pages={064308},
  year={2023},
  publisher={APS}
}

@article{BCKO2025,
  title={Phase Diagram of Extensive-Rank Symmetric Matrix Denoising beyond Rotational Invariance},
  author={Barbier, J. and Camilli, Francesco and Ko, Justin and Okajima, Koki},
  journal={Phys. Rev. X},
  volume={15},
  pages={021085},
  year={2025}
}

@article{reeves,
author={Reeves, G.},
title={{Information-Theoretic Limits for the Matrix Tensor Product}},
volume={1},
DOI={10.1109/JSAIT.2020.3040598},
number={3},
journal={IEEE J. Sel. Areas Inf. Theory},
year={2020},
pages={777--798}
}

@article{huang2018mesoscopic,
  title={Mesoscopic perturbations of large random matrices},
  author={Huang, Jiaoyang},
  journal={Random Matrices: Theory Appl.},
  volume={7},
  number={02},
  pages={1850004},
  year={2018},
  publisher={World Scientific}
}

@article{diggavicover01,
author={Diggavi, S. N. and Cover, T. M.},
title={The worst additive noise under a covariance constraint},
volume={47},
DOI={10.1109/18.959289},
number={7},
journal={IEEE Trans. Inf. Theory},
year={2001},
pages={3072--3081}
}

@article{guoshamaiverdu,
author={Guo, D. and Shamai, S. and Verd\'u, S.},
title={{Mutual information and minimum mean-square error in Gaussian channels}},
volume={51},
DOI={10.1109/TIT.2005.844072},
number={4},
journal={IEEE Trans. Inf. Theory},
year={2005},
pages={1261--1262}
}

@article{analyticMMSE,
author={Guo, D. and Wu, Y. and Shamai, S. and Verd\'u, S.},
title={{Estimation in Gaussian Noise: Properties of the Minimum Mean-Square Error}},
volume={57},
DOI={10.1109/TIT.2011.2111010},
number={4},
journal={IEEE Trans. Inf. Theory},
year={2011},
pages={2371--2385}
}

@article{wuverdu,
author={Wu, Y. and Verd\'u, S.},
title={{MMSE Dimension}},
volume={57},
number={8},
journal={IEEE Trans. Inf. Theory},
year={2011},
pages={4857--4879}
}

@article{johnstone,
author={Johnstone, I. M.},
title={On the distribution of the largest eigenvalue in principal components analysis},
volume={29},
DOI={10.1214/aos/1009210544},
number={2},
journal={Ann. Stat.},
year={2001},
pages={295--327}
}

@article{BBP,
author={Baik, J. and Ben Arous, G. and P\'{e}ch\'{e}, S.},
title={Phase transition of the largest eigenvalue for nonnull complex sample covariance matrices},
volume={33},
DOI={10.1214/009117905000000233},
number={5},
journal={Ann. Probab.},
year={2005},
pages={1643--1697}
}

@article{peche,
author={P\'{e}ch\'{e}, S.},
title={{The largest eigenvalue of small rank perturbations of Hermitian random matrices}},
volume={134},
number={1},
journal={Probab. Theory Relat. Fields},
year={2006},
pages={127--173}
}

@article{LelargeMiolane,
author={Lelarge, M. and Miolane, L.},
title={Fundamental limits of symmetric low-rank matrix estimation},
volume={173},
number={3},
journal={Probab. Theory Relat. Fields},
year={2019},
pages={859--929}
}

@article{DL2,
author={Barbier, J. and Macris, N.},
title={{Statistical limits of dictionary learning: Random matrix theory and the spectral replica method}},
volume={106},
number={2},
journal={Phys. Rev. E},
year={2022},
pages={024136}
}

@article{Maillard_2022,
doi = {10.1088/1742-5468/ac7e4c},
year = {2022},
volume = {2022},
number = {8},
pages = {083301},
author = {Antoine Maillard and Florent Krzakala and Marc M\'ezard and Lenka Zdeborov\'a},
title = {Perturbative construction of mean-field equations in extensive-rank matrix factorization and denoising},
journal = {J. Stat. Mech: Theory Exp.}
}

@article{SBM1,
author={Deshpande, Y. and Abbe, E. and Montanari, A.},
title={Asymptotic mutual information for the balanced binary stochastic block model},
volume={6},
number={2},
journal={Inf. Inference J. IMA},
DOI={10.1093/imaiai/iaw017},
year={2017},
pages={125--170}
}

@article{overlapconcentration,
author={Barbier, J.},
title={Overlap matrix concentration in optimal Bayesian inference},
volume={10},
number={2},
journal={Inf. Inference J. IMA},
DOI={10.1093/imaiai/iaaa008},
year={2021},
pages={597--623}
}

@article{SBM2,
author={Abbe, E.},
title={{Community Detection and Stochastic Block Models: Recent Developments}},
volume={18},
number={177},
journal={J. Mach. Learn. Res.},
year={2018},
pages={1--86}
}

@article{localization1,
author={Chen, Y. and Xu, J.},
title={{Statistical-Computational Tradeoffs in Planted Problems and Submatrix Localization with a Growing Number of Clusters and Submatrices}},
volume={17},
number={27},
journal={J. Mach. Learn. Res.},
year={2016},
pages={1--57}
}

@article{korada2009exact,
  title={Exact solution of the gauge symmetric p-spin glass model on a complete graph},
  author={Korada, Satish Babu and Macris, Nicolas},
  journal={J. Stat. Phys.},
  volume={136},
  pages={205--230},
  year={2009},
  publisher={Springer}
}

@article{localization2,
author={Hajek, B. and Wu, Y. and Xu, J.},
title={Submatrix localization via message passing},
volume={18},
number={186},
journal={J. Mach. Learn. Res.},
year={2018},
pages={1--52}
}

@article{barbier2022strong,
  title={Strong replica symmetry in high-dimensional optimal Bayesian inference},
  author={Barbier, Jean and Panchenko, Dmitry},
  journal={Commun. Math. Phys.},
  volume={393},
  number={3},
  pages={1199--1239},
  year={2022}
}

@article{Guerra,
  title={Broken replica symmetry bounds in the mean field spin glass model},
  author={Guerra, Francesco},
  journal={Commun. Math. Phys.},
  volume={233},
  number={1},
  pages={1--12},
  year={2003}
}

@book{PBook,
  AUTHOR = {Panchenko, Dmitry},
     TITLE = {The {S}herrington-{K}irkpatrick model},
    SERIES = {Springer Monographs in Mathematics},
 PUBLISHER = {Springer, New York},
      YEAR = {2013},
     PAGES = {xii+156},
      ISBN = {978-1-4614-6288-0; 978-1-4614-6289-7},
   MRCLASS = {82-02 (82B44 82D30)},
  MRNUMBER = {3052333},
MRREVIEWER = {Jos\'e Trashorras},
       DOI = {10.1007/978-1-4614-6289-7},
}

@book{talagrand,
author={Talagrand, M.},
title={{Mean Field Models for Spin Glasses}},
address={New York},
publisher={Springer-Verlag},
year={2010}
}

@article{sublinear_sphericalint,
      title={Spherical integrals of sublinear rank}, 
      author={Jonathan Husson and Justin Ko},
      year={2025},
      volume={193},
      number={1--2},
      journal={Probab. Theory Relat. Fields},
      pages={1--88}
}

@article {PVS,
    AUTHOR = {Panchenko, Dmitry},
     TITLE = {Free energy in the mixed {$p$}-spin models with vector spins},
   JOURNAL = {Ann. Probab.},
  FJOURNAL = {The Ann. Probab.},
    VOLUME = {46},
      YEAR = {2018},
    NUMBER = {2},
     PAGES = {865--896},
      ISSN = {0091-1798,2168-894X},
   MRCLASS = {60F10 (60G15 60K35 82B44 82D30)},
  MRNUMBER = {3773376},
MRREVIEWER = {Wei-Kuo\ Chen},
       DOI = {10.1214/17-AOP1194}
}

@article {PPotts,
    AUTHOR = {Panchenko, Dmitry},
     TITLE = {Free energy in the {P}otts spin glass},
   JOURNAL = {Ann. Probab.},
  FJOURNAL = {The Ann. Probab.},
    VOLUME = {46},
      YEAR = {2018},
    NUMBER = {2},
     PAGES = {829--864},
      ISSN = {0091-1798,2168-894X},
   MRCLASS = {82D30 (60F10 60G15 60K35 82B44)},
  MRNUMBER = {3773375},
MRREVIEWER = {Fr\'{e}d\'{e}ric\ Ouimet},
       DOI = {10.1214/17-AOP1193}
}

@article {JSphere,
    AUTHOR = {Ko, Justin},
     TITLE = {Free energy of multiple systems of spherical spin glasses with
              constrained overlaps},
   JOURNAL = {Electron. J. Probab.},
  FJOURNAL = {Electronic Journal of Probability},
    VOLUME = {25},
      YEAR = {2020},
     PAGES = {Paper No. 28, 34},
      ISSN = {1083-6489},
   MRCLASS = {82D30 (60F10 60G15 60K35 82B44)},
  MRNUMBER = {4073689},
MRREVIEWER = {Wei-Kuo\ Chen},
       DOI = {10.1214/20-ejp431}
}

@article{kocs,
      title={{The Crisanti-Sommers Formula for Spherical Spin Glasses with Vector Spins}}, 
      author={Justin Ko},
      year={2019},
      journal={arXiv preprint arXiv:1911.04355}
}

@article {GGVec,
    AUTHOR = {Dominguez, Tomas},
     TITLE = {The {$\ell^p$}-{G}aussian-{G}rothendieck problem with vector
              spins},
   JOURNAL = {Electron. J. Probab.},
  FJOURNAL = {Electronic Journal of Probability},
    VOLUME = {27},
      YEAR = {2022},
     PAGES = {Paper No. 70, 46},
      ISSN = {1083-6489},
   MRCLASS = {82D30 (60G15 60K35 82B44)},
  MRNUMBER = {4440068},
MRREVIEWER = {Flora\ Koukiou},
       DOI = {10.1214/22-ejp801}
}

@article{vecSphereTAP,
      title={{TAP variational principle for the constrained overlap multiple spherical Sherrington-Kirkpatrick model}}, 
      author={David Belius and Leon Fröber and Justin Ko},
      year={2023},
      journal={arXiv preprint arXiv:2304.04031}
}

@article {MourratVec,
    AUTHOR = {Mourrat, Jean-Christophe},
     TITLE = {Free energy upper bound for mean-field vector spin glasses},
   JOURNAL = {Ann. Inst. Henri Poincar\'{e} Probab. Stat.},
  FJOURNAL = {Annales de l'Institut Henri Poincar\'{e} Probabilit\'{e}s et
              Statistiques},
    VOLUME = {59},
      YEAR = {2023},
    NUMBER = {3},
     PAGES = {1143--1182},
      ISSN = {0246-0203,1778-7017},
   MRCLASS = {82D30 (35Q82 60K35)},
  MRNUMBER = {4635706},
       DOI = {10.1214/22-aihp1292}
}

@article{mourratchen_noncvx,
      title={On the free energy of vector spin glasses with nonconvex interactions}, 
      author={Hong-Bin Chen and Jean-Christophe Mourrat},
      year={2025},
      journal={Probab. Math. Phys.},
      volume={6},
      number={1},
      doi={10.2140/pmp.2025.6.1},
      pages={1--80}
}

@article{chen2023selfoverlap,
      title={On the self-overlap in vector spin glasses}, 
      author={Hong-Bin Chen},
      year={2024},
      journal={J. Math. Phys.},
      volume={65},
      doi={10.1063/5.0196632},
      pages={031901}
}

@article{chen2023cvx,
      title={{Parisi PDE and convexity for vector spins}}, 
      author={Hong-Bin Chen},
      year={2025},
      journal={Stochastic Process. Appl.},
      volume={190},
      doi={10.1016/j.spa.2025.104746},
      pages={104746}
}

@article{chen2023potts,
      title={{On Parisi measures of Potts spin glasses with correction}}, 
      author={Hong-Bin Chen},
      year={2024},
      journal={Elctron. Commun. Probab.},
      volume={29},
      doi={10.1214/24-ECP608},
      pages={1--13}
}

@article{younginequality,
      title={{On classes of summable functions and their Fourier Series}}, 
      author={Young, William Henry},
      journal={Proc. R. Soc. A},
      doi={10.1098/rspa.1912.0076},
      volume={87},
      number={594},
      pages={225--229},
      year={1912}
}

@article{bates2023parisi,
      title={{Parisi formula for balanced Potts spin glass}}, 
      author={Erik Bates and Youngtak Sohn},
      journal={Commun. Math. Phys.},
      volume={405},
      number={10},
      pages={228},
      year={2024}
}

@article{rank1universality,
      title={Estimating rank-one matrices with mismatched prior and noise: universality and large deviations}, 
      author={Alice Guionnet and Justin Ko and Florent Krzakala and Lenka Zdeborov\'a},
      journal={Commun. Math. Phys.},
      volume={406},
      pages={9},
      year={2025}
}

@article {tucazhou_vecspin,
    AUTHOR = {Auffinger, Antonio and Zhou, Yuxin},
     TITLE = {On properties of the spherical mixed vector {$p$}-spin model},
   JOURNAL = {Stochastic Process. Appl.},
  FJOURNAL = {Stochastic Processes and their Applications},
    VOLUME = {146},
      YEAR = {2022},
     PAGES = {382--413},
      ISSN = {0304-4149,1879-209X},
   MRCLASS = {60K35 (82D30)},
  MRNUMBER = {4383209},
       DOI = {10.1016/j.spa.2022.02.001}
}

@article{ASS,
  title = {{Extended variational principle for the Sherrington-Kirkpatrick spin-glass model}},
  author = {Aizenman, Michael and Sims, Robert and Starr, Shannon L.},
  journal = {Phys. Rev. B},
  volume = {68},
  issue = {21},
  pages = {214403},
  numpages = {4},
  year = {2003},
  month = {Dec},
  publisher = {American Physical Society},
  doi = {10.1103/PhysRevB.68.214403}
}

@article{montanari2024fundamental,
      title={Fundamental limits of low-rank matrix estimation with diverging aspect ratios}, 
      author={Andrea Montanari and Yuchen Wu},
      volume={52},
DOI={10.1214/24-aos2400},
number={4},
journal={Ann. Stat.},
year={2024},
pages={1460--1484}
}

@article{donoho2023optimal,
      title={{Optimal Eigenvalue Shrinkage in the Semicircle Limit}}, 
      author={David L. Donoho and Michael J. Feldman},
      year={2023},
      journal={arXiv preprint arXiv:2210.04488}
}

@article{feldman2023spiked,
      title={{Spiked Singular Values and Vectors under Extreme Aspect Ratios}}, 
      author={Michael J. Feldman},
      year={2023},
      volume={196},
      journal={J. Multivar. Anal.},
      pages={105187},
      DOI={10.1016/j.jmva.2023.105187}
}

@inproceedings{bodin2023gradient,
      title={Gradient flow on extensive-rank positive semi-definite matrix denoising}, 
      author={Antoine Bodin and Nicolas Macris},
      booktitle={2023 IEEE Information Theory Workshop (ITW)},
      pages={365--370},
      year={2023},
      organization={IEEE}
}

@InProceedings{tarmoun_gradientoverparam,
  title = 	 {Understanding the Dynamics of Gradient Flow in Overparameterized Linear models},
  author =       {Tarmoun, Salma and Franca, Guilherme and Haeffele, Benjamin D and Vidal, Rene},
  booktitle = 	 {Proceedings of the 38th International Conference on Machine Learning},
  pages = 	 {10153--10161},
  year = 	 {2021},
  editor = 	 {Meila, Marina and Zhang, Tong},
  volume = 	 {139},
  series = 	 {Proceedings of Machine Learning Research},
  publisher =    {PMLR},
  address = {\!},
  url = 	 {https://proceedings.mlr.press/v139/tarmoun21a.html}
}

@article{GG,
  AUTHOR = {Ghirlanda, Stefano and Guerra, Francesco},
     TITLE = {General properties of overlap probability distributions in
              disordered spin systems. {T}owards {P}arisi ultrametricity},
   JOURNAL = {J. Phys. A},
  FJOURNAL = {Journal of Physics. A. Mathematical and General},
    VOLUME = {31},
      YEAR = {1998},
    NUMBER = {46},
     PAGES = {9149--9155},
      ISSN = {0305-4470},
   MRCLASS = {82B44},
  MRNUMBER = {1662161},
       DOI = {10.1088/0305-4470/31/46/006},
       URL = {https://doi-org.myaccess.library.utoronto.ca/10.1088/0305-4470/31/46/006},
}

@article{aizenmancontuccistability,
title = "On the stability of the quenched state in mean-field spin-glass models",
keywords = "Mean field, Overlap distribution, Quenched state, Replicas, Spin glass",
author = "M. Aizenman and P. Contucci",
year = "1998",
month = sep,
day = "1",
language = "English (US)",
volume = "92",
pages = "765--783",
journal = "J. Stat. Phys.",
issn = "0022-4715",
publisher = "Springer New York",
number = "5-6",
}

@article {PUltra,
    AUTHOR = {Panchenko, Dmitry},
     TITLE = {The {P}arisi ultrametricity conjecture},
   JOURNAL = {Ann. of Math. (2)},
  FJOURNAL = {Annals of Mathematics. Second Series},
    VOLUME = {177},
      YEAR = {2013},
    NUMBER = {1},
     PAGES = {383--393},
      ISSN = {0003-486X},
   MRCLASS = {60G57 (82D30)},
  MRNUMBER = {2999044},
MRREVIEWER = {Flora Koukiou},
       DOI = {10.4007/annals.2013.177.1.8},
       URL = {http://dx.doi.org.myaccess.library.utoronto.ca/10.4007/annals.2013.177.1.8},
}

\end{document}